\documentclass[aps,12pt,twoside,notitlepage,superscriptaddress]{revtex4-1}

\usepackage{graphicx,epic,eepic,epsfig,amsmath,amsfonts,amsthm,latexsym,amssymb,color,enumerate}
\allowdisplaybreaks[1]
\usepackage[colorlinks,
            linkcolor=blue,
            anchorcolor=blue,
            citecolor=red,
            ]{hyperref}

\newtheorem{theorem}{Theorem}

\newtheorem{corollary}[theorem]{Corollary}

\newtheorem{lemma}[theorem]{Lemma}

\newtheorem{proposition}[theorem]{Proposition}

\theoremstyle{definition}
\newtheorem{definition}[theorem]{Definition}
\theoremstyle{remark}
\newtheorem{remark}[theorem]{Remark}

\newenvironment{Proof}[1][Proof]{\noindent\textit{Proof.} }{\hfill\qed}
\newenvironment{Proofof}[1][Proof]{\noindent\textit{Proof of~#1.} }{\hfill\qed}

\newcommand{\nc}{\newcommand}
\nc{\rnc}{\renewcommand}
\nc{\beq}{\begin{equation}}
\nc{\eeq}{\end{equation}}
\nc{\bsp}{\begin{split}}
\nc{\esp}{\end{split}}
\nc{\beqa}{\begin{eqnarray}}
\nc{\eeqa}{\end{eqnarray}}
\nc{\lbar}[1]{\overline{#1}}
\nc{\ket}[1]{|#1\rangle}
\nc{\bra}[1]{\langle#1|}
\nc{\braket}[2]{\langle #1 | #2 \rangle}
\nc{\ketbra}[2]{|#1\rangle\!\langle#2|}
\nc{\proj}[1]{| #1\rangle\!\langle #1 |}
\nc{\avg}[1]{\langle#1\rangle}
\nc{\Rank}{\operatorname{Rank}}
\nc{\smfrac}[2]{\mbox{$\frac{#1}{#2}$}}
\nc{\tr}{\operatorname{Tr}}
\nc{\ox}{\otimes}
\nc{\dg}{\dagger}
\nc{\dn}{\downarrow}
\nc{\supp}{{\operatorname{supp}}}
\nc{\qsupp}{{\operatorname{qsupp}}}
\nc{\var}{\operatorname{var}}
\nc{\rar}{\rightarrow}
\nc{\lrar}{\longrightarrow}
\nc{\polylog}{\operatorname{polylog}}
\nc{\1}{{\openone}}
\nc{\id}{{\operatorname{id}}}
\nc{\<}{\langle}
\nc{\Hom}[2]{\mbox{Hom}(\CC^{#1},\CC^{#2})}
\nc{\rU}{\mbox{U}}
\nc{\mc}{\mathcal}

\begin{document}

\title{Reliability Function of Quantum Information Decoupling \\ via the Sandwiched R\'enyi Divergence}

\author{Ke Li}
  \email{carl.ke.lee@gmail.com}
  \affiliation{Institute for Advanced Study in Mathematics, Harbin Institute of Technology, Harbin
                 150001, China}
\author{Yongsheng Yao}
  \email{yongsh.yao@gmail.com}
  \affiliation{Institute for Advanced Study in Mathematics, Harbin Institute of Technology, Harbin
                 150001, China}	
  \affiliation{School of Mathematics, Harbin Institute of Technology, Harbin 150001, China}
	

\begin{abstract}
Quantum information decoupling is a fundamental quantum information processing
task, which also serves as a crucial tool in a diversity of topics in quantum
physics. In this paper, we characterize the reliability function of catalytic
quantum information decoupling, that is, the best exponential rate under which
perfect decoupling is asymptotically approached. We have obtained the exact
formula when the decoupling cost is below a critical value. In the situation
of high cost, we provide meaningful upper and lower bounds. This result is then
applied to quantum state merging, exploiting its inherent connection to decoupling.
In addition, as technical tools, we derive the exact exponents for the smoothing
of the conditional min-entropy and max-information, and we prove a novel bound
for the convex-split lemma.

Our results are given in terms of the sandwiched R\'enyi divergence, providing
it with a new type of operational meaning in characterizing how fast the
performance of quantum information tasks approaches the perfect.
\end{abstract}

\maketitle

\section{Introduction}
\label{sec:introduction}
\emph{Quantum information decoupling}~\cite{HOW2005partial,HOW2007quantum,
ADHW2009mother} is the procedure of removing the information of a reference
system from the system under control, via physically permitted operations.
It is a fundamental quantum information processing task, which has found
broad applications, ranging from quantum Shannon theory~\cite{HOW2005partial,
HOW2007quantum,ADHW2009mother,DevetakYard2008exact,YardDevetak2009optimal,
BDHSW2014quantum,BCR2011the,BBMW2018conditional,BFW2013quantum} to quantum thermodynamics~\cite{DARDV2011thermodynamic,BrandaoHorodecki2013area,
BrandaoHorodecki2015exponential,Aberg2013truly,DHRW2016relative} to black-hole
physics~\cite{HP2007black,BraunsteinPati2007quantum,BSZ2013better,
KamilChristoph2016one}. Since being introduced in~\cite{HOW2005partial,
HOW2007quantum,ADHW2009mother}, the problem of quantum information decoupling
has attracted continued interest of study from the community. This includes
the study of decoupling in the one-shot setting~\cite{BCR2011the,DBWR2014one,
ADJ2017quantum,MBDRC2017catalytic,MBGYA2019a,WakakuwaNakata2021one,Dupuis2021privacy},
the search for more specific and more efficient decoupling operations~\cite{SDTR2013decoupling,BrownFawzi2015decoupling,NHMW2017decoupling},
and the investigation of the speed of asymptotic convergence of the decoupling performance~\cite{Sharma2015random,MBGYA2019a,ABJT2020partially,Dupuis2021privacy}.
In particular, by introducing an independent system as catalyst, tight one-shot
characterization has been derived in~\cite{MBDRC2017catalytic,ADJ2017quantum},
which is able to provide the exact second-order asymptotics.

The \emph{reliability function} was introduced by Shannon in information theory~\cite{Shannon1959probability}. Defined as the rate of exponential decay of
the error with the increasing of blocklength, the reliability function provides
the desired precise characterization of how rapidly an information processing task
approaches the perfect in the asymptotic setting~\cite{Gallager1968information}.
Study of reliability functions in quantum information dates back to the work of
Burnashev and Holevo~\cite{BurnashevHolevo1998on,Holevo2000reliability}, and
Winter~\cite{Winter1999coding}, more than two decades ago. In recent years,
there has been a growing body of research in this topic from the quantum community~\cite{Dalai2013lower,Hayashi2015precise,
DalaiWinter2017constant,CHT2019quantum,CHDH2020non,Dupuis2021privacy}. However,
complete characterization of the reliability functions in the quantum regime is
not known, even for classical-quantum channels. Nevertheless, see References~\cite{KoenigWehner2009strong,SharmaWarsi2013fundamental,WWY2014strong,
GuptaWilde2015multiplicativity,CMW2016strong,MosonyiOgawa2017strong,CHDH2020non}
for a partial list of the fruitful results on the strong converse exponent in
the quantum setting, which characterizes how fast a quantum information task
becomes the useless.

In this paper, we investigate the reliability function for the task of
quantum information decoupling in the catalytic setting. We have obtained
the exact formula when the decoupling cost is below a critical value.
Specifically, for a bipartite quantum state $\rho_{RA}$, we consider three
different types of decoupling operations on the $A$ system:
(a) decoupling via removing a subsystem,
(b) decoupling via projective measurement,
(c) decoupling via random unitary operation.
We show that under any of these three types of decoupling operations,
the reliability function is given by the Legendre transformation of the
\emph{sandwiched R\'enyi mutual information}
\[
  I_{\alpha}(R:A)_\rho :=
  \min_{\sigma_A\in\mc{S}(A)} D_{\alpha}(\rho_{RA} \| \rho_R \ox \sigma_A)
\]
of order $\alpha\in (1,2]$. Here $\mc{S}(A)$ is the set of all quantum
states on system $A$, and
\[D_{\alpha}(M\|N) :=\frac{1}{\alpha-1} \log \tr\big(N^{\frac{1-\alpha}{2\alpha}}MN^{\frac{1-\alpha}{2\alpha}}\big)^\alpha
\]
is the \emph{sandwiched R\'enyi divergence}~\cite{MDSFT2013on,WWY2014strong}.
This result is obtained by deriving respective upper and lower bounds, and
we show that the two bounds coincide when the rate of the decoupling cost is
below the critical value. On the one hand, we analyse the convex-split
lemma of~\cite{ADJ2017quantum} and derive for it a novel bound in terms of the
sandwiched R\'enyi divergence of order $\alpha\in(1,2]$, and this constitutes
the main technical tool for proving the lower bound. On the other hand, the
upper bound is obtained based on an asymptotic analysis of the smoothing
quantity of the max-information, for which we show that the exact exponent
is given by a formula in terms of the sandwiched R\'enyi mutual information
of order $\alpha\in(1,\infty)$. Furthermore, as application, we provide
similar characterization for the reliability function of quantum state
merging by exploiting the inherent connection between quantum state merging
and decoupling.

Our results, along with the concurrent work of~\cite{LYH2023tight} which
addresses different problems, have provided the sandwiched R\'enyi divergence~\cite{MDSFT2013on,WWY2014strong} with a new type of operational
interpretation by showing that it characterizes the exact exponents under which
certain quantum information tasks approach the perfect. This is in stark
contrast to what was previously known that the sandwiched R\'enyi divergence
characterizes the strong converse exponents---the optimal exponential rates
under which the underlying errors go to $1$~\cite{MosonyiOgawa2015quantum,
MosonyiOgawa2015two,CMW2016strong,HayashiTomamichel2016correlation,
MosonyiOgawa2017strong,CHDH2020non}. Therefore, we conclude that the meaning
of this fundamental entropic quantity can be more fruitful than what was
previously understood.

\emph{Relation to previous works.}
In References~\cite{Sharma2015random} and~\cite{Dupuis2021privacy}, exponential
achievability bounds for the decoupling error were given, which are in terms
of the sandwiched R\'enyi divergence of order $\alpha\in(1,2]$, too. However,
these bounds do not seem to be able to yield the optimal exponent in the
asymptotic setting, and thus do not provide much information on the reliability
function. In another remarkable work of Reference~\cite{MBGYA2019a}, the authors
derived a decoupling theorem based on the so called vector-valued $L_p$-norms,
which also leads to exponential achievability bounds in terms of the sandwiched
R\'enyi divergence of order $\alpha\in(1,2]$. As there is no discussion on the
converse part, we do not know how tight the bounds of~\cite{MBGYA2019a} are.
Another difference between the above-mentioned works and the present one is that,
in the works~\cite{Sharma2015random,MBGYA2019a,Dupuis2021privacy} the decoupling
error is measured using the trace distance or a new correlation measure based on
vector-valued $L_p$-norms, while in the present paper we employ the purified
distance, or equivalently, the fidelity function.

The remainder of this paper is organized as follows. In Section~\ref{sec:preliminaries}
we introduce the necessary notation, definitions and some basic properties. In
Section~\ref{sec:problem-results} we present the problem formulation, the main
results, and the application to quantum state merging. The proofs are given in
Section~\ref{sec:proof-main} and Section~\ref{sec:relations}, where in
Section~\ref{sec:proof-main} we prove the characterization of the reliability
functions, and in Section~\ref{sec:relations} we prove the relation between
different types of decoupling as well as the relation between decoupling and
quantum state merging. At last, in Section~\ref{sec:discussion} we conclude the
paper with some discussion and open questions.

\section{Preliminaries}
\label{sec:preliminaries}
\subsection{Notation and basic properties}
Let $\mc{H}$ be a Hilbert space, and $\mc{H}_A$ be the Hilbert space associated
with system $A$. $\mc{H}_{AB}$, denoting the Hilbert space of the composite
system $AB$, is the tensor product of $\mc{H}_A$ and $\mc{H}_B$. We restrict
ourselves to finite-dimensional Hilbert spaces throughout this paper. The
notation $|A|$ stands for the dimension of $\mathcal{H}_A$. We use $\1_A$
to denote the identity operator on $\mc{H}_A$. The notation $\supp(X)$ for
an operator $X$ is used for the support of $X$. The set of unitary operators
on $\mathcal{H}$ is denoted as $\mathcal{U}(\mathcal{H})$, and the set of
positive semidefinite operators on $\mc{H}$ is denoted as $\mathcal{P}(\mc{H})$.
The set of normalized quantum states and subnormalized quantum states on
$\mathcal{H}$ are denoted as $\mathcal{S}(\mathcal{H})$ and $\mathcal{S}_
{\leq}(\mathcal{H})$, respectively. That is,
\begin{align*}
\mc{S}(\mc{H})        &=\{\rho \in \mc{P}(\mc{H}) | \tr\rho=1 \},\\
\mc{S}_{\leq}(\mc{H}) &=\{\rho \in \mc{P}(\mc{H}) | \tr\rho \leq 1 \}.
\end{align*}
If the Hilbert space $\mc{H}$ is associated with system $A$, then the above
notations $\mc{U}(\mc{H})$, $\mc{P}(\mc{H})$, $\mc{S}(\mc{H})$ and
$\mc{S}_{\leq}(\mc{H})$ are also written as $\mc{U}(A)$, $\mc{P}(A)$,
$\mc{S}(A)$ and $\mc{S}_{\leq}(A)$, respectively. The discrete Weyl operators
on a $d$-dimensional Hilbert space $\mc{H}$ with an orthonormal basis
$\{\ket{a}\}_{a=0}^{d-1}$ are a collection of unitary operators
\[
W_{a,b}
=\sum_{c=0}^{d-1}\mathrm{e}^{\frac{2\pi\mathrm{i}bc}{d}}
 \ket{(a+c)\!\!\!\mod d}\bra{c},
\]
where $a,b\in \{0,1,\ldots,d-1\}$.

For $X, Y \in \mc{P}(\mc{H})$, we write $X \geq Y$ if $X-Y \in \mc{P}(\mc{H})$
and $X \leq Y$ if $Y-X \in \mc{P}(\mc{H})$. $\{X \geq Y\}$ is the spectral
projection of $X-Y$ corresponding to all non-negative eigenvalues. $\{X>Y\}$,
$\{X \leq Y\}$ and $\{X <Y \}$ are similarly defined.

We use the purified distance~\cite{GLN2005distance, TCR2009fully} to measure
the closeness of a pair of states $\rho, \sigma \in \mathcal{S}_{\leq}(\mathcal{H})$.
The purified distance is defined as $P(\rho,\sigma):=\sqrt{1-F^2(\rho,\sigma)}$,
where
\[
F(\rho,\sigma):=\tr\sqrt{\sqrt{\sigma}\rho\sqrt{\sigma}}
                              + \sqrt{(1-\tr\rho)(1-\tr \sigma)}
\]
is the fidelity. The Uhlmann's theorem~\cite{Uhlmann1976transition}, stated as
follows, will play a key role in later proofs. Let $\rho_{AB}\in \mc{S}_{\leq}
(\mc{H_{AB}})$ be a bipartite state, and let $\sigma_{A} \in \mc{S}_{\leq}
(\mc{H_{A}})$. Then there exists an extension $\sigma_{AB}$ of $\sigma_{A}$ such
that $P(\rho_{AB},\sigma_{AB}) = P(\rho_{A},\sigma_{A})$.

A quantum operation or quantum channel $\Phi$ is a linear, completely positive,
and trace-preserving~(CPTP) map acting on quantum states. We denote by $\Phi_
{A \rar B}$ a quantum operation from system $A$ to system $B$. The Stinespring
representation theorem~\cite{Stinespring1955positive} states that there is an
ancillary system $C$ in a pure state $\proj{0}_C$, a system $E$ and a unitary
$U_{AC \rar BE}$ such that $\Phi_{A \rar B}(\rho_A)=\tr_{E} \big(U(\rho_A \ox
\proj{0}_C)U^*\big)$. A quantum measurement is described by a set of positive
semidefinite operators $\{M_x\}_x$ such that $\sum_xM_x=\1$. It outputs $x$ with
probability $\tr(\rho M_x)$ when the underlying state is $\rho$. If a measurement
$\{Q_x\}_x$ is such that all the $Q_x$ are projections onto mutually orthogonal
subspaces, then it is called a projective measurement. We associate each
quantum measurement $\mc{M}=\{M_x\}_x$ with a measurement channel
$\Phi_\mc{M}:\rho\mapsto \sum_x(\tr\rho M_x)\proj{x}$, where
$\{\ket{x}\}$ is an orthonormal basis.

Let $\sigma$ be a self-adjoint operator on $\mc{H}$ with spectral projections
$Q_1, \ldots, Q_{v(\sigma)}$, where $v(\sigma)$ is the number of different
eigenvalues of $\sigma$. The associated pinching map $\mathcal{E}_\sigma$ is
defined as
\[
\mathcal{E}_\sigma : X \rar \sum_i Q_i X Q_i.
\]
The pinching inequality~\cite{Hayashi2002optimal} states that for any $X \in
\mc{P}(\mc{H})$,
\begin{equation}
\label{equ:pinchingineq}
X \leq v(\sigma)\mc{E}_\sigma(X).
\end{equation}

For $n \in \mathbb{N}$, let $S_n$ be the symmetric group of the permutations of
$n$ elements. The set of symmetric states and subnormalized symmetric states on
$\mc{H}_{A^n}$ are defined, respectively, as
\begin{align*}
  \mathcal{S}^{\rm{sym}}(A^n) & :=
  \big\{\rho_{A^n} | \rho_{A^n} \in \mathcal{S}(A^n),~W_{\pi} \rho_{A^n}
  W_{\pi}^*=\rho_{A^n},~\forall \pi \in S_n\big\}, \\
  \mathcal{S}^{\rm{sym}}_{\leq}(A^n) & :=
  \big\{\rho_{A^n} | \rho_{A^n} \in \mathcal{S}_{\leq}(A^n),~
  W_{\pi} \rho_{A^n}  W_{\pi}^*=\rho_{A^n},~\forall \pi \in S_n\big\},
\end{align*}
where $W_{\pi} : \ket{\psi_1} \ox \cdots \ox \ket{\psi_n} \mapsto
\ket{\psi_{\pi^{-1}(1)}} \ox \cdots \ox \ket{\psi_{\pi^{-1}(n)}}$ is the natural
representation of $\pi \in S_n$. The set of the symmetric states can be dominated
by a single symmetric state, in the sense of the following Lemma~\ref{lemma:unisym},
and two different constructions are given in~\cite{CKR2009postselection}
and~\cite{Hayashi2009universal}, respectively.
See~\cite[Appendix A]{MosonyiOgawa2017strong} for a detailed proof.
\begin{lemma}
\label{lemma:unisym}
For every Hilbert space $\mathcal{H}_A$ and $n\in\mathbb{N}$, there exists a
universal symmetric state $\omega^{(n)}_{A^n} \in \mc{S}^{\rm{sym}}(A^n)$, such
that for any $\rho_{A^n} \in \mc{S}^{\rm{sym}}(A^n)$ we have
\begin{align*}
\rho_{A^n} &\leq g_{n,|A|} \omega^{(n)}_{A^n}, \\
v(\omega^{(n)}_{A^n}) &\leq (n+1)^{|A|-1},
\end{align*}
where $g_{n,|A|} \leq (n+1)^{\frac{(|A|+2)(|A|-1)}{2}}$ and $v(\omega^{(n)}_{A^n})$
denotes the number of different eigenvalues of $\omega^{(n)}_{A^n}$.
\end{lemma}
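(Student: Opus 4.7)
The plan is to exploit Schur--Weyl duality, which simultaneously reduces the actions of $S_n$ and of $\mc{U}(A)$ on $\mc{H}_{A^n}$ into irreducibles and makes the block structure of every symmetric state transparent.

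First, I would invoke the Schur--Weyl decomposition
\[
\mc{H}_{A^n} \cong \bigoplus_{\lambda} U_\lambda \ox V_\lambda,
\]
where the sum runs over Young diagrams $\lambda$ with $n$ boxes and at most $|A|$ rows, $U_\lambda$ is an irreducible $\mc{U}(A)$-module of dimension $d_\lambda$, $V_\lambda$ is an irreducible $S_n$-module of dimension $s_\lambda$, and $S_n$ acts trivially on the $U_\lambda$ factor. By Schur's lemma, every $\rho_{A^n}\in\mc{S}^{\rm{sym}}(A^n)$ then admits the block form
\[
\rho_{A^n} = \bigoplus_\lambda \hat{\rho}_\lambda \ox \frac{\1_{V_\lambda}}{s_\lambda},
\qquad \hat{\rho}_\lambda \in \mc{P}(U_\lambda), \;\; \sum_\lambda \tr\hat{\rho}_\lambda = 1.
\]
Since $\tr\hat{\rho}_\lambda\leq 1$ forces $\hat{\rho}_\lambda\leq \1_{U_\lambda}$ (a PSD operator is dominated by $(\tr\cdot)\1$), this already yields the universal bound $\rho_{A^n}\leq \bigoplus_\lambda \1_{U_\lambda}\ox \frac{\1_{V_\lambda}}{s_\lambda}$, which does not depend on $\rho$.

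Next, I would define the candidate universal state to be the normalisation of the right-hand side,
\[
\omega^{(n)}_{A^n} := \frac{1}{\sum_\mu d_\mu}\sum_\lambda \1_{U_\lambda}\ox \frac{\1_{V_\lambda}}{s_\lambda} = \sum_\lambda q_\lambda\, \frac{\1_{U_\lambda}}{d_\lambda}\ox\frac{\1_{V_\lambda}}{s_\lambda},\quad q_\lambda := \frac{d_\lambda}{\sum_\mu d_\mu}.
\]
This is manifestly in $\mc{S}^{\rm{sym}}(A^n)$ and, by the preceding display, satisfies $\rho_{A^n}\leq \bigl(\sum_\mu d_\mu\bigr)\omega^{(n)}_{A^n}$ for every symmetric $\rho_{A^n}$. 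The eigenvalue count is also immediate: $\omega^{(n)}_{A^n}$ is a scalar multiple of the identity within each of the blocks labelled by $\lambda$, so $v(\omega^{(n)}_{A^n})$ is at most the number of such blocks, which by a direct count of partitions of $n$ into at most $|A|$ parts is bounded by $(n+1)^{|A|-1}$.

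What remains is then a purely combinatorial estimate of $g_{n,|A|}=\sum_\mu d_\mu$. I would combine the partition count $(n+1)^{|A|-1}$ from the previous step with Weyl's dimension formula to obtain $d_\lambda\leq (n+1)^{|A|(|A|-1)/2}$; multiplying gives $g_{n,|A|}\leq (n+1)^{(|A|+2)(|A|-1)/2}$, as required. The only step that demands real care is the Weyl-formula estimate of $d_\lambda$, and this is classical; the rest follows directly from Schur--Weyl duality and Schur's lemma. An alternative construction taking $\omega^{(n)}_{A^n}=\int\proj{\phi}^{\ox n}\,d\phi$ used in~\cite{CKR2009postselection} bypasses representation theory entirely but produces a slightly weaker polynomial prefactor.
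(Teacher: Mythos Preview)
The paper does not give its own proof of this lemma; it merely cites two constructions (from \cite{CKR2009postselection} and \cite{Hayashi2009universal}) and refers to \cite[Appendix~A]{MosonyiOgawa2017strong} for the detailed argument. Your proposal is essentially the Hayashi/Mosonyi--Ogawa construction via Schur--Weyl duality, and the combinatorial estimates you outline (the partition count $(n+1)^{|A|-1}$ and the Weyl-formula bound $d_\lambda\leq(n+1)^{|A|(|A|-1)/2}$) are exactly the ones used there, so your approach is correct and matches one of the two proofs the paper cites.

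One minor remark: your statement that the CKR integral construction ``produces a slightly weaker polynomial prefactor'' is not quite right in this context---the state $\int\proj{\phi}^{\ox n}d\phi$ dominates only \emph{i.i.d.}\ states $\sigma^{\ox n}$ with the sharper constant $\binom{n+|A|-1}{|A|-1}$, but to dominate \emph{all} symmetric states one must purify and trace out, which again yields a constant of the same polynomial order as yours. Either route suffices for the paper's applications, since only the polynomial growth of $g_{n,|A|}$ is ever used.
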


Throughout this paper, the functions $\exp$ and $\log$ are with base $2$, and
$\ln$ is with base $e$.

\subsection{Quantum entropies and information divergences}
The sandwiched R\'enyi divergence has been introduced
in~\cite{MDSFT2013on,WWY2014strong} and is a quantum generalization of the
classical R\'enyi information divergence. For quantum states $\rho, \sigma
\in \mc{S}(\mc{H})$ and a parameter $\alpha\in (0,1)\cup (1,\infty)$, it is
defined as
\[
D_{\alpha}(\rho\|\sigma) :=\frac{1}{\alpha-1} \log \tr\big
(\sigma^{\frac{1-\alpha}{2\alpha}}\rho\sigma^{\frac{1-\alpha}{2\alpha}}\big)^\alpha
\]
if either $\alpha>1$ and $\supp(\rho)\subseteq\supp(\sigma)$ or $\alpha<1$ and $\supp(\rho)\not\perp\supp(\sigma)$, otherwise we set $D_{\alpha}(\rho\|\sigma)
=+\infty$.

For a bipartite quantum state $\rho_{AB} \in \mc{S}(AB)$ and $\alpha\in
(0,1)\cup (1,\infty)$, the sandwiched R\'enyi mutual information of order $\alpha$
is defined as~\cite{WWY2014strong, Beigi2013sandwiched}
\[
  I_{\alpha}(A:B)_\rho :=
  \min_{\sigma_B\in\mc{S}(B)} D_{\alpha}(\rho_{AB} \| \rho_A \ox \sigma_B),
\]
and we consider a version of the sandwiched R{\'e}nyi conditional
entropy~\cite{MDSFT2013on}
\[
  H_{\alpha}(A|B)_\rho:=
  -\min_{\sigma_B\in\mc{S}(B)}D_{\alpha}(\rho_{AB} \| \1_A \ox \sigma_B).
\]
when the system $B$ is of dimension $1$, we recover from the sandwiched
R{\'e}nyi conditional entropy the  R{\'e}nyi entropy $H_{\alpha}(A)_\rho
:=-D_{\alpha}(\rho_{A} \| \1_A )=\frac{1}{1-\alpha}\log\tr\rho_A^\alpha$.

The quantum relative entropy~\cite{Umegaki1954conditional}
\[
  D(\rho \|\sigma) :=
  \begin{cases}
  \tr(\rho(\log\rho-\log\sigma)) & \text{ if }\supp(\rho)\subseteq\supp(\sigma), \\
  +\infty                        & \text{ otherwise}
  \end{cases}
\]
of states $\rho$ and $\sigma$ is the limit of the sandwiched R\'enyi divergence
when $\alpha\rar 1$. In the case $\alpha\rar \infty$, we get the max-relative
entropy~\cite{Datta2009min}
\[
D_{\rm{max}}(\rho \| \sigma) := \inf\{ \lambda~|~\rho \leq 2^\lambda \sigma \}.
\]
The limits of $I_{\alpha}(A:B)_\rho$ and $H_{\alpha}(A|B)_\rho$ when
$\alpha\rar 1$ are the quantum mutual information and quantum conditional entropy,
respectively: for $\rho_{AB} \in \mc{S}(AB)$,
\begin{align*}
I(A:B)_\rho:&=\min_{\sigma_B\in\mc{S}(B)} D(\rho_{AB} \| \rho_A \ox \sigma_B)
=D(\rho_{AB} \| \rho_A \ox \rho_B) ,\\
H(A|B)_\rho:&=-\min_{\sigma_B\in\mc{S}(B)} D(\rho_{AB} \| \1_A \ox \sigma_B)
=-D(\rho_{AB} \| \1_A \ox \rho_B).
\end{align*}
The max-information~\cite{CBR2013smooth} for $\rho_{AB} \in \mc{S}(AB)$,
\[
I_{\rm{max}}(A:B)_\rho
:=\min_{\sigma_B\in\mc{S}(B)}D_{\rm{max}}(\rho_{AB} \| \rho_A \ox \sigma_B),
\]
emerges as the limit of $I_{\alpha}(A:B)_\rho$ when $\alpha\rar\infty$.

\section{Problem Statement and Main Results}
\label{sec:problem-results}
\subsection{Catalytic quantum information decoupling}
Let a bipartite quantum state $\rho_{RA}\in\mc{S}(RA)$ be given. Quantum
information decoupling is the procedure of removing the information about
system $R$ from the $A$ system, by performing a quantum operation on the
$A$ system. In catalytic quantum information decoupling, an auxiliary
system $A'$ in state that is independent of the state $\rho_{RA}$ can be
added as a catalyst during the decoupling operation. Readers are referred
to~\cite{MBDRC2017catalytic} for a detailed description of standard and
catalytic decoupling. We consider three different types of decoupling
operations: (a) decoupling via removing a subsystem~\cite{ADHW2009mother},
(b) decoupling via projective measurement~\cite{HOW2007quantum},
(c) decoupling via random unitary operation~\cite{GPW2005quantum}.

\emph{(a) decoupling via removing a subsystem.} For quantum state $\rho_{RA}$,
a catalytic decoupling scheme via removing a subsystem
consists of a catalytic system $A'$ in the state $\sigma_{A'}$ and an
isometry operation $U:\mc{H}_{AA'} \rar \mc{H}_{A_1A_2}$, where $A_2$ is
the system to be removed and $A_1$ is the remaining system. Without loss
of generality, we require that $|AA'|=|A_1A_2|$ and hence $U_{AA'\rar A_1A_2}$
becomes unitary. The cost of the decoupling is given by the number of qubits
that is removed, $\log |A_2|$. The performance is measured by the
purified distance between the remaining state $\tr_{A_2}U(\rho_{RA}\ox
\sigma_{A'})U^*$ and the nearest product state of the form $\rho_R \ox
\omega_{A_1}$. We are interested in the optimal performance when the cost,
namely, the number of removed qubits, is bounded.
\begin{definition}
  \label{def:perform-dec}
Let $\rho_{RA}\in \mc{S}(RA)$ be a bipartite quantum state. For a given
size of removed system $k\geq 0$ (in qubits), the optimal performance
of catalytic decoupling via removing a subsystem is given by
\begin{equation}
\label{eq:perform-dec}
P^{\rm dec}_{R:A}(\rho_{RA},k):=
\min P\big(\tr_{A_2}U(\rho_{RA}\ox \sigma_{A'})U^*,\rho_R \ox
\omega_{A_1}\big),
\end{equation}
where the minimization is over all system dimensions $|A'|$, $|A_1|$, $|A_2|$
such that $|AA'|=|A_1A_2|$ and $\log |A_2|\leq k$, all states $\sigma_{A'}\in
\mc{S}(A')$, $\omega_{A_1}\in\mc{S}(A_1)$, and all unitary operations
$U:\mc{H}_{AA'} \rar \mc{H}_{A_1A_2}$.
\end{definition}

\emph{(b) decoupling via projective measurement.} For quantum state $\rho_{RA}$,
a catalytic decoupling scheme via projective measurement consists of a catalytic
system $A'$ in the state $\sigma_{A'}$ and an projective measurement
$\mc{Q}=\{Q^x\}_{x=1}^m$ on the composite system $AA'$. The cost of the decoupling
is accounted by the number (in bits) of measurement outcomes, $\log m$. The
performance is  measured by the purified distance between the post-measurement
state $\sum_{x=1}^m Q^x_{AA'}(\rho_{RA}\ox\sigma_{A'})Q^x_{AA'}$ and the nearest
product state of the form $\rho_R \ox\omega_{AA'}$. We are interested in the
optimal performance when the cost, namely, the number of measurement outcomes,
is bounded.
\begin{definition}
  \label{def:perform-dec-m}
Let $\rho_{RA}\in \mc{S}(RA)$ be a bipartite quantum state. For a given
number of measurement outcomes $k\geq 0$ (in bits), the optimal performance
of catalytic decoupling via projective measurement is given by
\begin{equation}
\label{eq:perform-dec-m}
P^{\rm{dec\text{-}m}}_{R:A}(\rho_{RA},k):=
\min P\Big(\sum_{x=1}^m Q^x_{AA'}(\rho_{RA}\ox\sigma_{A'})Q^x_{AA'},
\rho_R \ox\omega_{AA'}\Big),
\end{equation}
where the minimization is over all system dimensions $|A'|$, all states
$\sigma_{A'}\in\mc{S}(A')$, $\omega_{AA'}\in\mc{S}(AA')$, and all projective
measurements $\{Q^x_{AA'}\}_{x=1}^m$ such that $\log m \leq k$.
\end{definition}

\emph{(c) decoupling via random unitary operation.} For quantum state $\rho_{RA}$,
a catalytic decoupling scheme via random unitary operation consists of a
catalytic system $A'$ in the state $\sigma_{A'}$ and a random unitary operation
$\Lambda_{AA'}: X\mapsto \frac{1}{m}\sum_{i=1}^{m} U_i X U_i^*$ acting on the
composite system $AA'$. The cost of the decoupling is accounted by the number
(in bits) of unitary operators in $\Lambda_{AA'}$, $\log m$. The performance is
measured by the purified distance between the resulting state $\Lambda_{AA'}
(\rho_{RA}\ox\sigma_{A'})$ and the nearest product state of the form $\rho_R
\ox\omega_{AA'}$. We are interested in the optimal performance when the cost,
namely, the number of unitary operators, is bounded.
\begin{definition}
  \label{def:perform-dec-u}
Let $\rho_{RA}\in \mc{S}(RA)$ be a bipartite quantum state. For a given
number of unitary operators $k\geq 0$ (in bits), the optimal performance
of catalytic decoupling via random unitary operation is given by
\begin{equation}
\label{eq:perform-dec-u}
P^{\rm{dec\text{-}u}}_{R:A}(\rho_{RA},k):=
\min P\Big(\Lambda_{AA'}(\rho_{RA} \ox \sigma_{A'}),\rho_R \ox\omega_{AA'}\Big),
\end{equation}
where the minimization is over all system dimensions $|A'|$, all states
$\sigma_{A'}\in \mc{S}(A')$, $\omega_{AA'} \in \mc{S}(AA')$ and all random
unitary operations $\Lambda_{AA'}(\cdot)=\frac{1}{m}\sum_{i=1}^{m} U_i (\cdot)
U_i^*$ with $U_i \in\mc{U}(AA')$ such that $\log m \leq k$.
\end{definition}

The reliability function of quantum information decoupling characterizes
the speed at which perfect decoupling can be approached in the asymptotic
setting, in which the underlying bipartite quantum state is in the form of
tensor product of $n$ identical copies. Specifically, it is the rate of
exponential decreasing of the optimal performance, as a function of the
cost.

\begin{definition}
  \label{def:reliability-dec}
Let $\rho_{RA}\in \mc{S}(RA)$ be a bipartite quantum state, and $r\geq 0$.
The reliability functions of catalytic quantum information
decoupling for the state $\rho_{RA}$, via the three different types of decoupling
operations described above, are defined respectively as
\begin{align}
E^{\rm dec}_{R:A}(\rho_{RA},r)         &:=\limsup_{n\rar\infty}\frac{-1}{n} \log
                      P^{\rm dec}_{R^n:A^n}\big(\rho_{RA}^{\ox n},nr\big),
                      \label{eq:reliability-dec}  \\
E^{\rm dec\text{-}m}_{R:A}(\rho_{RA},r)&:=\limsup_{n\rar\infty}\frac{-1}{n} \log
                      P^{\rm{dec\text{-}m}}_{R^n:A^n}\big(\rho_{RA}^{\ox n},nr\big),
                      \label{eq:reliability-dec-m} \\
E^{\rm dec\text{-}u}_{R:A}(\rho_{RA},r)&:=\limsup_{n\rar\infty}\frac{-1}{n} \log
                      P^{\rm{dec\text{-}u}}_{R^n:A^n}\big(\rho_{RA}^{\ox n},nr\big).
                      \label{eq:reliability-dec-u}
\end{align}
\end{definition}

\subsection{Main results}
At first, we show in the following Proposition~\ref{prop:dec-relation} equalities
that relate the optimal performances or the reliability functions for the three
different decoupling operations. With this, we are able to deal with them in a
unified way.
\begin{proposition}
\label{prop:dec-relation}
For $\rho_{RA} \in \mathcal{S}(RA)$ and $k,r \geq 0$, we have
\begin{align}
&P_{R:A}^{\emph{dec}}(\rho_{RA}, k)
=P_{R:A}^{\rm dec\text{-}m}(\rho_{RA}, 2k)
=P_{R:A}^{\rm dec\text{-}u}(\rho_{RA}, 2k), \label{eq:dec-relation-P} \\
&E_{R:A}^{\emph{dec}}(\rho_{RA}, r)
=E_{R:A}^{\rm dec\text{-}m}(\rho_{RA}, 2r)
=E_{R:A}^{\rm dec\text{-}u}(\rho_{RA}, 2r). \label{eq:dec-relation-E}
\end{align}
\end{proposition}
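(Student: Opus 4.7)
My plan is to close the cycle of inequalities
\begin{equation*}
P_{R:A}^{\rm dec\text{-}u}(\rho_{RA},2k) \leq P_{R:A}^{\rm dec\text{-}m}(\rho_{RA},2k) \leq P_{R:A}^{\rm dec}(\rho_{RA},k) \leq P_{R:A}^{\rm dec\text{-}u}(\rho_{RA},2k),
\end{equation*}
from which \eqref{eq:dec-relation-P} follows at once. The reliability-function equality \eqref{eq:dec-relation-E} then follows by specializing to $\rho_{RA}^{\ox n}$ with $k\mapsto nr$, applying $-\tfrac{1}{n}\log$, and passing to $\limsup_{n\rar\infty}$.

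\emph{First inequality.} I would use that every projective measurement $\{Q_x\}_{x=1}^m$ coincides as a channel with a random-unitary channel of the same $m$ elements. The discrete-Fourier operators $V_j:=\sum_{x=1}^m e^{2\pi ijx/m}Q_x$ are unitary (since the $Q_x$ are orthogonal projections summing to $\1$), and the identity $\frac{1}{m}\sum_{j=0}^{m-1}V_j\rho V_j^*=\sum_x Q_x\rho Q_x$ shows the two channels agree on every input. Hence every feasible measurement scheme with $m$ outcomes yields a random-unitary scheme with $m$ elements of identical purified distance.

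\emph{Second inequality.} Given a feasible removal scheme $(U:\mc{H}_{AA'}\rar \mc{H}_{A_1A_2},\sigma_{A'},\omega_{A_1})$ with $|A_2|\leq 2^k$, I would exploit the Weyl twirl on $A_2$, namely the random-unitary channel $|A_2|^{-2}\sum_{a,b}(\1_{A_1}\ox W_{a,b})(\cdot)(\1_{A_1}\ox W_{a,b}^*)$, which sends any state $\xi_{RA_1A_2}$ to $\xi_{RA_1}\ox \1_{A_2}/|A_2|$. Conjugating this twirl by $U$ yields a random-unitary scheme on $AA'$ with $|A_2|^2\leq 2^{2k}$ elements whose output is $\epsilon$-close to the product state $\rho_R\ox U^*(\omega_{A_1}\ox \1_{A_2}/|A_2|)U$. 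To convert this random-unitary scheme into a projective measurement with the same number of outcomes, I would enlarge the catalyst by a maximally entangled ancilla $\ket{\Phi}_{\tilde A_2\tilde A_2'}$ of dimension $|A_2|^2$ and realize the Weyl twirl by a Bell-basis measurement on $A_2\tilde A_2$, using teleportation-style identities to argue that the post-measurement state remains $\epsilon$-close to a product on $R:AA''$.

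\emph{Third inequality, the main obstacle.} Given a random-unitary scheme $\Lambda=m^{-1}\sum_i U_i(\cdot)U_i^*$ on $AA'$ with $m\leq 2^{2k}$ elements satisfying $P(\Lambda(\rho_{RA}\ox\sigma_{A'}),\rho_R\ox\omega_{AA'})\leq\epsilon$, I would apply Uhlmann's theorem to the Stinespring dilation of $\Lambda$. Purifying the input as $\ket{\rho}_{RA\bar R}\ox\ket{\sigma}_{A'\bar{A'}}$ and applying $V_\Lambda:\mc{H}_{AA'}\rar \mc{H}_{AA'E}$ (with $|E|=m$) gives a pure state $\ket{\Psi}_{RAA'E\bar R\bar{A'}}$ whose $RAA'$-marginal is $\epsilon$-close to $\rho_R\ox\omega_{AA'}$. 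The canonical purification $\ket{\rho}_{R\bar R}\ox\ket{\omega}_{AA'F}$ of $\rho_R\ox\omega_{AA'}$ uses a purifier $F$ of dimension at most $|AA'|$, leaving dimensional slack of factor $m/|A|$ against $|E\bar R\bar{A'}|$; assuming WLOG $m\geq|A|$ (by padding the family $\{U_i\}$ with duplicates, which preserves the channel), Uhlmann embeds this canonical purification into $E\bar R\bar{A'}$ within purified distance $\epsilon$ of $\ket{\Psi}$. The slack factor is then exploited to absorb all but a $\sqrt{m}=2^k$-dimensional residual factor of $E$ into an enlarged catalyst $A''$, yielding an isometry $\tilde V:\mc{H}_{AA''}\rar \mc{H}_{A_1A_2}$ with $|A_2|\leq 2^k$ that achieves purified distance $\leq\epsilon$. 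Executing this compression with careful dimensional bookkeeping is the main technical challenge of the proof.
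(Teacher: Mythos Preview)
Your cycle strategy and the first inequality are fine: the discrete-Fourier identity turning a projective measurement into a random-unitary channel with the same $m$ is correct and gives $P_{R:A}^{\rm dec\text{-}u}(\rho_{RA},2k)\leq P_{R:A}^{\rm dec\text{-}m}(\rho_{RA},2k)$ immediately. The paper does not argue this step; it proves $P^{\rm dec}=P^{\rm dec\text{-}m}$ and $P^{\rm dec}=P^{\rm dec\text{-}u}$ separately by four explicit constructions, so your cyclic route is a genuine alternative.

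The third inequality, however, has a real gap. The generic Stinespring dilation of a random-unitary channel with $m$ elements produces an environment $E$ of dimension $m$, and nothing in your Uhlmann argument forces a factorization $E\cong E_1\otimes E_2$ with $|E_2|=\sqrt{m}$ such that tracing out only $E_2$ already yields a state close to $\rho_R\otimes(\cdot)$. Your ``dimensional slack'' is a statement about the size of purifying systems, not about the tensor structure of $E$: Uhlmann gives you an isometry between purifiers, but it mixes $E$, $\bar R$ and $\bar{A'}$ together, and $\bar R$ (the purifier of $\rho_{RA}$) is not available to Alice as part of any decoupling scheme. If you simply take $A_2=E$ you remove $\log m=2k$ qubits and only prove $P^{\rm dec}(\rho_{RA},2k)\leq P^{\rm dec\text{-}u}(\rho_{RA},2k)$, which is too weak by a factor of two. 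The paper closes this gap with a concrete construction that your sketch is missing: one introduces registers $C,C'$ with $|C|=|C'|=\sqrt{m}$, uses as catalyst the uniform mixture of the $m$ Weyl-displaced Bell states on $CC'$, and applies the controlled unitary $U_{AA'CC'}=\sum_{i=1}^m V^i_{AA'}\otimes(W^i_C\otimes\1_{C'})\Psi_{CC'}(W^i_C\otimes\1_{C'})^*$. Because each displaced Bell state has maximally mixed marginal on $C$, tracing out only $C'$ (dimension $\sqrt{m}$) already implements the full average $\Lambda_{AA'}$, tensored with $\1_C/|C|$. This bipartite encoding of the which-unitary label is the essential idea, and it does not emerge from a generic Stinespring-plus-Uhlmann argument.

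A smaller issue in your second inequality: with a maximally \emph{entangled} ancilla $\ket{\Phi}_{\tilde A_2\tilde A_2'}$ and a Bell measurement on $A_2\tilde A_2$, the post-pinching state on $RA_1\tilde A_2'A_2\tilde A_2$ is a classical mixture over outcomes $x$ of states whose $RA_1\tilde A_2'$ part is a Weyl conjugate of the \emph{full} $\Xi_{RA_1A_2}$, not of $\tr_{A_2}\Xi$; closeness of $\tr_{A_2}\Xi$ to $\rho_R\otimes\omega_{A_1}$ does not imply the pinched state is close to a product on $R$ versus the rest. The paper instead uses a maximally \emph{mixed} ancilla $\1_{A''}/|A''|$ with $|A''|=|A_2|$ and pinches $A_2A''$ in the Bell basis; a short computation then shows the pinched state equals exactly $(\tr_{A_2}\Xi)\otimes\1_{A_2}/|A_2|\otimes\1_{A''}/|A''|$ (up to conjugation by $U$), from which the bound follows directly.
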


Then, we derive a one-shot achievability bound for the performance of
catalytic quantum information decoupling. The bound is given in terms
of the sandwiched R\'enyi information divergence. In doing so, we have
employed the convex-split lemma~\cite{ADJ2017quantum} as a key technical
tool. While we present the bound only for the case of decoupling via
removing a subsystem in the following Theorem~\ref{thm:dec-p-up},
similar results hold for the other two cases in light of
Proposition~\ref{prop:dec-relation}.
\begin{theorem}
  \label{thm:dec-p-up}
Let $\rho_{RA}\in \mc{S}(RA)$. For any $m\in\mathbb{N}$, $0<s\leq 1$ and
$\sigma_A\in\mc{S}(A)$, the optimal performance of decoupling $A$ from $R$
is bounded as
\[
P^{\rm dec}_{R:A}(\rho_{RA},\log m) \leq
\sqrt{\frac{v^s}{s}}\exp\Big\{-s\big(\log m-\frac{1}{2}D_{1+s}
     (\rho_{RA}\| \rho_R\ox\sigma_A)\big)\Big\},
\]
where $v$ is the number of distinct eigenvalues of $\rho_R\ox \sigma_A$.
\end{theorem}

Our main result is the characterization of reliability functions.
This is given in Theorem~\ref{thm:reliability-dec} for the case of decoupling
via removing a subsystem. Thanks to Proposition~\ref{prop:dec-relation},
similar results follow directly for the other two cases. We point out that
we have completely determined the reliability functions when the respective
cost is below a critical value.
\begin{theorem}
  \label{thm:reliability-dec}
Let $\rho_{RA}\in \mc{S}(RA)$ be a bipartite quantum state, and consider
the problem of decoupling quantum information in $A^n$ from the
reference system $R^n$ for the quantum state $\rho_{RA}^{\ox n}$. When
$r\leq R_{\rm critical}:=\frac{1}{2}\frac{\mathrm{d}}
{\mathrm{d}s} sI_{1+s}(R:A)_{\rho}\big|_{s=1}$, we have
\beq\label{eq:exp-dec}
E^{\rm dec}_{R:A}(\rho_{RA},r) = \max_{0\leq s\leq 1}
  \Big\{s\big(r-\frac{1}{2}I_{1+s}(R:A)_\rho\big)\Big\}.
\eeq
In general, we have
\begin{align}
E^{\rm dec}_{R:A}(\rho_{RA},r) &\geq \max_{0\leq s\leq 1}
  \Big\{s\big(r-\frac{1}{2}I_{1+s}(R:A)_\rho\big)\Big\}, \label{eq:exp-dec-low} \\
E^{\rm dec}_{R:A}(\rho_{RA},r) &\leq \ \sup_{s \geq 0}
  \ \, \Big\{s\big(r-\frac{1}{2}I_{1+s}(R:A)_\rho\big)\Big\}.\label{eq:exp-dec-up}
 \end{align}
\end{theorem}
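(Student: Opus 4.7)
The plan is to prove the two bounds \eqref{eq:exp-dec-low} and \eqref{eq:exp-dec-up} separately, and then deduce the equality \eqref{eq:exp-dec} from a convexity/derivative argument. By Proposition~\ref{prop:dec-relation} it suffices to work with $P^{\rm dec}_{R:A}$; the other two types then follow automatically.

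For the lower bound \eqref{eq:exp-dec-low} I would invoke the one-shot achievability bound of Theorem~\ref{thm:dec-p-up} applied to the i.i.d.\ state $\rho_{RA}^{\otimes n}$ with $m=2^{nr}$ and the product catalyst/reference state $\rho_R^{\otimes n}\otimes\sigma_A^{\otimes n}$, choosing $\sigma_A$ to be an optimizer of $I_{1+s}(R:A)_\rho$. Tensor-additivity of the sandwiched R\'enyi divergence on product states then converts $D_{1+s}(\rho_{RA}^{\otimes n}\|\rho_R^{\otimes n}\otimes\sigma_A^{\otimes n})$ into $nI_{1+s}(R:A)_\rho$. The prefactor $v$ counts the distinct eigenvalues of $\rho_R^{\otimes n}\otimes\sigma_A^{\otimes n}$ and is polynomial in $n$ (at most $(n+1)^{|R|+|A|-2}$ by a type-counting/pinching argument), so $\frac{1}{n}\log\sqrt{v^s/s}\to 0$. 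Taking $-\frac{1}{n}\log$ and then $n\to\infty$ yields $E^{\rm dec}_{R:A}(\rho_{RA},r)\geq s(r-\tfrac12 I_{1+s}(R:A)_\rho)$ for every $s\in(0,1]$, and optimizing over $s$ gives \eqref{eq:exp-dec-low}. The $s=0$ case is trivial.

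For the upper bound \eqref{eq:exp-dec-up} I would go through the smooth max-information. The tight one-shot characterization of catalytic decoupling via $I_{\max}^{\varepsilon}$ established in \cite{MBDRC2017catalytic,ADJ2017quantum} (with the purified-distance smoothing used here) implies that if a decoupling scheme achieves purified distance $\varepsilon$ at cost $k$ then, up to constants absorbed in the asymptotic limit, $\tfrac12 I_{\max}^{\varepsilon}(R^n\!:\!A^n)_{\rho^{\otimes n}}\leq nr$. Hence an asymptotic lower bound on the purified-distance error reduces to an asymptotic lower bound on the quantity
\[
 \varepsilon_n(\lambda) := \inf\{P(\rho_{RA}^{\otimes n},\tilde\rho):\, \tfrac12 I_{\max}(R^n\!:\!A^n)_{\tilde\rho}\leq n\lambda\},
\]
evaluated at $\lambda=r$. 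The paper announces the exact exponent for this smoothing quantity in terms of the sandwiched R\'enyi mutual information, namely $-\tfrac1n\log\varepsilon_n(r)\to \sup_{s\geq 0}\{s(r-\tfrac12 I_{1+s}(R:A)_\rho)\}$. Feeding this into the reduction above gives \eqref{eq:exp-dec-up}. The main obstacle in the whole argument is precisely establishing this converse direction for smooth max-information: it requires a sharp pinching/symmetrization analysis together with a variational characterization of $D_{\alpha}$ for $\alpha\in(1,\infty)$, and this is the technical core that I would need to carry out most carefully.

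Finally, to obtain the equality \eqref{eq:exp-dec} when $r\leq R_{\rm critical}$ I would show that, under this assumption, the supremum in \eqref{eq:exp-dec-up} is attained inside $[0,1]$, so the two bounds coincide. The function $f(s):=s\big(r-\tfrac12 I_{1+s}(R:A)_\rho\big)$ is concave on $s\geq 0$ because $s\mapsto sI_{1+s}(R:A)_\rho$ is convex (a standard consequence of the variational representation of $D_\alpha$, already exploited in e.g.\ \cite{MosonyiOgawa2017strong,CHDH2020non}). A direct computation gives
\[
f'(1) = r - \tfrac12\tfrac{\mathrm d}{\mathrm d s}\bigl[sI_{1+s}(R:A)_\rho\bigr]\Big|_{s=1} = r - R_{\rm critical},
\]
so $r\leq R_{\rm critical}$ forces $f'(1)\leq 0$ and, by concavity, $f$ is non-increasing on $[1,\infty)$. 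Therefore $\sup_{s\geq 0} f(s)=\max_{s\in[0,1]}f(s)$, and combining with \eqref{eq:exp-dec-low}--\eqref{eq:exp-dec-up} yields \eqref{eq:exp-dec}.
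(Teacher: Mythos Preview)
Your proposal is correct and takes essentially the same approach as the paper: Theorem~\ref{thm:dec-p-up} applied to i.i.d.\ states for the lower bound, the one-shot converse $P^{\rm dec}_{R:A}(\rho_{RA},k)\geq\delta_{R:A}(\rho_{RA},2k)$ (stated and proved in the paper as Proposition~\ref{prop:dec-p-low}, which is precisely the converse half of the characterization you cite from~\cite{MBDRC2017catalytic,ADJ2017quantum}) combined with the exact smoothing exponent of Theorem~\ref{theorem:exp-mice} for the upper bound, and the same concavity/derivative argument for the equality. One small caveat: the smoothing quantity that the converse delivers and that Theorem~\ref{theorem:exp-mice} analyses fixes the first marginal to $\rho_R^{\otimes n}$ (i.e.\ $\tilde\rho\leq 2^{\lambda}\rho_R^{\otimes n}\otimes\sigma_{A^n}$), not to $\tilde\rho_{R^n}$ as in your $\varepsilon_n(\lambda)$, so be sure to use that convention when you carry out the details.
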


The lower bound of Eq.~\eqref{eq:exp-dec-low} is a consequence of
Theorem~\ref{thm:dec-p-up}. For the upper bound of Eq.~\eqref{eq:exp-dec-up},
we first bound the optimal decoupling performance using the smoothing quantity
associated with the max-information, and then we derive the exact exponent in
smoothing the max-information. Eq.~\eqref{eq:exp-dec} is derived from the
combination of Eqs.~\eqref{eq:exp-dec-low} and \eqref{eq:exp-dec-up}.

The results presented in Theorem~\ref{thm:reliability-dec} are depicted
in Figure~\ref{fig:reliability}. Above the critical value, we are unable
to determine the formula for the reliability function. This is indeed a
common hard problem in the topic of reliability functions (see, e.g., References~\cite{Gallager1968information,Dalai2013lower,
HayashiTan2016equivocations,LYH2023tight}). More comments on
Theorem~\ref{thm:reliability-dec} can be found in Remark~\ref{rem:mainresult}.

\begin{figure}[ht]
  \includegraphics[width=9.5cm]{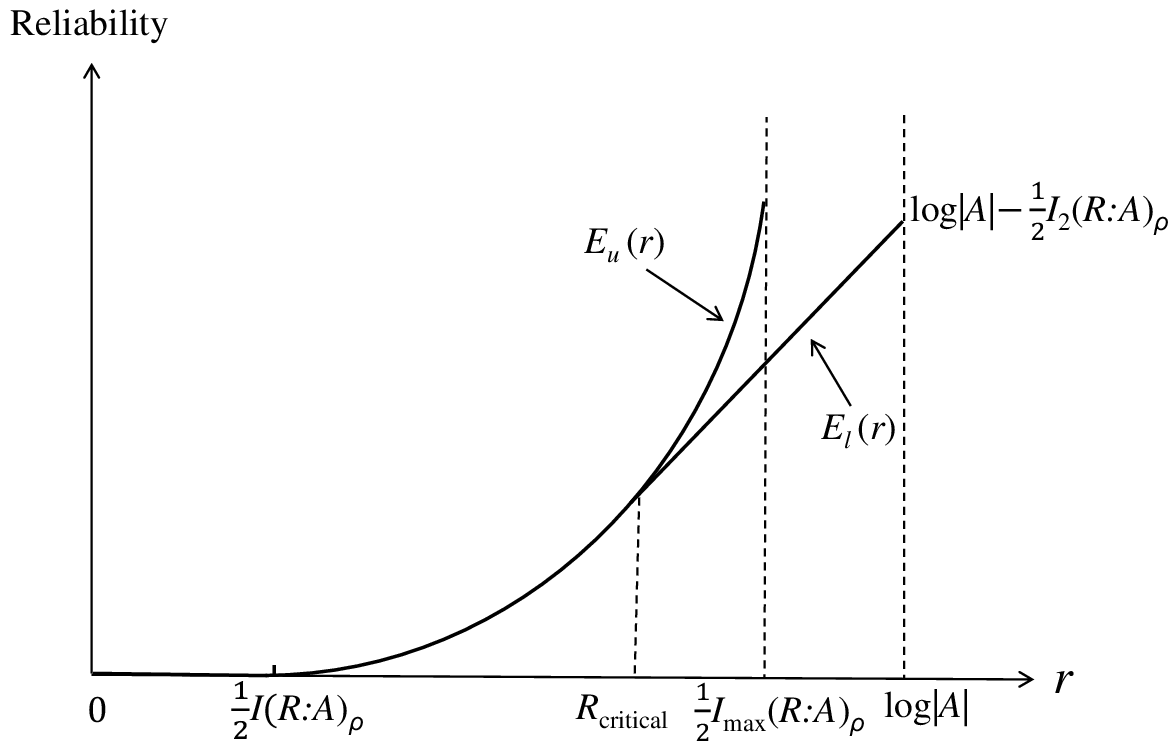}
  \caption{Reliability function of quantum information decoupling.
  $E_u(r):=\sup_{s \geq 0}
  \big\{s\big(r-\frac{1}{2}I_{1+s}(R:A)_\rho\big)\big\}$
  is the upper bound of Eq.~(\ref{eq:exp-dec-up}).
  $E_l(r):=\sup_{0\leq s\leq1}
  \big\{s\big(r-\frac{1}{2}I_{1+s}(R:A)_\rho\big)\big\}$
  is the lower bound of Eq.~(\ref{eq:exp-dec-low}). The
  two bounds are equal in the interval $[0,R_\text{critical}]$,
  giving the exact reliability function. The reliability function
  equals $0$ when $r<\frac{1}{2}I(R:A)_\rho$ and it is strictly
  positive when $r>\frac{1}{2}I(R:A)_\rho$. Above the critical value
  $R_\text{critical}$, the upper bound $E_u(r)$ becomes larger than
  the lower bound and it is $\infty$ when $r>\frac{1}{2}I_{\rm{max}}
  (R:A)_\rho$, while the lower bound $E_l(r)$ becomes linear and
  reaches $\log |A|-\frac{1}{2}I_2(R:A)_\rho$ at $r=\log |A|$.}
  \label{fig:reliability}
\end{figure}

The proof of Proposition~\ref{prop:dec-relation} can be found in
Section~\ref{sec:relations}, and those of Theorem~\ref{thm:dec-p-up}
and Theorem~\ref{thm:reliability-dec} are given in Section~\ref{sec:proof-main}.

\subsection{Applications to quantum state merging}
\label{subsec:merging}
The inherent connection between quantum information decoupling and quantum
state merging has been established since the invention of these two tasks~\cite{HOW2005partial,HOW2007quantum,ADHW2009mother}. It was further
explored later on, e.g., in~\cite{DBWR2014one,ADJ2017quantum,MBDRC2017catalytic,
ABJT2020partially}. Exploiting this connection, we are able to extend the
results on the reliability function of quantum information decoupling to
the scenario of quantum state merging.

Let $\rho_{RAB}$ be a tripartite pure state. Alice, Bob and a referee hold
system $A$, $B$ and $R$ respectively. Quantum state merging is the task of
transmitting the quantum information stored in the $A$ system from Alice to
Bob. There are two different ways to achieve this. One is by classical
communication and is introduced in~\cite{HOW2005partial}. The other one is
by quantum communication, firstly considered in~\cite{ADHW2009mother}.
In both cases, free pre-shared entanglement between Alice and Bob is allowed.
A formal description is as follows.

A quantum state merging protocol via quantum communication is represented by a
quantum operation $\mathcal{M}_1$, which
consists of using a shared bipartite entangled pure state $\phi_{A'B'}$, Alice
applying local unitary $U_{AA' \rightarrow A_1A_2}$ and sending the system $A_2$
to Bob, Bob applying local unitary $V_{A_2BB' \rightarrow ABB_1}$ and they
discarding the systems $A_1$ and $B_1$. A CPTP map $\mathcal{M}_2$ is a quantum
state merging protocol via classical communication if it consists of using a
shared bipartite entangled pure state $\psi_{A'B'}$, applying local operation at
Alice's side, sending $k$ classical bits from Alice to Bob and applying local
operation to reproduce systems $A$ and $B$ at Bob's side.
The performances of both protocols are given by the purified distance between
$\rho_{RAB}$ and the final state on the referee's system $R$ and Bob's system
$A$ and $B$. The cost of state merging that we are concerned with, is the number
of qubits ($\log |A_2|$ in $\mathcal{M}_1$) or classical bits ($k$ in $\mc{M}_2$)
that Alice sends to Bob.

\begin{definition}
Let $\rho_{RAB} \in \mathcal{S}({RAB})$ be a tripartite pure state and
$r \geq 0$. Let $P_{A \Rightarrow B}^{\rm{merg}}(\rho_{RAB}, k)$ denote
the optimal performance of quantum state merging via quantum communication of at
most $k$ qubits, and let $P_{A \rightarrow B}^{\rm{merg}}(\rho_{RAB}, k)$ denote
the optimal performance of quantum state merging via classical communication of
at most $k$ bits. They are defined respectively as
\begin{align}
\label{eq:p-merg-q}
&P_{A \Rightarrow B}^{\rm{merg}}(\rho_{RAB}, k)
:= \min_{\mathcal{M}_1} P(\mathcal{M}_1(\rho_{RAB}), \rho_{RAB} ),\\
\label{eq:p-merg-c}
&P_{A \rightarrow B}^{\rm{merg}}(\rho_{RAB}, k)
:= \min_{\mathcal{M}_2} P(\mathcal{M}_2(\rho_{RAB}), \rho_{RAB}),
\end{align}
where $\mathcal{M}_1$ is the protocol via quantum communication and the
minimization in Eq.~\eqref{eq:p-merg-q} is over all such protocols
whose communication cost is bounded by $k$ qubits, and $\mathcal{M}_2$
is the protocol via classical communication and the minimization in
Eq.~\eqref{eq:p-merg-c} is over all possible $\mathcal{M}_2$ whose
communication cost is bound by $k$ bits.
\end{definition}

The reliability function of quantum state merging characterizes the rate of
exponential decreasing of the optimal performance in the asymptotic limit.
\begin{definition}
Let $\rho_{RAB}\in\mathcal{S}({RAB})$ be a tripartite pure state and $r\geq 0$.
The reliability functions $E_{A \Rightarrow B}^{\rm{merg}}(\rho_{RAB}, r)$
and $E_{A \rightarrow B}^{\rm{merg}}(\rho_{RAB}, r)$ of quantum state merging
via quantum communication and classical communication respectively, are defined
as
\begin{align}
&E_{A \Rightarrow B}^{\rm{merg}}(\rho_{RAB}, r)
:=\limsup_{n\rightarrow\infty} \frac{-1}{n}
\log P_{A^n \Rightarrow B^n}^{\rm{merg}}(\rho_{RAB}^{\ox n}, nr),\\
&E_{A \rightarrow B}^{\rm{merg}}(\rho_{RAB}, r)
:=\limsup_{n\rightarrow\infty} \frac{-1}{n}
\log P_{A ^n\rightarrow B^n}^{\rm{merg}}(\rho_{RAB}^{\ox n}, nr).
\end{align}
\end{definition}

To apply our results on decoupling to the problem of quantum state merging,
we show exact equalities relating the optimal performances (or reliability
functions) of catalytic decoupling and quantum state merging.
Eq.~\eqref{perf-merg-dec} in Proposition~\ref{prop:merging-dec} is
essentially due to Uhlmann's theorem and is implicitly used in previous
works~\cite{HOW2005partial,HOW2007quantum,ADHW2009mother,DBWR2014one,
ADJ2017quantum,MBDRC2017catalytic,ABJT2020partially}. However, our
definition of the optimal decoupling performance
(cf. Definition~\ref{def:perform-dec}) is subtle to enable such an
equality relation. See Section~\ref{sec:relations} for the detailed proof.

\begin{proposition}
\label{prop:merging-dec}
For a tripartite pure state $\rho_{RAB} \in \mathcal{S}({RAB})$ and
$k\geq 0$, $r \geq 0$, we have
\begin{align}
&P^{\rm{merg}}_{A \Rightarrow B}(\rho_{RAB}, k)
=P^{\rm{merg}}_{A \rightarrow B}(\rho_{RAB}, 2k)
=P^{\rm{dec}}_{R:A}(\rho_{RA}, k), \label{perf-merg-dec} \\
&E^{\rm{merg}}_{A \Rightarrow B}(\rho_{RAB}, r)
=E^{\rm{merg}}_{A \rightarrow B}(\rho_{RAB}, 2r)
=E^{\rm{dec}}_{R:A}(\rho_{RA}, r). \label{reliab-merg-dec}
\end{align}
\end{proposition}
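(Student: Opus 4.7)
The plan is to first prove the one-shot equality \eqref{perf-merg-dec}, and then derive \eqref{reliab-merg-dec} by specializing to tensor-product states and taking the asymptotic limit. The core chain is $P^{\rm{merg}}_{A\Rightarrow B}(\rho_{RAB},k)=P^{\rm{dec}}_{R:A}(\rho_{RA},k)$, proved via Uhlmann's theorem, together with $P^{\rm{merg}}_{A\to B}(\rho_{RAB},2k)=P^{\rm{merg}}_{A\Rightarrow B}(\rho_{RAB},k)$, proved via teleportation and superdense coding.

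For the decoupling/quantum-merging equivalence, I would argue both inequalities. For $P^{\rm{merg}}_{A\Rightarrow B}(\rho_{RAB},k)\le P^{\rm{dec}}_{R:A}(\rho_{RA},k)$: given any decoupling scheme with catalyst $\sigma_{A'}$ and unitary $U_{AA'\to A_1A_2}$ such that $\tr_{A_2}U(\rho_{RA}\ox\sigma_{A'})U^{*}$ is within purified distance $\varepsilon$ of $\rho_R\ox\omega_{A_1}$, I would let $\phi_{A'B'}$ be a purification of $\sigma_{A'}$ (shared as pre-entanglement), take $|\psi\rangle_{RA_1A_2BB'}=(U\ox I)(|\rho\rangle_{RAB}\ox|\phi\rangle_{A'B'})$, and observe that this is a purification of the (near-product) decoupled state on $RA_1$. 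A purification of $\rho_R\ox\omega_{A_1}$ can be written as $|\rho\rangle_{RAB}\ox|\eta\rangle_{A_1B_1}$ for some pure $|\eta\rangle$ on $A_1B_1$ (with $|B_1|$ suitably chosen, embedding $B'$ into $BB_1$). Uhlmann's theorem then supplies a unitary $V_{A_2BB'\to ABB_1}$ on Bob's side taking $|\psi\rangle$ within the same $\varepsilon$ of $|\rho\rangle_{RAB}\ox|\eta\rangle_{A_1B_1}$; tracing out $A_1B_1$ and using monotonicity of the purified distance yields a merging protocol of quality $\varepsilon$ and communication cost $\log|A_2|\le k$. Conversely, for $P^{\rm{dec}}_{R:A}(\rho_{RA},k)\le P^{\rm{merg}}_{A\Rightarrow B}(\rho_{RAB},k)$: any merging protocol with shared entanglement $\phi_{A'B'}$ and Alice's isometry $U_{AA'\to A_1A_2}$ can be used directly as a decoupling scheme on $A$ with catalyst $\sigma_{A'}=\tr_{B'}\phi_{A'B'}$. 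Since the protocol succeeds up to purified distance $\varepsilon$ on $RAB$, the full pure output on $RA_1ABB_1$ is $\varepsilon$-close to $|\rho\rangle_{RAB}\ox|\eta\rangle_{A_1B_1}$; restricting to $RA_1$ and again applying monotonicity gives the required decoupling bound with $\omega_{A_1}=\tr_{B_1}|\eta\rangle\langle\eta|$.

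For the quantum-to-classical equivalence, I would use teleportation to simulate $k$ qubits of quantum communication by $2k$ classical bits with $k$ ebits of pre-shared entanglement, yielding $P^{\rm{merg}}_{A\to B}(\rho_{RAB},2k)\le P^{\rm{merg}}_{A\Rightarrow B}(\rho_{RAB},k)$; in the reverse direction, superdense coding simulates $2k$ classical bits by $k$ qubits with $k$ ebits, yielding the opposite inequality. Both simulations are exact (no additional error) and the ebits are absorbed into the pre-shared pure entangled state, which is free.

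Finally, \eqref{reliab-merg-dec} follows immediately by applying \eqref{perf-merg-dec} to $\rho_{RAB}^{\ox n}$ with cost $nr$ (resp.\ $2nr$), taking $-\frac{1}{n}\log$, and passing to $\limsup_{n\to\infty}$. The main obstacle will be bookkeeping the systems so that Uhlmann's theorem applies with the right dimensions and the right purifications; in particular, one must verify that the catalyst $\sigma_{A'}$ obtained from a merging protocol, and the purification $\phi_{A'B'}$ reconstructed from a decoupling protocol, can always be arranged so that the dimensions of $A_1$ and $B_1$ (and hence the Bob-side Hilbert space used by Uhlmann) match on both sides of the equivalence. This is the content of the parenthetical remark in the excerpt that the definition of $P^{\rm dec}_{R:A}$ is \emph{subtle} precisely to allow the equality, rather than merely an inequality, in \eqref{perf-merg-dec}.
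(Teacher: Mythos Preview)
Your proposal is correct and follows essentially the same approach as the paper: teleportation/dense coding for the quantum--classical equivalence, and Uhlmann's theorem in both directions for the decoupling--merging equivalence, with the asymptotic statement obtained by specialization to tensor powers. The only minor imprecision is in the converse direction $P^{\rm dec}_{R:A}\le P^{\rm merg}_{A\Rightarrow B}$, where the step ``the full pure output on $RA_1ABB_1$ is $\varepsilon$-close to $\ket{\rho}_{RAB}\ox\ket{\eta}_{A_1B_1}$'' is not automatic from success on $RAB$ alone but is exactly another application of Uhlmann's theorem (the paper invokes it explicitly here to produce the pure state $\varphi_{A_1B_1}$); once you make that invocation explicit, your argument matches the paper's.
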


With Proposition~\ref{prop:merging-dec}, we immediately obtain results
in analogy to Theorem~\ref{thm:dec-p-up} and Theorem~\ref{thm:reliability-dec}
for quantum state merging. We do not lay them out entirely but only exhibit
the following.

\begin{corollary}
  \label{cor:reliability-merging}
Let $\rho_{RAB}\in\mathcal{S}({RAB})$ be a tripartite pure state. When the
rate of qubits transmission $r$ is such that $0\leq r \leq R_{\rm critical}
\equiv\frac{1}{2}\frac{\mathrm{d}}{\mathrm{d}s} sI_{1+s}(R:A)_{\rho}\big|_{s=1}$,
the reliability function of quantum state merging via quantum communication
is given by
\beq\label{eq:exp-merging-q}
E^{\rm{merg}}_{A \Rightarrow B}(\rho_{RAB}, r)
=\max_{0\leq s\leq 1}\Big\{s\big(r-\frac{1}{2}I_{1+s}(R:A)_\rho\big)\Big\};
\eeq
when the rate of classical bits transmission $r$ is such that $0\leq
\frac{r}{2}\leq R_{\rm critical}\equiv\frac{1}{2}\frac{\mathrm{d}}{\mathrm{d}s}
sI_{1+s}(R:A)_{\rho}\big|_{s=1}$, the reliability function of quantum state
merging via classical communication is given by
\beq\label{eq:exp-merging-c}
E^{\rm{merg}}_{A \rightarrow B}(\rho_{RAB}, r)
=\max_{0\leq s\leq 1}\Big\{\frac{s}{2}\big(r-I_{1+s}(R:A)_\rho\big)\Big\}.
\eeq
\end{corollary}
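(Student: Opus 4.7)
The plan is to derive this corollary as a direct consequence of Proposition~\ref{prop:merging-dec} together with Theorem~\ref{thm:reliability-dec}. Essentially no new technical work is required at this point: the operational reduction between state merging and decoupling has already been established, and the reliability function of decoupling below the critical rate has already been computed.

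First, I would invoke the second line of Proposition~\ref{prop:merging-dec}, namely
\[
E^{\rm{merg}}_{A \Rightarrow B}(\rho_{RAB}, r) = E^{\rm{dec}}_{R:A}(\rho_{RA}, r),
\qquad
E^{\rm{merg}}_{A \rightarrow B}(\rho_{RAB}, 2r) = E^{\rm{dec}}_{R:A}(\rho_{RA}, r).
\]
The second equation, applied at rate $r/2$, gives
$E^{\rm{merg}}_{A \rightarrow B}(\rho_{RAB}, r) = E^{\rm{dec}}_{R:A}(\rho_{RA}, r/2)$. So both merging reliability functions are reduced to the decoupling reliability function, evaluated at cost $r$ and $r/2$ respectively.

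Next, I would substitute the formula from Theorem~\ref{thm:reliability-dec}. For the quantum communication case, the hypothesis $r \le R_{\rm critical}$ is exactly the validity range of Eq.~\eqref{eq:exp-dec}, so
\[
E^{\rm{merg}}_{A \Rightarrow B}(\rho_{RAB}, r) = E^{\rm{dec}}_{R:A}(\rho_{RA}, r) = \max_{0\leq s \leq 1}\Big\{ s\big(r - \tfrac{1}{2}I_{1+s}(R:A)_\rho\big)\Big\},
\]
which is Eq.~\eqref{eq:exp-merging-q}. For the classical communication case, the hypothesis is $r/2 \le R_{\rm critical}$, which is precisely the condition needed to apply Eq.~\eqref{eq:exp-dec} at argument $r/2$. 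Substituting gives
\[
E^{\rm{merg}}_{A \rightarrow B}(\rho_{RAB}, r) = E^{\rm{dec}}_{R:A}(\rho_{RA}, r/2) = \max_{0\leq s \leq 1}\Big\{ s\big(\tfrac{r}{2} - \tfrac{1}{2}I_{1+s}(R:A)_\rho\big)\Big\},
\]
and pulling out the factor $1/2$ yields Eq.~\eqref{eq:exp-merging-c}. The appearance of the prefactor $s/2$ (rather than $s$) and of $r$ (rather than $r/2$) inside the parenthesis is simply a cosmetic rewriting.

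The only substantive remark worth including is that there is no obstacle beyond combining the two cited results; in particular, no separate converse or achievability argument for merging is needed, because the exact equalities in Proposition~\ref{prop:merging-dec} transfer both sides of Theorem~\ref{thm:reliability-dec} simultaneously. If one were worried about a hidden subtlety, it would be the validity region: one must check that $R_{\rm critical}$ is defined in terms of $\rho_{RA}$, which coincides with the marginal of the pure state $\rho_{RAB}$ on $RA$, so the critical rate appearing in the merging statement matches the decoupling critical rate precisely. This is immediate from the definition, so the proof is complete.
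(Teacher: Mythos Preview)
Your proposal is correct and matches the paper's approach exactly: the paper states that the corollary follows immediately from Proposition~\ref{prop:merging-dec} combined with Theorem~\ref{thm:reliability-dec}, without giving any further details. Your explicit derivation via the substitution $r \mapsto r/2$ for the classical case is precisely the intended argument.
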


A special case where Bob holds no side information is worth looking at. It
can be understood as entanglement-assisted quantum source coding. In this case,
Alice and a referee share a pure state $\rho_{RA}$ and Alice wants to send
the information in the $A$ system to Bob with the assistance of unlimited
entanglement, using noiseless quantum or classical communication. It has
been shown in~\cite[Lemma 8]{WWW2019quantifying}
(see also~\cite[Proposition 4.86]{KhatriWilde2020principles}) that
\[
I_{1+s}(R:A)_\rho=\frac{2s+1}{s}\log\tr \rho_A^{\frac{1}{2s+1}}
                 =2H_{\frac{1}{2s+1}}(\rho_A),
\]
which can also be verified by employing the dual relation
of~\cite[Lemma 6]{HayashiTomamichel2016correlation}.
So the reliability function in this case has a simpler
formula involving only the R\'enyi entropy of one single system.

\section{Proof of the Characterization of Reliability Functions}
  \label{sec:proof-main}
In this section, we prove Theorem~\ref{thm:reliability-dec} on the
characterization of the reliability functions. As an intermediate step,
we also derive the one-shot achievability bound of Theorem~\ref{thm:dec-p-up}.
The proof is organized as follows. In Section~\ref{subsec:convex-split}, we
analyse the convex-split lemma~\cite{ADJ2017quantum}, obtaining a new bound
that employs the sandwiched R{\'e}nyi divergence. This will be a crucial tool
for proving Theorem~\ref{thm:reliability-dec}, regarding the achievability
bound of Eq.~\eqref{eq:exp-dec-low}. Then in Section~\ref{subsec:MI-CE}, we
derive the exact exponent for the asymptotic decreasing of the smoothing
quantity for the max-information. This will serve as another key technical
tool for proving Theorem~\ref{thm:reliability-dec}, for the converse bound of
Eq.~\eqref{eq:exp-dec-up}. At last, in Section~\ref{subsec:thm-proof}, we
accomplish the proof of Theorem~\ref{thm:reliability-dec} as well as
Theorem~\ref{thm:dec-p-up}, by employing the established tools mentioned above.

\subsection{A convex-split lemma}
  \label{subsec:convex-split}
The convex-split lemma was introduced in~\cite{ADJ2017quantum} and has broad
applications in topics such as one-shot quantum Shannon theory~\cite{ADJ2017quantum,
MBDRC2017catalytic,AJW2017generalized,AJW2018building,AJW2019convex}, entanglement
and general resource theories~\cite{AHJ2018quantifying,BertaMajenz2018disentanglement},
and quantum thermodynamics~\cite{FBB2021thermodynamic,LipkaSkrzypczyk2021all}.
Roughly speaking, it quantifies how well the information of a distinct object
located among many other identical ones, can be erased by randomly mixing all of
them. Originally in~\cite{ADJ2017quantum} and in all of the previous applications,
the effect of this erasure is bounded using the max-relative entropy, which, after
being smoothed, is sufficiently tight for those purposes. However, for our purpose
of deriving the reliability function, this bound does not work any more. Instead,
we prove a version of the convex-split lemma with a new bound, employing directly
the sandwiched R{\'e}nyi divergence.
\begin{lemma}
  \label{lem:convex-split}
Let $\rho_{RA}\in \mc{S}(RA)$ and $\sigma_A\in \mc{S}(\mc{H}_A)$
be quantum states such that $\supp (\rho_A)\subseteq\supp (\sigma_A)$.
Consider the following state
\[
\tau_{RA_1A_2\cdots A_m}:=\frac{1}{m}\sum_{i=1}^m
                 \rho_{RA_i}\ox \big[\sigma^{\ox (m-1)}\big]_{A^m/A_i},
\]
where $A^m/A_i$ denotes the composite system consisting of $A_1,A_2,\cdots,
A_{i-1},A_{i+1},\cdots,A_m$ and $\big[\sigma^{\ox (m-1)}\big]_{A^m/A_i}$
is the product state $\sigma^{\ox (m-1)}$ on these $m-1$ systems. Let
$v=v(\rho_R \ox \sigma_A)$ denote the number of distinct eigenvalues of
$\rho_R \ox \sigma_A$. Then for any $0<s\leq 1$,
\[
D\left(\tau_{RA_1A_2\cdots A_m}\big\| \rho_R \ox (\sigma^{\ox m})_{A^m}\right)
\leq \frac{v^s}{(\ln2)s}\exp\Big\{-s\big(\log m-D_{1+s}
     (\rho_{RA}\| \rho_R\ox\sigma_A)\big)\Big\}.
\]
\end{lemma}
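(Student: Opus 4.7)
The plan is to convert the relative entropy to a polynomial trace moment, estimate that moment using the convex-split structure, and absorb non-commutativity via a Hayashi-style pinching argument.

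First, monotonicity of the sandwiched R\'enyi divergence in the order gives $D(\tau\|\omega)\leq D_{1+s}(\tau\|\omega)$ for $s>0$, where $\omega := \rho_R\ox\sigma^{\ox m}$. Writing $D_{1+s}(\tau\|\omega) = \tfrac{1}{s}\log Q_{1+s}(\tau\|\omega)$ with $Q_{1+s}(\tau\|\omega)\geq 1$ and using the elementary $\log x\leq (x-1)/\ln 2$, I obtain
\[
D(\tau\|\omega) \leq \frac{Q_{1+s}(\tau\|\omega) - 1}{s\ln 2}.
\]
The task reduces to showing $Q_{1+s}(\tau\|\omega) - 1 \leq v^s\,Q_{1+s}(\rho_{RA}\|\rho_R\ox\sigma_A)/m^s$.

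Next I unpack the convex-split structure. With $\beta = s/(2(1+s))$, $\eta = \sigma^{1/(1+s)}$ (so that $\eta^{1+s}=\sigma$), and $\tilde\rho = (\rho_R\ox\sigma_A)^{-\beta}\rho_{RA}(\rho_R\ox\sigma_A)^{-\beta}$, the tensor-product form of $\omega$ gives
\[
T := \omega^{-\beta}\tau\omega^{-\beta} = \frac{1}{m}\sum_{i=1}^m \tilde\rho^{(i)}_{RA_i}\ox\eta^{\ox(m-1)}_{A^m/A_i},
\]
so that $Q_{1+s}(\tau\|\omega) = \tr[T^{1+s}]$ and $\tr[\tilde\rho^{1+s}] = Q_{1+s}(\rho_{RA}\|\rho_R\ox\sigma_A)$. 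Heuristically, $\tr[T^{1+s}]$ should tend to $1$ as $m\to\infty$ by a law-of-large-numbers argument applied to the average.

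To deal with non-commutativity, I invoke Hayashi's pinching trick. Let $\mc{E}$ be a pinching map associated with $\rho_R\ox\sigma_A$ (acting nontrivially only on one $RA$-block and by the identity on the remaining $m-1$ copies), which by construction has $v = v(\rho_R\ox\sigma_A)$ distinct eigenvalues. The pinching inequality $T\leq v\,\mc{E}(T)$ together with operator monotonicity of $x^s$ on $[0,1]$ gives $T^s\leq v^s\,\mc{E}(T)^s$; combined with self-adjointness of $\mc{E}$ with respect to the Hilbert-Schmidt inner product, this yields
\[
\tr[T^{1+s}] = \tr[T\cdot T^s]\leq v^s\,\tr[T\cdot\mc{E}(T)^s] = v^s\,\tr[\mc{E}(T)^{1+s}].
\]
Using that $\eta$ commutes with $\sigma$ together with the permutation symmetry of $T$ in $A_1,\ldots,A_m$, $\mc{E}(T)$ can be brought into a commuting (essentially classical) algebra, so the task further reduces to the commutative estimate $\tr[\mc{E}(T)^{1+s}] - 1 \leq Q_{1+s}(\rho_{RA}\|\rho_R\ox\sigma_A)/m^s$.

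In this classical setting, $\bar Z := \mc{E}(\tau)/\omega = \tfrac{1}{m}\sum_i Z_i$ satisfies $\avg{Z_i}_\omega=1$ with $\{Z_i\}$ conditionally iid given the register $R$. Combining the calculus inequality $x^{1+s}\leq 1 + (1+s)(x-1) + |x-1|^{1+s}$ (valid for $x\geq 0$, $s\in(0,1]$, verified separately on $[0,1]$ and $[1,\infty)$ by analyzing the sign of the first derivative of the difference), the centering $\avg{\bar Z - 1}_\omega = 0$, and the Von Bahr--Esseen moment bound $\avg{|\sum Y_i|^{1+s}}\leq 2\sum\avg{|Y_i|^{1+s}}$ applied to the mean-zero variables $Y_i = Z_i -1$, I arrive at the desired $m^{-s}$ decay, with numerical constants absorbed using $Q_{1+s}\geq 1$. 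The main obstacle, and the step requiring the most care, is designing the pinching so that only the single-copy factor $v^s$ is incurred---rather than $v(\omega)^s$, which would be polynomial in $m$ and ruin the bound. This is what forces the careful use of the tensor-product structure of $\omega$, the commutation of $\eta$ with $\sigma$, and the permutation symmetry, so that the pinching effectively acts only on the single $RA$-block where non-commutativity actually appears.
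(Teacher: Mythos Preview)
Your reduction $D(\tau\|\omega)\leq\frac{1}{s\ln 2}\big(Q_{1+s}(\tau\|\omega)-1\big)$ and the identification $T=\omega^{-\beta}\tau\omega^{-\beta}=\frac{1}{m}\sum_i\tilde\rho_{RA_i}\ox\eta^{\ox(m-1)}$ are correct. The gap is exactly where you flag it: the single-block pinching does \emph{not} bring $\mc{E}(T)$ into a commuting algebra. After pinching on $RA_1$, the term $i\neq 1$ becomes $\eta_{A_1}\ox(\mc{E}_{\rho_R}\ox\id_{A_i})(\tilde\rho_{RA_i})\ox\eta^{\ox(m-2)}$; its $A_i$-factor is untouched and need not commute with $\eta_{A_i}$, which appears in every other summand. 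So the different summands of $\mc{E}(T)$ still fail to commute, and there is no joint diagonalization in which your ``conditionally iid $Z_i$'' and the von Bahr--Esseen step make sense. Repairing this by pinching every $A_i$-block replaces $v^s$ by a factor that is polynomial in $m$, which (as you note) destroys the bound in the intended application. Permutation symmetry does not help here, because the single-block pinching breaks it, and symmetrizing the pinching over blocks is no longer a pinching. Even granting commutativity, the calculus inequality plus von Bahr--Esseen would leave an extra multiplicative constant, so you would not recover the stated constant $v^s/((\ln 2)s)$.

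The paper's proof sidesteps the whole issue by reducing to a single copy \emph{before} pinching. Using the permutation symmetry of $\tau$ and $\omega$, one has the exact identity $D(\tau\|\omega)=D(\xi_1\|\omega)-D(\xi_1\|\tau)$ with $\xi_1=\rho_{RA_1}\ox\sigma^{\ox(m-1)}$; data processing on the second term (tracing out $A_2\cdots A_m$) then gives
\[
D(\tau\|\omega)\leq \tr\Big[\rho_{RA}\Big(\log\big(\tfrac{1}{m}\rho_{RA}+\tfrac{m-1}{m}\rho_R\ox\sigma_A\big)-\log(\rho_R\ox\sigma_A)\Big)\Big],
\]
an expression living entirely on one $RA$-block. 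Only now is the pinching inequality applied (to $\rho_{RA}$ inside the logarithm, via operator monotonicity of $\log$), incurring exactly the single factor $v$, after which the scalar bound $\ln(1+x)\leq x^s/s$ finishes the job. The structural point you are missing is this symmetry-based collapse to one copy; your approach tries to control the full $m$-system moment $\tr[T^{1+s}]$ directly, which forces the pinching to confront non-commutativity on all blocks simultaneously.
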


\begin{Proof}
We use the shorthand $\xi_i \equiv \rho_{RA_i}\ox \big[\sigma^{\ox (m-1)}
\big]_{A^m/A_i}$ for simplicity and start with
\begin{align*}
     & D\Big(\tau_{RA_1A_2\cdots A_m}\big\| \rho_R \ox (\sigma^{\ox m})_{A^m}\Big) \\
   = &\tr\Big[\Big(\frac{1}{m}\sum_{i}\xi_i\Big) \Big(\log\big(\frac{1}{m}
      \sum_{i}\xi_i\big)-\log\big(\rho_R \ox (\sigma^{\ox m})_{A^m}\big)\Big)\Big] \\
   = &\tr\Big[\xi_1\Big(\log\big(\frac{1}{m}\sum_{i}\xi_i\big)-
      \log\big(\rho_R \ox (\sigma^{\ox m})_{A^m}\big)\Big)\Big]                \\
   = &\tr\Big[\xi_1\Big(\log\xi_1-\log\big(\rho_R\ox (\sigma^{\ox m})_{A^m}\big)\Big)\Big]
      -\tr\Big[\xi_1\Big(\log\xi_1-\log\big(\frac{1}{m}\sum_{i}\xi_i\big)\Big)\Big] \\
   = &D\Big(\rho_{RA_1}\ox (\sigma^{\ox (m-1)})_{A_2\cdots A_m}\big\|\rho_R\ox (\sigma^{\ox m})_{A^m}\Big)
     - D\Big(\rho_{RA_1}\ox (\sigma^{\ox (m-1)})_{A_2\cdots A_m}\big\| \frac{1}{m}\sum_{i}\xi_i\Big) \\
\leq & D\big(\rho_{RA_1} \| \rho_R \ox \sigma_{A_1}\big) - D\Big(\rho_{RA_1}\big\|
      \frac{1}{m}\rho_{RA_1}+\frac{m-1}{m}\rho_{R}\ox\sigma_{A_1}\Big) \\
   =&\tr\Big[\rho_{RA}\Big(\log\big(\frac{1}{m}\rho_{RA}+\frac{m-1}{m}
      \rho_{R}\ox\sigma_{A}\big)-\log(\rho_{R}\ox\sigma_{A})\Big)\Big],
\end{align*}
where the third line is due to the symmetry of the states
$\frac{1}{m}\sum_{i}\xi_i$ and $\rho_R \ox (\sigma^{\ox m})_{A^m}$ over systems
$A_1,A_2,\cdots,A_m$, and for the inequality we have used the data
processing inequality for relative entropy under partial trace. Now
employ the pinching map $\mc{E}_{\rho_{R}\ox\sigma_{A}}$ and write $\bar{\rho}_{RA}:=\mc{E}_{\rho_{R}\ox\sigma_{A}}(\rho_{RA})$.
The pinching inequality together with the operator monotonicity of the
logarithm gives
\[
\log\Big(\frac{1}{m}\rho_{RA}+\frac{m-1}{m}\rho_{R}\ox\sigma_{A}\Big)\leq
\log\Big(\frac{v}{m}\bar{\rho}_{RA}+\rho_{R}\ox\sigma_{A}\Big).
\]
Making use of this, we proceed as follows.
\[\begin{split}
     & D\Big(\tau_{RA_1A_2\cdots A_m}\big\| \rho_R \ox (\sigma^{\ox m})_{A^m}\Big) \\
\leq & \tr\left[\rho_{RA}\left(\log\big(\frac{v}{m}\bar{\rho}_{RA}+
       \rho_{R}\ox\sigma_{A}\big)-\log(\rho_{R}\ox\sigma_{A})\right)\right] \\
   = & \tr\left[\bar{\rho}_{RA}\log\left(\frac{v}{m}\bar{\rho}_{RA}
       (\rho_{R}\ox\sigma_{A})^{-1}+\1_{RA}\right)\right]              \\
\leq & \frac{1}{(\ln2)s}\tr\Big[\bar{\rho}_{RA} \Big( \frac{v^s}{m^s}
       (\bar{\rho}_{RA})^s (\rho_{R}\ox\sigma_{A})^{-s}\Big)\Big]  \\
   = & \frac{v^s}{(\ln2)s}\exp\Big\{-s\big(\log m-D_{1+s}
       (\bar{\rho}_{RA}\| \rho_R\ox\sigma_A)\big)\Big\}                \\
\leq & \frac{v^s}{(\ln2)s}\exp\Big\{-s\big(\log m-D_{1+s}
       (\rho_{RA}\| \rho_R\ox\sigma_A)\big)\Big\},
\end{split}\]
where for the third line note that the density matrices $\bar{\rho}_{RA}$
and $\rho_{R}\ox\sigma_{A}$ commute, for the fourth line we have used the
inequality $\ln(1+x)\leq \frac{1}{s}x^s$ for $x\geq 0$ and $0<s\leq 1$,
and the last line is by the data processing inequality for the sandwiched
R{\'e}nyi divergence~\cite{WWY2014strong, MDSFT2013on, FrankLieb2013monotonicity,
Beigi2013sandwiched}.
\end{Proof}

\subsection{Smoothing of max-information and conditional min-entropy}
  \label{subsec:MI-CE}
Recall that the smooth max-relative entropy is defined for $\rho \in
\mc{S}(\mc{H})$, $\sigma \in \mc{P}(\mc{H})$, and $0\leq\delta<1$
as~\cite{Datta2009min}
\[
D^\delta_{\rm max}(\rho \| \sigma):=\min\left\{\lambda:\exists\,\tilde{\rho}\in
\mc{S}_{\leq}(\mc{H}) {\rm \ \ s.t.\ \ } P(\tilde{\rho}, \rho) \leq \delta,
\,\tilde{\rho} \leq 2^\lambda\sigma\right\}.
\]
In~\cite{LYH2023tight} we have introduced the smoothing quantity for
the max-relative entropy
\[
\delta(\rho \| \sigma,\lambda):=\min\left\{ P(\tilde{\rho}, \rho): \tilde{\rho}\in
\mc{S}_{\leq}(\mc{H}),\,\tilde{\rho} \leq 2^\lambda\sigma\right\}
\]
and obtained
\begin{equation}
  \label{eq:exp-mre}
\lim_{n\rightarrow\infty} \frac{-1}{n}\log \delta\big(\rho^{\ox n}\big\|
\sigma^{\ox n}, nr\big) = \frac{1}{2}\sup_{s \geq 0}\left\{ s\big(r-D_{1+s}
(\rho \| \sigma)\big)\right\}.
\end{equation}

In this section, we are interested in the smooth max-information and the
smooth conditional min-entropy defined for $\rho_{AB}\in\mc{S}(AB)$ and
$0\leq\delta<1$, respectively as~\cite{Renner2005security,BCR2011the,
ABJT2020partially}
\begin{align*}
   I^\delta_{\rm max}(A:B)_\rho
&:=\ \min_{\sigma_B\in\mc{S}(B)}D^\delta_{\rm max}(\rho_{AB}\|\rho_A\ox\sigma_B),\\
   H^\delta_{\rm min}(A|B)_\rho
&:=-\min_{\sigma_B\in\mc{S}(B)} D^\delta_{\rm max}(\rho_{AB}\|\1_A\ox\sigma_B).
\end{align*}
Our purpose is to derive the asymptotic exponents for the smoothing of these
two one-shot entropies. Note that in the classical case where $\rho_{AB}$
is a probability distribution, this kind of exponential analysis was done
for the conditional min-entropy based on the trace distance~\cite{Hayashi2016security}.
In doing so, we define the corresponding smoothing quantities.
\begin{definition}
  \label{def:smoothing-q}
Let $\rho_{AB}\in\mc{S}(AB)$ and $\lambda\in\mathbb{R}$. The smoothing quantity
for the max-information and for the conditional min-entropy is defined,
respectively, as
\begin{align}
  \label{eq:def-smi}
  \delta_{A:B}(\rho_{AB},\lambda)&:=
\min_{\sigma_B\in\mc{S}(B)} \delta(\rho_{AB}\|\rho_A\ox\sigma_B,\lambda),\\
  \label{eq:def-sce}
  \delta_{A|B}(\rho_{AB},\lambda)&:=
\min_{\sigma_B\in\mc{S}(B)} \delta(\rho_{AB}\|\1_A\ox\sigma_B,-\lambda).
\end{align}
\end{definition}
Note that alternative expressions for the smoothing quantities are
\begin{align}
  \label{eq:alt-smi}
  \delta_{A:B}(\rho_{AB},\lambda)&=
\min\left\{\delta\in\mathbb{R}:I^\delta_{\rm max}(A:B)_\rho\leq\lambda\right\},\\
  \label{eq:alt-sce}
  \delta_{A|B}(\rho_{AB},\lambda)&=
\min\left\{\delta\in\mathbb{R}:H^\delta_{\rm min}(A|B)_\rho\geq\lambda\right\},
\end{align}
from which the relation to the smooth max-information and smooth conditional
min-entropy is easily seen. The main result is stated in the following.
\begin{theorem}
 \label{theorem:exp-mice}
For $\rho_{AB}\in\mc{S}(AB)$ and $r\in\mathbb{R}$, we have
\begin{align}
  \label{eq:exp-mi}
\lim_{n\rar\infty}\frac{-1}{n}\log\,\delta_{A^n:B^n}\left(\rho_{AB}^{\ox n},nr\right)
&=\frac{1}{2}\sup_{s \geq 0}\left\{ s\big(r-I_{1+s}(A:B)_\rho\big)\right\},\\
  \label{eq:exp-ce}
\lim_{n\rar\infty}\frac{-1}{n}\log\,\delta_{A^n|B^n}\left(\rho_{AB}^{\ox n},nr\right)
&=\frac{1}{2}\sup_{s \geq 0}\left\{ s\big(H_{1+s}(A|B)_\rho-r\big)\right\}.
\end{align}
\end{theorem}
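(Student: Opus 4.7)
The plan is to derive both asymptotic formulas by combining the scalar smoothing exponent~\eqref{eq:exp-mre} from~\cite{LYH2021tight} with an optimization over the free parameter $\sigma_B$. The two halves are parallel: replacing $\rho_A^{\ox n}$ by $\1_{A^n}$ and accounting for the sign in~\eqref{eq:def-sce} converts the mutual-information argument into the conditional-min-entropy one, so I only describe the max-information case.

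For achievability (the lower bound on the exponent), the definition of $\delta_{A^n:B^n}$ gives $\delta_{A^n:B^n}(\rho_{AB}^{\ox n}, nr)\leq \delta(\rho_{AB}^{\ox n}\| \rho_A^{\ox n}\ox\sigma_B^{\ox n}, nr)$ for any $\sigma_B\in\mc{S}(B)$. Applying~\eqref{eq:exp-mre} to the right-hand side and observing that the $\liminf$ on the left is independent of $\sigma_B$, I may, for each fixed $s\geq 0$, plug in the minimizer $\sigma_B^*(s)$ of $I_{1+s}(A:B)_\rho$, obtaining $\liminf_n\frac{-1}{n}\log\delta_{A^n:B^n}(\rho_{AB}^{\ox n},nr)\geq \frac{s}{2}\big(r-I_{1+s}(A:B)_\rho\big)$. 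Taking the supremum over $s\geq 0$ yields the achievability half.

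For the converse, the obstacle is that the optimal $\sigma_{B^n}$ in $\delta_{A^n:B^n}(\rho_{AB}^{\ox n},nr)$ need not be i.i.d., so~\eqref{eq:exp-mre} does not apply directly. I would handle this via symmetrization and the universal symmetric state: since $\rho_{AB}^{\ox n}$ is permutation-invariant and the purified distance is jointly convex (via joint concavity of the fidelity), the minimization can be restricted to permutation-invariant $\sigma_{B^n}$, for which Lemma~\ref{lemma:unisym} gives $\sigma_{B^n}\leq g_{n,|B|}\,\omega^{(n)}_{B^n}$. Hence
\[
\delta_{A^n:B^n}(\rho_{AB}^{\ox n},nr)\ \geq\ \delta\big(\rho_{AB}^{\ox n}\,\big\|\,\rho_A^{\ox n}\ox\omega^{(n)}_{B^n},\,nr+\log g_{n,|B|}\big),
\]
and by anti-monotonicity of the sandwiched R\'enyi divergence in the second argument combined with additivity of $I_{1+s}$ under tensor products, $D_{1+s}\big(\rho_{AB}^{\ox n}\,\big\|\,\rho_A^{\ox n}\ox\omega^{(n)}_{B^n}\big)=nI_{1+s}(A:B)_\rho+O(\log n)$ for every $s\geq 0$. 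The hardest remaining step, which I expect to be the technical crux, is to reproduce the scalar converse of~\cite{LYH2021tight} for this non-i.i.d.\ reference. My plan is to pinch $\rho_{AB}^{\ox n}$ with respect to the spectral decomposition of $\rho_A^{\ox n}\ox\omega^{(n)}_{B^n}$; by Lemma~\ref{lemma:unisym} this reference has only polynomially many distinct eigenvalues, so the pinching inequality~\eqref{equ:pinchingineq} costs only a polynomial factor, and after pinching the two states are simultaneously diagonalizable. The scalar argument then transfers to this essentially commuting pinched pair, with the polynomial corrections absorbed as $o(n)$ terms, producing the matching upper bound $\limsup_n\frac{-1}{n}\log\delta_{A^n:B^n}(\rho_{AB}^{\ox n},nr)\leq\frac{1}{2}\sup_{s\geq 0}\big\{s\big(r-I_{1+s}(A:B)_\rho\big)\big\}$.
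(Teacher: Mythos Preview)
Your proposal is correct and follows essentially the same route as the paper: achievability via i.i.d.\ $\sigma_B$ and Eq.~\eqref{eq:exp-mre}, converse via permutation-symmetrization, the universal symmetric state of Lemma~\ref{lemma:unisym}, and pinching against $\rho_A^{\ox n}\ox\omega^{(n)}_{B^n}$. The step you flag as the technical crux is precisely what the paper isolates as Proposition~\ref{prop:usym-Imax}, proved via the G\"artner--Ellis theorem after pinching reduces the problem to commuting operators.
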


\begin{Proof}
Since the r.h.s. of Eqs.~(\ref{eq:def-smi}) and~(\ref{eq:def-sce}) are
similar, it suffices to prove, for any $M\in\mc{P}(A)$ and $r\in\mathbb{R}$,
\beq\label{eq:exp-mices}
\lim_{n\rar\infty}\frac{-1}{n}\log\!\min_{\sigma^{(n)}_{B^n}\in\mc{S}(B^n)}\!\!
\delta\left(\rho_{AB}^{\ox n}\big\|M_A^{\ox n}\ox\sigma^{(n)}_{B^n},nr\right)
\!=\!\frac{1}{2}\sup_{s \geq 0}\left\{s\big(r-\!\!\!\min_{\sigma_B\in\mc{S}(B)}
   \!\!\!D_{1+s}(\rho_{AB}\|M_A\ox\sigma_B)\big)\right\}.
\eeq
Substituting $\rho_A$ for $M_A$ results in Eq.~(\ref{eq:exp-mi}), and the
substitution of $M_A\leftarrow\1$ and $r\leftarrow -r$ recovers
Eq.~(\ref{eq:exp-ce}).

Observing that
\[
     \min_{\sigma^{(n)}_{B^n}\in\mc{S}(B^n)}
\delta\left(\rho_{AB}^{\ox n}\big\|M_A^{\ox n}\ox\sigma^{(n)}_{B^n},nr\right)
\leq \min_{\sigma_B^{\ox n}\in\mc{S}(B^n)}
\delta\left(\rho_{AB}^{\ox n}\big\|M_A^{\ox n}\ox\sigma_B^{\ox n},nr\right),
\]
we obtain the "$\geq$" part of Eq.~(\ref{eq:exp-mices}) by invoking
Eq.~(\ref{eq:exp-mre}). That is,
\beq\label{eq:exp-mices-low}
\liminf_{n\rar\infty}\frac{-1}{n}\log\!\min_{\sigma^{(n)}_{B^n}\in\mc{S}(B^n)}\!\!
\delta\left(\rho_{AB}^{\ox n}\big\|M_A^{\ox n}\ox\sigma^{(n)}_{B^n},nr\right)
\!\geq\!\frac{1}{2}\sup_{s \geq 0}\left\{s\big(r-\!\!\!\min_{\sigma_B\in\mc{S}(B)}
   \!\!\!D_{1+s}(\rho_{AB}\|M_A\ox\sigma_B)\big)\right\}.
\eeq

Now we turn to the proof of the opposite direction. At first, we have
\beq\begin{split}\label{eq:mices-1}
&\min_{\sigma^{(n)}_{B^n}\in\mc{S}(B^n)}\delta\left(\rho_{AB}^{\ox n}
 \big\|M_A^{\ox n}\ox\sigma^{(n)}_{B^n},nr\right)  \\
=&\min\left\{P\big(\rho_{AB}^{\ox n}, \gamma^{(n)}_{A^nB^n}\big):
  \gamma^{(n)}_{A^nB^n}\in\mc{S}_\leq(A^nB^n), \big(\exists\ \sigma^{(n)}_{B^n}\in\mc{S}(B^n)\big)\
  \gamma^{(n)}_{A^nB^n}\leq 2^{nr}M_A^{\ox n}\ox\sigma^{(n)}_{B^n}\right\} \\
=&\min\left\{P\big(\rho_{AB}^{\ox n}, \gamma^{(n)}_{A^nB^n}\big):
  \gamma^{(n)}_{A^nB^n}\in\mc{S}^{\rm sym}_\leq(A^nB^n), \big(\exists\ \sigma^{(n)}_{B^n}
                                                   \in\mc{S}^{\rm sym}(B^n)\big)\
  \gamma^{(n)}_{A^nB^n}\leq 2^{nr}M_A^{\ox n}\ox\sigma^{(n)}_{B^n}\right\} \\
\geq &\min\left\{P\big(\rho_{AB}^{\ox n}, \gamma^{(n)}_{A^nB^n}\big):
  \gamma^{(n)}_{A^nB^n}\in\mc{S}_\leq(A^nB^n),
  \gamma^{(n)}_{A^nB^n}\leq 2^{nr}g_{n, |B|}M_A^{\ox n}\ox\omega^{(n)}
         _{B^n}\right\}, \\
\end{split}\eeq
where the second line is by definition, in the third line we restrict the
minimization to over the symmetric states by making a random permutation
and this makes no difference because random permutation operation as a
CPTP map keeps operator inequality and does not increase the purified
distance, in the fourth line we have employed the universal symmetric
state $\omega^{(n)}_{B^n}$ and made use of Lemma~\ref{lemma:unisym}.
Let $\gamma^{(n)*}_{A^nB^n}$ be the optimal state at which the last
line of Eq.~(\ref{eq:mices-1}) achieves the minimum. Let $\mc{E}^n
\equiv \mc{E}_{M_A^{\ox n}\ox\omega^{(n)}_{B^n}}$ be the pinching map
associated with $M_A^{\ox n}\ox\omega^{(n)}_{B^n}$. Then we have
\beq\begin{split}\label{eq:mices-2}
     \min_{\sigma^{(n)}_{B^n}\in\mc{S}(B^n)}\delta\left(\rho_{AB}^{\ox n}
     \big\|M_A^{\ox n}\ox\sigma^{(n)}_{B^n},nr\right)
\geq &P\left(\rho_{AB}^{\ox n}, \gamma^{(n)*}_{A^nB^n}\right)\\
\geq &P\left(\mc{E}^n\big(\rho_{AB}^{\ox n}\big), \mc{E}^n
       \big(\gamma^{(n)*}_{A^nB^n}\big)\right),
\end{split}\eeq
and in addition, $\gamma^{(n)*}_{A^nB^n}\leq 2^{nr} g_{n, |B|}M_A^{\ox n}
\ox\omega^{(n)}_{B^n}$, which after the CPTP map $\mc{E}^n$ being applied
to both sides yields
\beq\label{eq:mices-3}
\mc{E}^n\big(\gamma^{(n)*}_{A^nB^n}\big)\leq 2^{nr} g_{n, |B|}M_A^{\ox n}
\ox\omega^{(n)}_{B^n}.
\eeq

To proceed, we construct a projective measurement $\{\Pi_n,\1-\Pi_n\}$
with
\[
\Pi_n:=\Big\{\mc{E}^n\big(\rho_{AB}^{\ox n}\big)\geq 9\cdot2^{nr}
        g_{n, |B|}M_A^{\ox n}\ox\omega^{(n)}_{B^n}\Big\}
\]
and set
\begin{align}
  \label{eq:p_n}
p_n=&\tr\big(\mc{E}^n(\rho_{AB}^{\ox n})\Pi_n\big), \\
  \label{eq:q_n}
q_n=&\tr\big(\mc{E}^n(\gamma^{(n)*}_{A^nB^n})\Pi_n\big).
\end{align}
By Eq.~(\ref{eq:mices-3}) and the construction of $\Pi_n$, it can be verified
that
\[q_n \leq \frac{1}{9} p_n.\]
So letting $\mc{M}^n:X\mapsto\tr(X\Pi_n)\proj{0}+\tr(X(\1-\Pi_n))\proj{1}$
be the measurement map associated with $\{\Pi_n,\1-\Pi_n\}$, we are able
to obtain
\beq\begin{split}\label{eq:mices-4}
     &\min_{\sigma^{(n)}_{B^n}\in\mc{S}(B^n)}\delta\left(\rho_{AB}^{\ox n}
       \big\|M_A^{\ox n}\ox\sigma^{(n)}_{B^n},nr\right)               \\
\geq &P\left(\mc{M}^n\circ\mc{E}^n\big(\rho_{AB}^{\ox n}\big),
       \mc{M}^n\circ\mc{E}^n\big(\gamma^{(n)*}_{A^nB^n}\big)\right)   \\
  =  &\sqrt{1-\left(\sqrt{p_n}\sqrt{q_n}+\sqrt{1-p_n}
       \sqrt{\tr(\gamma^{(n)*}_{A^nB^n})-q_n}\right)^2}               \\
\geq &\sqrt{p_n\big(\frac{1}{3}-\frac{p_n}{9}\big)}    \\
\geq &\frac{\sqrt{2p_n}}{3},
\end{split}\eeq
where the second line follows from Eq.~(\ref{eq:mices-2}) and the data
processing inequality for purified distance, for the fourth line we
have used $q_n \leq \frac{1}{9} p_n$ and $\tr(\gamma^{(n)*}_{A^nB^n})
-q_n\leq1$, and for the last line note that $p_n\leq 1$.

We prove later in Proposition~\ref{prop:usym-Imax} the asymptotics for
$p_n$ (cf. Eq.~(\ref{eq:p_n}) for its expression), that is,
\beq\label{eq:asym-p}
 \lim_{n\rar\infty}\frac{-1}{n}\log p_n
=\sup_{s \geq 0}\left\{s\big(r-\min_{\sigma_B\in\mc{S}(B)}
                    D_{1+s}(\rho_{AB}\|M_A\ox\sigma_B)\big)\right\}.
\eeq
Then it follows from Eqs.~(\ref{eq:mices-4}) and (\ref{eq:asym-p})
that
\beq\label{eq:exp-mices-up}
\limsup_{n\rar\infty}\frac{-1}{n}\log\!\min_{\sigma^{(n)}_{B^n}\in
\mc{S}(B^n)}\!\!\delta\left(\rho_{AB}^{\ox n}\big\|M_A^{\ox n}\ox
\sigma^{(n)}_{B^n},nr\right)\!\leq\!\frac{1}{2}\sup_{s \geq 0}\left\{
s\big(r-\!\!\!\min_{\sigma_B\in\mc{S}(B)}\!\!\!D_{1+s}(\rho_{AB}\|M_A
\ox\sigma_B)\big)\right\}.
\eeq

Eventually, Eqs.~(\ref{eq:exp-mices-up}) and (\ref{eq:exp-mices-low})
together lead to Eq.~(\ref{eq:exp-mices}), and we are done.
\end{Proof}

\begin{proposition}
  \label{prop:usym-Imax}
Let $\rho_{AB}\in\mc{S}(AB)$ and $M_A\in\mc{P}(A)$ be such that $\supp(\rho_A)
\subseteq\supp(M_A)$. Let $\omega^{(n)}_{B^n}$ be the universal symmetric
state. Let $\mc{E}^n\equiv \mc{E}_{M_A^{\ox n}\ox\omega^{(n)}_{B^n}}$ be
the pinching map, and let $f(n)\geq 1$ be any sub-exponential function of $n$.
For given $r\in\mathbb{R}$, consider the sequence
\[
p_n:=\tr\left[\mc{E}^n\big(\rho_{AB}^{\ox n}\big)
     \Big\{\mc{E}^n\big(\rho_{AB}^{\ox n}\big)\geq f(n)2^{nr}
     M_A^{\ox n}\ox\omega^{(n)}_{B^n}\Big\} \right]
\]
for $n\in\mathbb{N}$. We have
\beq\label{eq:usym-Imax}
 \lim_{n\rar\infty}\frac{-1}{n}\log p_n
=\sup_{s \geq 0}\left\{s\big(r-\min_{\sigma_B\in\mc{S}(B)}
 D_{1+s}(\rho_{AB}\|M_A\ox\sigma_B)\big)\right\}.
\eeq
\end{proposition}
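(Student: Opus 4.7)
The key structural observation is that, by construction of the pinching $\mc{E}^n$, the operators $\mc{E}^n(\rho_{AB}^{\ox n})$ and $M_A^{\ox n}\ox\omega^{(n)}_{B^n}$ commute. Diagonalising them jointly with eigenvalues $\{p_k^{(n)}\}$ and $\{q_k^{(n)}\}$ respectively, $p_n$ becomes the classical tail probability $\sum_{k:\,p_k^{(n)}\geq f(n)2^{nr}q_k^{(n)}}p_k^{(n)}$. The plan is to apply a Chernoff bound for the ``$\geq$'' direction of Eq.~\eqref{eq:usym-Imax}, and the classical Cram\'er / G\"artner--Ellis lower bound (via a tilted-distribution argument) for the matching ``$\leq$'' direction.

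For the ``$\geq$'' direction (upper bound on $p_n$), I would use the commuting-operator inequality $\{X\geq Y\}\leq X^{s}Y^{-s}$ for any $s\geq 0$ to obtain
\[
  p_n\leq (f(n))^{-s}2^{-nrs}\tr\big[\mc{E}^n(\rho_{AB}^{\ox n})^{1+s}
  (M_A^{\ox n}\ox\omega^{(n)}_{B^n})^{-s}\big],
\]
which, by commutation, equals $(f(n))^{-s}2^{-nrs}\cdot 2^{sD_{1+s}(\mc{E}^n(\rho_{AB}^{\ox n})\|M_A^{\ox n}\ox\omega^{(n)}_{B^n})}$. Apply data processing under the CPTP map $\mc{E}^n$ (which fixes $M_A^{\ox n}\ox\omega^{(n)}_{B^n}$) to replace $\mc{E}^n(\rho_{AB}^{\ox n})$ by $\rho_{AB}^{\ox n}$. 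For any $\sigma_B\in\mc{S}(B)$, Lemma~\ref{lemma:unisym} gives $\sigma_B^{\ox n}\leq g_{n,|B|}\omega^{(n)}_{B^n}$; combining this with the anti-monotonicity of $D_\alpha(\cdot\|\cdot)$ in the second argument for $\alpha>1$, the scaling identity $D_\alpha(\rho\|c\sigma)=D_\alpha(\rho\|\sigma)-\log c$, and additivity over tensor products yields $D_{1+s}(\rho_{AB}^{\ox n}\|M_A^{\ox n}\ox\omega^{(n)}_{B^n})\leq n D_{1+s}(\rho_{AB}\|M_A\ox\sigma_B)+\log g_{n,|B|}$. Minimising over $\sigma_B$, taking $-\log p_n/n$ with $n\to\infty$ (the polynomial $g_{n,|B|}$ and sub-exponential $f(n)$ contribute $o(1)$), and taking $\sup_{s\geq 0}$ gives the required lower bound on the exponent.

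For the matching ``$\leq$'' direction, the exponent of the tail probability $p_n$ is, by the classical G\"artner--Ellis theorem, the Legendre transform of $\Lambda(s):=\lim_n\frac{s}{n}D_{1+s}(\mc{E}^n(\rho_{AB}^{\ox n})\|M_A^{\ox n}\ox\omega^{(n)}_{B^n})$. To identify $\Lambda(s)=s\min_{\sigma_B}D_{1+s}(\rho_{AB}\|M_A\ox\sigma_B)$, combine the upper bound from the previous paragraph with: (i) the pinching inequality $\rho_{AB}^{\ox n}\leq v_n\mc{E}^n(\rho_{AB}^{\ox n})$, where $v_n$ is polynomial in $n$ (bounding the number of distinct eigenvalues of $M_A^{\ox n}\ox\omega^{(n)}_{B^n}$), which via the scaling property yields $D_{1+s}(\mc{E}^n(\rho_{AB}^{\ox n})\|\,\cdot\,)\geq D_{1+s}(\rho_{AB}^{\ox n}\|\,\cdot\,)-O(\log n)$; and (ii) the additivity $\min_{\tau^{(n)}\in\mc{S}(B^n)}D_{1+s}(\rho_{AB}^{\ox n}\|M_A^{\ox n}\ox\tau^{(n)})=n\min_{\sigma_B}D_{1+s}(\rho_{AB}\|M_A\ox\sigma_B)$, which applied to the candidate $\tau^{(n)}=\omega^{(n)}_{B^n}$ provides the needed lower bound on $\Lambda$. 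With $\Lambda(s)$ in hand, I would perform the standard tilted-distribution construction at an optimiser $s^*$ of $s\mapsto sr-\Lambda(s)$: change measure to $\tilde p_k\propto (p_k^{(n)})^{1+s^*}(q_k^{(n)})^{-s^*}$, use a Chebyshev/second-moment argument to show that $\tfrac{1}{n}\log(p_k^{(n)}/q_k^{(n)})$ concentrates near $r$ under $\tilde p$, and invert the tilt to extract $p_n\geq 2^{-n[s^*r-\Lambda(s^*)]-o(n)}$. The main obstacle is step~(ii): while the case $M_A=\rho_A$ reduces to the known additivity of the sandwiched R\'enyi mutual information, the general $M_A$ version requires a parallel variational argument and careful analysis of the minimising sequence on $B^n$, together with the symmetry of $\rho_{AB}^{\ox n}$ and $M_A^{\ox n}$ under permutations to reduce to product optimisers.
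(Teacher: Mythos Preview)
Your plan follows the same architecture as the paper's proof: a Chernoff bound for the upper bound on $p_n$, and a large-deviations/tilting argument for the matching lower bound, both fed by the identification $\lim_n\frac{1}{n}D_{1+s}(\mc{E}^n(\rho_{AB}^{\ox n})\|M_A^{\ox n}\ox\omega^{(n)}_{B^n})=\min_{\sigma_B}D_{1+s}(\rho_{AB}\|M_A\ox\sigma_B)$. The paper packages that identification as a cited lemma of Hayashi--Tomamichel (Lemma~\ref{lem:additive-Is}), whose proof is exactly your steps (data processing plus the universal-state domination for the upper half; pinching inequality plus the single-letterisation of $\min_{\tau^{(n)}}D_{1+s}$ for the lower half). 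Your step~(ii) is not an obstacle: the additivity holds for arbitrary $M_A$ because the minimisation is only over the $B$ factor, and the explicit Sibson-type minimiser (or the argument of Hayashi--Tomamichel) works verbatim with $M_A$ in place of $\rho_A$ or $\1_A$.

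There is one genuine technical gap in your converse. The sequence $\frac{1}{n}\log(p_k^{(n)}/q_k^{(n)})$ is \emph{not} i.i.d., so a Chebyshev/second-moment bound under the tilted measure would require controlling $\frac{1}{n^2}\Lambda_n''(s^*)$, which you do not have: pointwise convergence of convex functions $\frac{1}{n}\Lambda_n\to\Lambda$ says nothing about second derivatives. The paper instead invokes the G\"artner--Ellis theorem in Chen's form (Lemma~\ref{lem:Gartner-Ellis}), whose hypothesis is only the \emph{differentiability} of the limiting CGF $\Lambda(s)=s\big(\min_{\sigma_B}D_{1+s}(\rho_{AB}\|M_A\ox\sigma_B)-r\big)$ on an interval; that differentiability is a nontrivial analytic fact about the sandwiched R\'enyi quantity, also taken from Hayashi--Tomamichel, and it is precisely what makes the tilted law of large numbers go through without variance control. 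You should replace the Chebyshev step by a direct appeal to G\"artner--Ellis and record the needed differentiability. Finally, the paper treats the boundary regimes $r\leq D(\rho_{AB}\|M_A\ox\rho_B)$ and $r>\min_{\sigma_B}D_{\max}(\rho_{AB}\|M_A\ox\sigma_B)$ by separate monotonicity and support arguments; your plan omits these, and G\"artner--Ellis alone does not cover them.
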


In order to prove Proposition~\ref{prop:usym-Imax}, we employ two technical
results from the literature. The following lemma is due to Hayashi and
Tomamichel~\cite{HayashiTomamichel2016correlation}.
\begin{lemma}
  \label{lem:additive-Is}
Let $\rho_{AB}$, $M_A$, $\omega^{(n)}_{B^n}$ and $\mc{E}^n$ be given in
Proposition~\ref{prop:usym-Imax}. For any $s\geq \frac{1}{2}$ we have
\[
\frac{1}{n}D_s\Big(\mc{E}^n(\rho_{AB}^{\ox n})\big\|M_A^{\ox n}\ox
\omega^{(n)}_{B^n}\Big) = \min_{\sigma_B\in\mc{S}(B)}D_s(\rho_{AB}\|M_A
\ox\sigma_B)+O\Big(\frac{\log n}{n}\Big).
\]
The implicit constants of the last term are independent of $s$.
\end{lemma}

\bigskip
The following lemma is a variant of the G\"{a}rtner-Ellis theorem of large
deviation theory. It is due to Chen~\cite{Chen2000generalization} and
reformulated in~\cite{HayashiTomamichel2016correlation}.
\begin{lemma}
  \label{lem:Gartner-Ellis}
Let $\{Z_n\}_{n\in\mathbb{N}}=:\mc{Z}$ be a sequence of random variables,
with the asymptotic cumulant generating function
\[
\Lambda_{\mc{Z}}(t)
:=\lim_{n\rar\infty}\frac{1}{n}\log\mathbb{E}\big[2^{ntZ_n}\big].
\]
If the above limit exists and the function $t\mapsto\Lambda_{\mc{Z}}(t)$ is
differentiable in some interval $(a,b)$, then for any
$z\in\big(\lim\limits_{t\searrow a}\Lambda_{\mc{Z}}'(t),\lim\limits_
{t\nearrow b}\Lambda_{\mc{Z}}'(t)\big)$, it holds that
\[
\limsup_{n\rar\infty}\frac{-1}{n}\log\,{\rm Pr}\{Z_n\geq z\}
\leq \sup_{t\in(a,b)}\big\{zt-\Lambda_{\mc{Z}}(t)\big\}.
\]
\end{lemma}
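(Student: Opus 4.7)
The plan is to use the classical change-of-measure (tilting) argument, which is the standard lower-bound route to the Gärtner--Ellis theorem. Fix $z$ inside the stated open interval $\big(\lim_{t\searrow a}\Lambda'_{\mc{Z}}(t),\lim_{t\nearrow b}\Lambda'_{\mc{Z}}(t)\big)$. Since $\Lambda_{\mc{Z}}$ is convex (as a limit of log moment generating functions) and differentiable on $(a,b)$ by hypothesis, the derivative $\Lambda'_{\mc{Z}}$ is continuous and nondecreasing, so by the intermediate value theorem there exists $t^{\ast}\in(a,b)$ with $\Lambda'_{\mc{Z}}(t^{\ast})=z$. The Fenchel--Young identity at such a point gives $t^{\ast}z-\Lambda_{\mc{Z}}(t^{\ast})=\sup_{t\in(a,b)}\{tz-\Lambda_{\mc{Z}}(t)\}$, so it suffices to establish the bound with $t^{\ast}z-\Lambda_{\mc{Z}}(t^{\ast})$ on the right-hand side.

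Set $M_n(t):=\mathbb{E}[2^{ntZ_n}]$, so that $\tfrac1n\log M_n(t^{\ast})\to\Lambda_{\mc{Z}}(t^{\ast})$ by assumption, and introduce the tilted probability measure $\tilde{P}_{n,t^{\ast}}$ whose density against the original law of $Z_n$ is $2^{nt^{\ast}Z_n}/M_n(t^{\ast})$. The elementary inversion of this relation yields, for every $\epsilon>0$,
\[
\Pr\{Z_n\geq z\}\ \geq\ \Pr\{Z_n\in[z,z+\epsilon]\}\ =\ M_n(t^{\ast})\,\mathbb{E}_{\tilde{P}_{n,t^{\ast}}}\!\bigl[2^{-nt^{\ast}Z_n}\mathbf{1}_{[z,z+\epsilon]}(Z_n)\bigr]\ \geq\ M_n(t^{\ast})\,2^{-nt^{\ast}(z+\epsilon)}\,\tilde{P}_{n,t^{\ast}}\{Z_n\in[z,z+\epsilon]\}.
\]
Taking $-\tfrac1n\log$, using $\tfrac1n\log M_n(t^{\ast})\to\Lambda_{\mc{Z}}(t^{\ast})$, and then sending $n\to\infty$ and $\epsilon\downarrow 0$ gives
\[
\limsup_{n\to\infty}\tfrac{-1}{n}\log\Pr\{Z_n\geq z\}\ \leq\ t^{\ast}z-\Lambda_{\mc{Z}}(t^{\ast})\ +\ \limsup_{n\to\infty}\tfrac{-1}{n}\log\tilde{P}_{n,t^{\ast}}\{Z_n\in[z,z+\epsilon]\},
\]
so the whole proof reduces to showing that the tilted probability $\tilde{P}_{n,t^{\ast}}\{Z_n\in[z,z+\epsilon]\}$ does not decay exponentially in $n$.

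This sub-exponential concentration is the main technical obstacle. The right intuition is that under $\tilde{P}_{n,t^{\ast}}$ the law of $Z_n$ concentrates around $\Lambda'_{\mc{Z}}(t^{\ast})=z$, because one readily computes that the asymptotic cumulant generating function of $Z_n$ under $\tilde{P}_{n,t^{\ast}}$ is the shifted and centered function $s\mapsto\Lambda_{\mc{Z}}(t^{\ast}+s)-\Lambda_{\mc{Z}}(t^{\ast})$, which vanishes at $s=0$ with derivative $z$. Applying the easy (Markov/Chernoff) upper-bound half of Gärtner--Ellis to this tilted law separately at $z+\epsilon$ and at $z$ shows that, for $\epsilon$ small and $s$ in a small neighborhood of $0$, both $\tilde{P}_{n,t^{\ast}}\{Z_n\geq z+\epsilon\}$ and $\tilde{P}_{n,t^{\ast}}\{Z_n<z\}$ are bounded strictly below $1$ for all large $n$, forcing $\tilde{P}_{n,t^{\ast}}\{Z_n\in[z,z+\epsilon]\}$ to stay bounded below by a positive constant. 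Substituting this into the chain above gives the desired inequality $\limsup\tfrac{-1}{n}\log\Pr\{Z_n\geq z\}\leq t^{\ast}z-\Lambda_{\mc{Z}}(t^{\ast})\leq\sup_{t\in(a,b)}\{tz-\Lambda_{\mc{Z}}(t)\}$. Alternatively, since this is a reformulation of a classical theorem, one could simply invoke the version of Gärtner--Ellis stated in \cite{Chen2000generalization} as a black box; the tilting argument above merely verifies its hypotheses and conclusion directly.
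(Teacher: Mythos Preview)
The paper does not prove this lemma: it merely states it and attributes it to Chen~\cite{Chen2000generalization} (as reformulated in~\cite{HayashiTomamichel2016correlation}). So your closing sentence---invoke the cited result as a black box---is precisely what the paper does, and is a perfectly acceptable ``proof'' in this context.

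Your direct tilting argument is the right idea and is the standard route, but the concentration step as written has a gap. You tilt at the point $t^{\ast}$ with $\Lambda'_{\mc{Z}}(t^{\ast})=z$ and then assert that Chernoff bounds show both $\tilde{P}_{n,t^{\ast}}\{Z_n\geq z+\epsilon\}$ and $\tilde{P}_{n,t^{\ast}}\{Z_n<z\}$ are bounded strictly below $1$, ``forcing'' the middle interval to carry positive mass. Two problems: (i) the Chernoff bound for the left tail $\{Z_n<z\}$ gives only $\limsup\frac{1}{n}\log\tilde{P}_{n,t^{\ast}}\{Z_n<z\}\leq\Lambda_{\mc{Z}}(t^{\ast}+s)-\Lambda_{\mc{Z}}(t^{\ast})-sz$ for $s<0$, and by convexity with $\Lambda'_{\mc{Z}}(t^{\ast})=z$ this right-hand side is nonnegative---so you get no nontrivial bound on that tail; (ii) even if each tail were individually below $1$, that does not force their sum below $1$, which is what you need. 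The standard fix is to tilt not at $t^{\ast}$ but at some $t_\epsilon\in(a,b)$ with $\Lambda'_{\mc{Z}}(t_\epsilon)\in(z,z+\epsilon)$; then \emph{both} tails $\{Z_n<z\}$ and $\{Z_n>z+\epsilon\}$ decay exponentially under $\tilde{P}_{n,t_\epsilon}$, the middle probability tends to $1$, and the bound you obtain converges to $t^{\ast}z-\Lambda_{\mc{Z}}(t^{\ast})$ as $\epsilon\downarrow 0$ by continuity of $\Lambda_{\mc{Z}}$ and its derivative. (A separate minor point: your inequality $2^{-nt^{\ast}Z_n}\geq 2^{-nt^{\ast}(z+\epsilon)}$ on $[z,z+\epsilon]$ presumes $t^{\ast}\geq 0$; for general $(a,b)$ you should use the appropriate endpoint.)
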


With Lemmas~\ref{lem:additive-Is} and \ref{lem:Gartner-Ellis}, we are now
ready for the proof of Proposition~\ref{prop:usym-Imax}.

\medskip
\begin{Proofof}[Proposition~\ref{prop:usym-Imax}]
For simplicity, we use the shorthands
\[
P_n:=\mc{E}^n\big(\rho_{AB}^{\ox n}\big),\quad
Q_n:=M_A^{\ox n}\ox\omega^{(n)}_{B^n}.
\]
We will make frequent use of the facts that $P_n$ commutes with $Q_n$ as
well as that, by Lemma~\ref{lem:additive-Is},
\beq\label{eq:Ds-asympt}
\frac{1}{n}D_{1+s}(P_n\|Q_n)\longrightarrow \min_{\sigma_B\in\mc{S}(B)}
 D_{1+s}(\rho_{AB}\|M_A\ox\sigma_B), \quad \text{as}\ \ n\rar\infty.
\eeq
In what follows we will first show that the r.h.s. of Eq.~(\ref{eq:usym-Imax})
is an achievable rate for the exponential decreasing of $p_n$, and then
we prove that this is indeed the optimal rate.

Now we prove the achievability part. For any $s>0$,
\[\begin{split}
     \frac{-1}{n}\log p_n
  &= \frac{-1}{n}\log\tr\big[P_n\left\{P_n\geq f(n) 2^{nr}Q_n\right\}\big]\\
  &\geq\frac{-1}{n}\log\tr\Big[P_n\Big(\frac{P_n}{f(n) 2^{nr}Q_n}\Big)^s\Big]\\
  & = sr-\frac{1}{n}\log\tr\big[P_n^{1+s}Q_n^{-s}\big]+\frac{s\log f(n)}{n}\\
  & = sr-\frac{s}{n}D_{1+s}\big(P_n\|Q_n\big)+\frac{s\log f(n)}{n}\\
  &\overset{n\to\infty}{\longrightarrow} s\big(r-\min_{\sigma_B\in\mc{S}(B)}D_{1+s}(\rho_{AB}\|M_A\ox\sigma_B)\big).
\end{split}\]
Noticing that $s>0$ is arbitrary, we conclude from the above  estimation
that
\beq\label{eq:usym-Imax-low}
    \liminf_{n\rar\infty}\frac{-1}{n}\log p_n
\geq\sup_{s \geq 0}\left\{s\big(r-\min_{\sigma_B\in\mc{S}(B)}
                         D_{1+s}(\rho_{AB}\|M_A\ox\sigma_B)\big)\right\}.
\eeq

For the other direction, we follow the method
of~\cite{HayashiTomamichel2016correlation} and employ the G\"{a}rtner-Ellis
theorem. Let
\[
Z_n:=\frac{1}{n}\big(\log P_n-\log Q_n -nr -\log f(n) \big).
\]
We have
\beq\begin{split}\label{eq:usym-Imax-1}
p_n=&\tr\big[P_n\left\{P_n\geq f(n) 2^{nr}Q_n\right\}\big]  \\
   =&\tr\big[P_n\left\{Z_n\geq 0\right\}\big].
\end{split}\eeq
We see that the observable $Z_n$ commutes with the state $P_n$. So the
above Eq.~(\ref{eq:usym-Imax-1}) has a classical-probability-theoretic
explanation in which $Z_n$ is regarded as a random variable with $P_n$
being its distribution. Specifically, let the set of orthonormal vectors
$\{\ket{a_x^n}\}_x$ be the common eigenvectors of $Z_n$ and $P_n$. Then
$Z_n$ takes value $\bra{a^n_x}Z_n\ket{a^n_x}$ with probability
$\bra{a^n_x}P_n\ket{a^n_x}$. Now, Eq.~\eqref{eq:usym-Imax-1} translates
to
\[
p_n=\operatorname{Pr}\big\{Z_n\geq 0\big\},
\]
with respect to the probability distribution $\{\bra{a^n_x}P_n\ket{a^n_x}
\}_x$. To apply the G\"{a}rtner-Ellis theorem (in the form of
Lemma~\ref{lem:Gartner-Ellis}), we calculate the asymptotic cumulant
generating function of the sequence $\mc{Z}=\{Z_n\}_{n\in\mathbb{N}}$:
\beq\begin{split}\label{eq:cgf}
  \Lambda_{\mc{Z}}(s)
=&\lim_{n\rar\infty}\frac{1}{n}\log\mathbb{E}\big[2^{nsZ_n}\big]\\
=&\lim_{n\rar\infty}\frac{1}{n}\log\mathbb{E}
     \Big[\big(\frac{P_n}{Q_n}\big)^s(f(n))^{-s}2^{-nsr}\Big]\\
=&\lim_{n\rar\infty}\frac{1}{n}\log\tr
     \Big[P_n\big(\frac{P_n}{Q_n}\big)^s(f(n))^{-s}2^{-nsr}\Big]\\
=&\lim_{n\rar\infty}\Big(\frac{s}{n}D_{1+s}(P_n\|Q_n)-sr-
                                      \frac{s\log f(n)}{n}\Big) \\
=&\ s\big(\min_{\sigma_B\in\mc{S}(B)}
                         D_{1+s}(\rho_{AB}\|M_A\ox\sigma_B)-r\big).
\end{split}\eeq
Set $F(s):=\min\limits_{\sigma_B\in\mc{S}(B)}D_{1+s}(\rho_{AB}\|M_A\ox
\sigma_B)$ for later convenience. To proceed, we restrict our attention
to the case where $F(0)<r<F(+\infty)$ at the moment. It has been proven
in~\cite{HayashiTomamichel2016correlation} that the function $s\mapsto F(s)$
is monotonically increasing, continuously
differentiable, and $s\mapsto sF(s)$ is convex. As a result, we have that
$s\mapsto \Lambda_{\mc{Z}}(s)$ is differentialbe in $(0,+\infty)$, and
\begin{align*}
      \lim_{s\rar 0} \Lambda_{\mc{Z}}'(s)
&  =  D(\rho_{AB}\|M_A\ox\rho_B)-r=F(0)-r<0, \\
      (\exists s_0>0)\ \lim_{s\rar s_0} \Lambda_{\mc{Z}}'(s)
&\geq \min_{\sigma_B\in\mc{S}(B)}D_{1+s_0}(\rho_{AB}\|M_A\ox\sigma_B)-r
      =F(s_0)-r>0.
\end{align*}
Hence, Lemma~\ref{lem:Gartner-Ellis} applies, yielding for $r\in(F(0),
F(\infty))$,
\beq\begin{split}\label{eq:usym-Imax-up}
     \limsup_{n\rar\infty}\frac{-1}{n}\log p_n
   =&\limsup_{n\rar\infty}\frac{-1}{n}\log{\rm Pr}\big\{Z_n\geq 0\big\} \\
\leq&\!\sup_{0<s<s_0}\!\left\{s\big(r-\min_{\sigma_B\in\mc{S}(B)}
                         D_{1+s}(\rho_{AB}\|M_A\ox\sigma_B)\big)\right\}\\
\leq&\ \ \sup_{s>0}\ \ \left\{s\big(r-\min_{\sigma_B\in\mc{S}(B)}
                         D_{1+s}(\rho_{AB}\|M_A\ox\sigma_B)\big)\right\}.
\end{split}\eeq

Now combining Eqs.~(\ref{eq:usym-Imax-low}) and (\ref{eq:usym-Imax-up})
together lets us arrive at
\beq\label{eq:usym-Imax-done}
\lim_{n\rar\infty}\frac{-1}{n}\log p_n
=\sup_{s \geq 0}\left\{s\big(r-\min_{\sigma_B\in\mc{S}(B)}
 D_{1+s}(\rho_{AB}\|M_A\ox\sigma_B)\big)\right\}
\eeq
for $r\in(F(0),F(\infty))$. To complete the proof, we show that the
equality of Eq.~(\ref{eq:usym-Imax-done}) can be extended to the whole
range $r\in\mathbb{R}$. At first, we observe by definition that, the l.h.s.
of Eq.~(\ref{eq:usym-Imax-done}) is nonnegative and monotonically
increasing with $r$. Moreover, the equality of Eq.~(\ref{eq:usym-Imax-done})
established for $r\in(F(0),F(\infty))$ shows that the l.h.s. goes
to $0$ when $r\searrow F(0)=D(\rho_{AB}\|M_A\ox\rho_B)$. So we conclude
that the l.h.s. of Eq.~(\ref{eq:usym-Imax-done}) equals $0$ for $r\leq F(0)$,
coinciding with the right hand side. Next, we consider the case $r>F(\infty)
=\min_{\sigma_B\in\mc{S}(B)}D_{\rm max}(\rho_{AB}\|M_A\ox\sigma_B)$.
Letting $s\rar\infty$ in Lemma~\ref{lem:additive-Is} we get
\[
D_{\rm max}\Big(\mc{E}^n(\rho_{AB}^{\ox n})\big\|M_A^{\ox n}\ox
\omega^{(n)}_{B^n}\Big) =n\left( F(\infty)+O\Big(\frac{\log n}{n}\Big)\right).
\]
This implies that, for $r>F(\infty)$ and $n$ big enough,
\[
\Big\{\mc{E}^n\big(\rho_{AB}^{\ox n}\big)\geq
f(n)2^{nr}M_A^{\ox n}\ox\omega^{(n)}_{B^n}\Big\}=0
\]
and hence $p_n=0$. As a result, we conclude that the l.h.s. of
Eq.~(\ref{eq:usym-Imax-done}) is $+\infty$ for $r>F(\infty)$,
coinciding with the right hand side, too.
\end{Proofof}

\subsection{Proof of Theorem~\ref{thm:reliability-dec}}
\label{subsec:thm-proof}
Based on the results obtained in the above two subsections, we are now ready
to complete the proof of Theorem~\ref{thm:reliability-dec}. At first, we
prove the one-shot achievability bound of Theorem~\ref{thm:dec-p-up}.

\begingroup
\def\thetheorem{\ref{thm:dec-p-up}}
\begin{theorem}[restatement]
Let $\rho_{RA}\in \mc{S}(RA)$. For any $m\in\mathbb{N}$, $0<s\leq 1$ and
$\sigma_A\in\mc{S}(A)$, the optimal performance of decoupling $A$ from $R$
is bounded as
\[
P^{\rm dec}_{R:A}(\rho_{RA},\log m) \leq
\sqrt{\frac{v^s}{s}}\exp\Big\{-s\big(\log m-\frac{1}{2}D_{1+s}
     (\rho_{RA}\| \rho_R\ox\sigma_A)\big)\Big\},
\]
where $v$ is the number of distinct eigenvalues of $\rho_R\ox \sigma_A$.
\end{theorem}
\addtocounter{theorem}{-1}
\endgroup

\begin{Proof}
We will employ the convex-split lemma. The convex-split lemma provides
directly a catalytic decoupling strategy via random unitary operation,
which in turn can be converted into a way for catalytic decoupling via
removing a subsystem (cf. Section~\ref{subsec:relation-dec}). Identify
$A$ with $A_1$, and let $A'=A_2A_3\cdots A_m$, where all the $A_i$ systems
have equal dimension. Let $A'$ in the state $\bar{\sigma}_{A'}=\sigma_{A_2}
\ox\sigma_{A_3}\ox\cdots\ox\sigma_{A_m}$ be the catalytic system. We
construct a random unitary channel
\[
\Lambda_{AA'}: X \mapsto \frac{1}{m}\sum_{i=1}^mU_iXU_i^*,
\]
where $U_i=W_{(1,i)}$ is the swapping between $A_1$ and $A_i$ (we set
$W_{(1,1)}=\1$). Then
\[
\Lambda_{AA'}\big(\rho_{RA}\ox \bar{\sigma}_{A'}\big)
=\frac{1}{m}\sum_{i=1}^m \rho_{RA_i}\ox\big[\sigma^{\ox m-1}\big]_{A^m/A_i},
\]
which is in the form of the state in Lemma~\ref{lem:convex-split}. So we
have
\[\begin{split}
   P_{R:A}^{\rm dec\text{-}u}(\rho_{RA}, \log m)
\leq &P\Big(\Lambda_{AA'}\big(\rho_{RA}\ox \bar{\sigma}_{A'}\big),
                  \rho_{R}\ox\big[\sigma^{\ox m}\big]_{A^m}\Big) \\
\leq &\Big[(\ln2)\ D\Big(\Lambda_{AA'}\big(\rho_{RA}\ox \bar{\sigma}_{A'}\big)
      \big\|\rho_R\ox\big[\sigma^{\ox m}\big]_{A^m} \Big)\Big]^{\frac{1}{2}}\\
\leq &\sqrt{\frac{v^s}{s}}\exp\Big\{-s\big(\frac{1}{2}\log m-
       \frac{1}{2}D_{1+s}(\rho_{RA}\| \rho_R\ox\sigma_A)\big)\Big\},
\end{split}\]
where for the second line we have used the relation $P(\rho,\sigma)\leq
\sqrt{(\ln 2)D(\rho\|\sigma)}$, and for the third line we have used
Lemma~\ref{lem:convex-split}. At last, invoking
Proposition~\ref{prop:dec-relation} lets us confirm the claim.
\end{Proof}

\medskip
The following one-shot converse bound is proved using standard techniques
(see, e.g.~\cite{DBWR2014one} and~\cite{MBDRC2017catalytic}).
\begin{proposition}
  \label{prop:dec-p-low}
Let $\rho_{RA}\in \mc{S}(RA)$. For $k\geq 0$, the optimal performance of
decoupling $A$ from $R$ is bounded by the smoothing quantity for the
max-information (cf. Eq.~(\ref{eq:def-smi}) in Definition~\ref{def:smoothing-q}):
\[
P^{\rm dec}_{R:A}(\rho_{RA},k) \geq \delta_{R:A}(\rho_{RA},2k).
\]
\end{proposition}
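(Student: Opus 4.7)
The plan is to prove the equivalent statement that if the optimal decoupling error is $\epsilon$, then $I^\epsilon_{\max}(R:A)_\rho\le 2k$, which by the alternative expression~\eqref{eq:alt-smi} is exactly $\delta_{R:A}(\rho_{RA},2k)\le\epsilon$. I would proceed via Uhlmann's theorem together with a rough dimension bound for the max-information.

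Fix any decoupling protocol achieving error $\epsilon$: a catalyst $\sigma_{A'}$, a unitary $U:\mc{H}_{AA'}\rar\mc{H}_{A_1A_2}$ with $\log|A_2|\le k$, and a state $\omega_{A_1}$, so that $\tau_{RA_1A_2}:=U(\rho_{RA}\ox\sigma_{A'})U^{*}$ satisfies $P(\tr_{A_2}\tau_{RA_1A_2},\rho_R\ox\omega_{A_1})\le\epsilon$. Uhlmann's theorem, applied to this pair on $RA_1$ with $\tau_{RA_1A_2}$ as an extension, furnishes an extension $\tilde\tau_{RA_1A_2}$ of $\rho_R\ox\omega_{A_1}$ with $P(\tilde\tau_{RA_1A_2},\tau_{RA_1A_2})\le\epsilon$.

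The crux of the argument is then the operator inequality
\[
\tilde\tau_{RA_1A_2}\ \le\ |A_2|\,\tilde\tau_{RA_1}\ox\1_{A_2}\ =\ |A_2|\,\rho_R\ox\omega_{A_1}\ox\1_{A_2}\ =\ 2^{2\log|A_2|}\,\rho_R\ox\Big(\omega_{A_1}\ox\tfrac{\1_{A_2}}{|A_2|}\Big),
\]
which I would obtain from the general fact that every bipartite state $\eta_{PQ}$ satisfies $\eta_{PQ}\le|Q|\,\eta_P\ox\1_Q$. This in turn follows by writing the replacer channel on $Q$ as a uniform twirl over the $|Q|^{2}$ discrete Weyl conjugations, $\frac{1}{|Q|^{2}}\sum_{a,b}(\1_P\ox W_{a,b})\eta_{PQ}(\1_P\ox W_{a,b})^{*}=\eta_P\ox\1_Q/|Q|$, and retaining only the $(a,b)=(0,0)$ term on the left, which equals $\eta_{PQ}$ itself. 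Taking $P=RA_1$ and $Q=A_2$ yields the displayed inequality, so $\eta_{A_1A_2}:=\omega_{A_1}\ox\1_{A_2}/|A_2|$ is a valid state on $A_1A_2$ and serves as the second argument of the max-information.

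Finally, I would pull everything back through $U$ and trace out the catalyst. Setting $\tilde\rho_{RAA'}:=U^{*}\tilde\tau_{RA_1A_2}U$, isometry invariance of the purified distance gives $P(\tilde\rho_{RAA'},\rho_{RA}\ox\sigma_{A'})\le\epsilon$, and conjugating the operator inequality above by $U^{*}$ gives $\tilde\rho_{RAA'}\le 2^{2\log|A_2|}\rho_R\ox U^{*}\eta_{A_1A_2}U$. Let $\tilde\rho_{RA}:=\tr_{A'}\tilde\rho_{RAA'}$ and $\tilde\sigma_A:=\tr_{A'}(U^{*}\eta_{A_1A_2}U)$, which is a state on $A$. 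Data processing for the purified distance yields $P(\tilde\rho_{RA},\rho_{RA})\le\epsilon$, while tracing the operator inequality over $A'$ yields $\tilde\rho_{RA}\le 2^{2k}\rho_R\ox\tilde\sigma_A$. This is precisely the witness required by~\eqref{eq:alt-smi} for $\delta_{R:A}(\rho_{RA},2k)\le\epsilon$. The main obstacle is isolating the dimension inequality $\eta_{PQ}\le|Q|\,\eta_P\ox\1_Q$ cleanly; once this is in place, the Uhlmann step, the pull-back by $U^{*}$, and the removal of the catalyst by partial trace are routine, consistent with the standard techniques alluded to in~\cite{DBWR2014one,MBDRC2017catalytic}.
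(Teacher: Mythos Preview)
Your proof is correct and follows essentially the same route as the paper's own argument: apply Uhlmann's theorem to obtain an extension $\tilde\tau_{RA_1A_2}$ of $\rho_R\ox\omega_{A_1}$ close to $U(\rho_{RA}\ox\sigma_{A'})U^*$, use the dimension bound $\tilde\tau_{RA_1A_2}\le|A_2|\,\tilde\tau_{RA_1}\ox\1_{A_2}$, then conjugate by $U^*$ and trace out $A'$ to produce the witness for $\delta_{R:A}(\rho_{RA},2k)\le\epsilon$. The only cosmetic difference is that you spell out the Weyl-twirl justification for the dimension inequality, whereas the paper simply asserts the equivalent bound $\tilde\rho_{RA_1A_2}\le|A_2|^2\,\rho_R\ox\omega_{A_1}^*\ox\tfrac{\1_{A_2}}{|A_2|}$.
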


\begin{Proof}
Consider an arbitrary decoupling scheme that discards not more than than $k$
qubits. Let the catalytic state be $\sigma_{A'}$ and the unitary decoupling
operation be $U:\mc{H}_{AA'}\rar\mc{H}_{A_1A_2}$. Then $\log |A_2|\leq k$
and the performance for this scheme is
\[
\epsilon:=\min_{\omega_{A_1}\in\mc{S}(A_1)} P\big(\tr_{A_2}U(\rho_{RA}\ox
\sigma_{A'})U^*,\rho_R \ox\omega_{A_1}\big).
\]
Let $\omega_{A_1}^*$ be the optimal state in the above minimization. By
Uhlmann's theorem~\cite{Uhlmann1976transition}, there is a state
$\tilde{\rho}_{RA_1A_2}$ such that
\begin{align}
  \label{eq:decpl-1}
&P\big(U(\rho_{RA}\ox\sigma_{A'})U^*,\tilde{\rho}_{RA_1A_2}\big)=\epsilon,\\
  \label{eq:decpl-2}
&\tr_{A_2} \tilde{\rho}_{RA_1A_2} = \rho_R \ox\omega_{A_1}^*.
\end{align}
From Eq.~(\ref{eq:decpl-1}) we get
\beq\label{eq:decpl-3}
P\big(\rho_{RA},\tr_{A'}U^*\tilde{\rho}_{RA_1A_2}U\big)\leq\epsilon.
\eeq
Eq.~(\ref{eq:decpl-2}) implies that
\[
\tilde{\rho}_{RA_1A_2}
\leq|A_2|^2\rho_R\ox\omega_{A_1}^*\ox \frac{\1_{A_2}}{|A_2|}
\leq 2^{2k}\rho_R\ox\omega_{A_1}^*\ox\frac{\1_{A_2}}{|A_2|},
\]
which further yields
\beq\label{eq:decpl-4}
\tr_{A'}U^*\tilde{\rho}_{RA_1A_2}U\leq2^{2k}\rho_R\ox
\tr_{A'}U^*\big(\omega_{A_1}^*\ox\frac{\1_{A_2}}{|A_2|}\big)U.
\eeq
Inspecting the definition in Eq.~(\ref{eq:def-smi}), we obtain from
Eqs.~(\ref{eq:decpl-3}) and (\ref{eq:decpl-4}) that
\[\delta_{R:A}(\rho_{RA},2k)\leq\epsilon.\]
Since by assumption $\epsilon$ is the performance of an arbitrary decoupling
scheme with cost not larger than $k$, the optimal performance satisfies the
above relation as well. That is,
\[\delta_{R:A}(\rho_{RA},2k)\leq P^{\rm dec}_{R:A}(\rho_{RA},k).\]
\end{Proof}

\begin{Proofof}[Theorem~\ref{thm:reliability-dec}]
By applying Theorem~\ref{thm:dec-p-up} and making the substitution
$\rho_{RA}\leftarrow\rho_{RA}^{\ox n}$, $\log m\leftarrow nr$, $\sigma_A
\leftarrow\sigma_A^{\ox n}$, we get
\[
\frac{-1}{n}\log P^{\rm dec}_{R^n:A^n}\big(\rho_{RA}^{\ox n},nr\big)\geq
\frac{-1}{2n}\log\frac{v_n^s}{s}+s\big(r-\frac{1}{2}D_{1+s}(\rho_{RA}\|
\rho_R\ox\sigma_A)\big),
\]
where $v_n\leq(n+1)^{|R| + |A|}$ is the number of distinct eigenvalues
of $\rho_R^{\ox n}\ox\sigma_A^{\ox n}$. Letting $n\rar\infty$, and then
maximizing the r.h.s. over $s\in (0,1]$ and $\sigma_A\in\mc{S}(A)$, we
arrive at
\[
E^{\rm dec}_{R:A}(\rho_{RA},r)
\geq\liminf_{n\rar\infty}\frac{-1}{n}\log
   P^{\rm dec}_{R^n:A^n}\big(\rho_{RA}^{\ox n},nr\big)
\geq\sup_{0<s\leq 1}\Big\{s\big(r-\frac{1}{2}I_{1+s}(R:A)_\rho\big)\Big\},
\]
which is Eq.~(\ref{eq:exp-dec-low}).
For Eq.~(\ref{eq:exp-dec-up}), we apply Proposition~\ref{prop:dec-p-low}
and Theorem~\ref{theorem:exp-mice} to obtain
\[\begin{split}
      E^{\rm dec}_{R:A}(\rho_{RA},r)
  = & \limsup_{n\rar\infty}\frac{-1}{n}\log
      P^{\rm dec}_{R^n:A^n}\big(\rho_{RA}^{\ox n},nr\big) \\
\leq& \lim_{n\rar\infty}\frac{-1}{n}\log
      \delta_{R^n:A^n}\big(\rho_{RA}^{\ox n},2nr\big)   \\
  = & \sup_{s\geq0}\Big\{s\big(r-\frac{1}{2}I_{1+s}(R:A)_\rho\big)\Big\}.
\end{split}\]
At last, Eq.~(\ref{eq:exp-dec}) follows from Eqs.~(\ref{eq:exp-dec-low}) and
(\ref{eq:exp-dec-up}). To see this, we consider the function
\beq
f(s)=s\big(r-\frac{1}{2}I_{1+s}(R:A)_\rho\big)
\eeq
for $s\in[0,\infty)$. $f(s)$ is concave because $s\mapsto sI_{1+s}(R:A)_\rho$
is convex. So, we have that $f(s)$ reaches the maximum in $[0,1]$ and
therefore
\beq
\max_{s\geq 0}f(s) = \max_{0\leq s\leq 1} f(s),
\eeq
if $f'(1)\leq 0$. This condition is equivalent to $r\leq R_{\rm critical}\equiv
\frac{1}{2}\frac{\mathrm{d}}{\mathrm{d}s} sI_{1+s}(R:A)_{\rho}\big|_{s=1}$.
\end{Proofof}

\begin{remark} \label{rem:mainresult}
Some comments on the main results are as follows.
\begin{enumerate}[(a)]
\item Although the reliability function $E^{\rm dec}_{R:A}(\rho_{RA},r)$ is
      defined as a limit superior, we see in the above proof that when we are
      able to determine it for $r\leq R_{\rm critical}$, the limit of $\frac{-1}
      {n}\log P^{\rm dec}_{R^n:A^n}\big(\rho_{RA}^{\ox n},nr\big)$ actually
      exists.
\item When we employ the convex-split lemma (Lemma~\ref{lem:convex-split})
      to derive the achievability results of Theorem~\ref{thm:dec-p-up} and
      Eq.~(\ref{eq:exp-dec-low}), we can keep using the relative entropy as the
      measure of closeness between states. Meanwhile, Eq.~(\ref{eq:exp-dec-up})
      can be converted into an upper bound for the rate of exponential decay
      of the performance measured by relative entropy, making use of the
      relation $P(\rho,\sigma)\leq\sqrt{(\ln 2)D(\rho\|\sigma)}$. This actually
      enables us to derive results similar to Theorem~\ref{thm:reliability-dec},
      with the purified distance being replaced by the relative entropy in the
      definition of the reliability function.
\item The upper bound of Eq.~\eqref{eq:exp-dec-up} diverges when $r>\frac{1}{2}
      I_{\rm max}(R:A)_\rho$, which can be easily seen. One may guess
      that it goes to $\infty$ when $r\nearrow\frac{1}{2}I_{\rm max}(R:A)_\rho$.
      However, we show in Proposition~\ref{prop:bounded} in the Appendix
      that it is actually bounded when $r<\frac{1}{2}I_{\rm max}(R:A)_\rho$.
\end{enumerate}
\end{remark}

\section{Proof of the relations between quantum tasks}
  \label{sec:relations}
In this section, we prove Proposition~\ref{prop:dec-relation} on
the relation between the three types of catalytic quantum information
decoupling, and Proposition~\ref{prop:merging-dec} on the relation
between catalytic quantum information decoupling and quantum state
merging.

\subsection{Proof of the relation between decouplings}
  \label{subsec:relation-dec}
\begingroup
\def\thetheorem{\ref{prop:dec-relation}}
\begin{proposition}[restatement]
For $\rho_{RA} \in \mathcal{S}(RA)$ and $k\geq 0$, $r \geq 0$, we have
\begin{align}
&P_{R:A}^{\rm dec}(\rho_{RA}, k)
=P_{R:A}^{\rm dec\text{-}m}(\rho_{RA}, 2k)
=P_{R:A}^{\rm dec\text{-}u}(\rho_{RA}, 2k), \label{eqr:dec-P} \\
&E_{R:A}^{\rm dec}(\rho_{RA}, r)
=E_{R:A}^{\rm dec\text{-}m}(\rho_{RA}, 2r)
=E_{R:A}^{\rm dec\text{-}u}(\rho_{RA}, 2r). \label{eqr:dec-E}
\end{align}
\end{proposition}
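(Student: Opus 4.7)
My plan is to prove the one-shot identities in Eq.~\eqref{eqr:dec-P}; the asymptotic identities in Eq.~\eqref{eqr:dec-E} follow immediately by applying~\eqref{eqr:dec-P} to $\rho_{RA}^{\otimes n}$ with cost $nr$, taking $-\tfrac{1}{n}\log$, and passing to $\limsup_{n\to\infty}$.

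For the one-shot statement I would establish a cycle of inequalities. The ``easy half'' $P^{\rm dec\text{-}u}(\rho_{RA},2k)\le P^{\rm dec}(\rho_{RA},k)$ is based on the Heisenberg--Weyl twirl identity
\[
\tr_{A_2}(X_{A_1A_2})\otimes \tfrac{\1_{A_2}}{|A_2|}
=\tfrac{1}{|A_2|^{2}}\sum_{a,b}(\1_{A_1}\otimes W_{a,b})\,X_{A_1A_2}\,(\1_{A_1}\otimes W_{a,b}^{*}).
\]
Given a removal scheme with unitary $U:AA'\to A_1A_2$ and $|A_2|\le 2^{k}$, the $|A_2|^{2}\le 2^{2k}$ operators $U^{*}(\1_{A_1}\otimes W_{a,b})U$ define a random unitary channel on $AA'$ whose output is unitarily equivalent (via $U$) to the removal-scheme output tensored with $\tfrac{\1_{A_2}}{|A_2|}$, yielding the same purified distance to $\rho_R \otimes U^{*}(\omega_{A_1}\otimes \tfrac{\1_{A_2}}{|A_2|})U$. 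The parallel inequality $P^{\rm dec\text{-}m}(\rho_{RA},2k)\le P^{\rm dec}(\rho_{RA},k)$ is obtained by composing this with a direct conversion between random unitary and projective measurement schemes at the same cost, based on the Fourier identity $V_k = \sum_x e^{2\pi i kx/m} Q^x$ (which satisfies $\tfrac{1}{m}\sum_k V_k(\cdot)V_k^{*}=\sum_x Q^x(\cdot)Q^x$) for the direction measurement-to-unitary, and on Naimark-type dilation into an enlarged catalyst for the reverse direction.

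The remaining direction $P^{\rm dec}(\rho_{RA},k) \le P^{\rm dec\text{-}u}(\rho_{RA},2k)$ is the main technical step, and I plan to handle it via Uhlmann's theorem. Starting from a random unitary scheme $\Lambda = \tfrac{1}{m}\sum_i U_i(\cdot)U_i^{*}$ with $m \le 2^{2k}$, catalyst $\sigma_{A'}$, and purified distance $\varepsilon$ to $\rho_R \otimes \omega_{AA'}$, I purify $\rho_{RA}$ as $|\phi\rangle_{RA\bar A}$ and $\sigma_{A'}$ as $|\sigma\rangle_{A'\bar{A'}}$, and apply the Stinespring isometry $V:AA'\to AA'E$ (with $|E|=m$) of $\Lambda$. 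The resulting pure state $|\Phi\rangle$ on $RAA'E\bar A\bar{A'}$ has $RAA'$-marginal within $\varepsilon$ of $\rho_R \otimes \omega_{AA'}$. The key observation is that a purification of the product state $\rho_R \otimes \omega_{AA'}$ factorises as $|\phi_\rho\rangle_{RR'} \otimes |\omega\rangle_{AA'F}$. By Uhlmann's theorem, $|\Phi\rangle$ is within purified distance $\varepsilon$ of such a factorised purification, related by an isometry $W$ acting only on the joint purifying system $E\bar A\bar{A'}$. Matching the total purifying dimensions by padding $F$ so that $|R'|\cdot|F| = |E|\cdot|\bar A|\cdot|\bar{A'}|$ forces exactly $\sqrt{m}=2^k$ of the environment dimensions to play the role of the extra catalyst purifier, leaving the remaining $2^k$ dimensions as the subsystem $A_2$ to be traced out. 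Reading off this identification yields a removal scheme with $|A_2|=2^k$ achieving the same purified distance $\varepsilon$.

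The main obstacle is that the Uhlmann isometry $W$ a priori mixes the accessible environment $E$ with the reference-side purifiers $\bar A,\bar{A'}$, which cannot be touched by any physical decoupling operation. The technical heart of the proof is therefore to argue --- using the specific Stinespring form $V|\psi\rangle|0\rangle = \tfrac{1}{\sqrt m}\sum_i U_i|\psi\rangle|i\rangle_E$ together with the freedom in choosing the purification $|\omega\rangle_{AA'F}$ --- that $W$ can be decomposed, without sacrificing the $\varepsilon$-bound, into a piece acting purely on $E$ (which becomes the unitary in the constructed removal scheme) and one acting purely on $\bar A\bar{A'}$ (which is absorbable into a redefinition of the target purification of $\omega_{AA'}$ and is therefore free). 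The factor of two saving claimed in the proposition is then recovered from the dimension matching, and the parallel argument for $P^{\rm dec}(\rho_{RA},k) \le P^{\rm dec\text{-}m}(\rho_{RA},2k)$ closes the cycle.
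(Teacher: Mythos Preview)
Your easy directions are correct and essentially coincide with the paper's: the Weyl twirl identity converts a removal scheme with $|A_2|\le 2^{k}$ into a random unitary scheme with $|A_2|^{2}\le 2^{2k}$ unitaries, and a Bell-measurement variant of the same idea gives the projective-measurement version. So $P^{\rm dec\text{-}u}(\rho_{RA},2k)\le P^{\rm dec}(\rho_{RA},k)$ and $P^{\rm dec\text{-}m}(\rho_{RA},2k)\le P^{\rm dec}(\rho_{RA},k)$ are fine.

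The gap is in the reverse direction $P^{\rm dec}(\rho_{RA},k)\le P^{\rm dec\text{-}u}(\rho_{RA},2k)$. Your Uhlmann argument correctly produces an isometry $W$ on the joint purifying system $E\bar A\bar{A'}$, but you then need $W$ to factor as $W_E\otimes W_{\bar A\bar{A'}}$ (up to the claimed $\varepsilon$). You yourself flag this as the main obstacle, and the resolution you sketch---invoking the explicit Stinespring form $V|\psi\rangle|0\rangle=\tfrac{1}{\sqrt m}\sum_i U_i|\psi\rangle|i\rangle$ together with freedom in the target purification---does not actually force such a factorisation. The Uhlmann isometry is determined by the two global pure states, and the specific Kraus decomposition of $\Lambda$ gives you no control over how $W$ acts across the $E$ versus $\bar A\bar{A'}$ split. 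Dimension matching alone cannot rescue this: it only fixes the sizes of the target registers, not the tensor structure of $W$. As stated, the argument is incomplete, and I do not see how to close it along these lines.

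The paper avoids Uhlmann entirely here and gives an explicit construction. Given $\Lambda_{AA'}(\cdot)=\tfrac{1}{m}\sum_{i=1}^{m}V^{i}(\cdot)(V^{i})^{*}$ with $m\le 2^{2k}$, introduce two registers $C,C'$ with $|C|=|C'|=\sqrt m$, enlarge the catalyst to $\sigma_{A'}\otimes\tfrac{1}{m}\sum_i (W_C^{i}\otimes\1_{C'})\Psi_{CC'}(W_C^{i}\otimes\1_{C'})^{*}$ (the uniform mixture over the Bell basis of $CC'$), and apply the Bell-controlled unitary
\[
U_{AA'CC'}=\sum_{i=1}^{m} V^{i}_{AA'}\otimes (W_C^{i}\otimes\1_{C'})\Psi_{CC'}(W_C^{i}\otimes\1_{C'})^{*}.
\]
Tracing out $A_2\equiv C'$ collapses each Bell state to $\tfrac{\1_C}{|C|}$, so the output is exactly $\Lambda_{AA'}(\rho_{RA}\otimes\sigma_{A'})\otimes\tfrac{\1_C}{|C|}$, at removal cost $\log|C'|=\tfrac12\log m\le k$. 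The same trick (with a Fourier unitary $\sum_y e^{2\pi i xy/m}Q^{y}$ controlled on the Bell basis) handles the projective-measurement case. This construction is what you should use to replace the Uhlmann step; it is where the factor of two genuinely comes from, via $|C'|=\sqrt m$.
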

\addtocounter{theorem}{-1}
\endgroup

\begin{Proof}
Eq.~\eqref{eqr:dec-E} follows from Eq.~\eqref{eqr:dec-P} directly. So it
suffices to prove Eq.~\eqref{eqr:dec-P}. In the following, we show the
identities $P_{R:A}^{\rm dec}(\rho_{RA}, k)=P_{R:A}^{\rm dec\text{-}m}
(\rho_{RA}, 2k)$ and $P_{R:A}^{\rm dec}(\rho_{RA}, k)
=P_{R:A}^{\rm dec\text{-}u}(\rho_{RA}, 2k)$ separately.

We first prove the former identity. Consider an arbitrary decoupling scheme
via removing a subsystem of size not more than $k$ qubits. Let $\sigma_{A'}$
be the catalytic state and $U:\mc{H}_{AA'}\rar \mc{H}_{A_1A_2}$ be the
unitary operation, where $A_2$ is the system to be removed and $\log |A_2|
\leq k$. Introduce a system $A''$ in the maximally mixed state
$\frac{\1_{A''}}{|A''|}$ such that $|A''|=|A_2|$. We construct a decoupling
scheme via projective measurement: the catalytic state is chosen to be
$\sigma_{A'}\ox \frac{\1_{A''}}{|A''|}$, and the projective measurement
is given by $\{Q^x_{AA'A''}\}_{x=1}^m$, with $m=|A_2|^2$ and
\[
  Q^x_{AA'A''}=(U\ox\1_{A''})^*\big((W_{A_2}^x\ox\1_{A''})\Psi_{A_2A''}
               (W_{A_2}^x\ox\1_{A''})^*\ox\1_{A_1}\big)(U\ox\1_{A''}),
\]
where $\{W_{A_2}^x\}_{x=1}^{m}$ is the set of discrete Weyl operators on
$\mc{H}_{A_2}$ and $\Psi_{A_2A''}$ is the maximally entangled state. So,
the cost is $\log m=\log |A_2|^2 \leq 2k$. By the definition of
$P_{R:A}^{\rm{dec\text{-}m}}$, for arbitrary state $\omega_{A_1}$, we have
\begin{equation}
\label{eq:decmleqdec}
\begin{split}
     &P_{R:A}^{\rm{dec\text{-}m}}(\rho_{RA}, 2k) \\
\leq &P\Big(\sum_{x=1}^m Q^x_{AA'A''}\big(\rho_{RA} \ox \sigma_{A'} \ox
       \frac{\1_{A''}}{|A''|}\big)Q^x_{AA'A''},\,\rho_R\ox U^*\big(\omega_{A_1}
       \ox \frac{\1_{A_2}}{|A_2|}\big)U \ox \frac{\1_{A''}}{|A''|}\Big) \\
 =   &P\Big(U^*\big(\tr_{A_2} \big[U(\rho_{RA} \ox \sigma_{A'})U^*\big]
       \ox \frac{\1_{A_2}}{|A_2|}\big)U\ox \frac{\1_{A''}}{|A''|},\,
       \rho_R\ox U^*\big(\omega_{A_1}\ox \frac{\1_{A_2}}{|A_2|}\big)U
       \ox \frac{\1_{A''}}{|A''|}\Big) \\
 =   &P\big(\tr_{A_2}U(\rho_{RA}\ox\sigma_{A'})U^*,\rho_R\ox\omega_{A_1}\big),
\end{split}
\end{equation}
where the third line can be verified by direct calculation. Minimizing
the last line of Eq.~(\ref{eq:decmleqdec}) over $\sigma_{A'}$,
$\omega_{A_1}$ and $U:\mc{H}_{AA'}\rar \mc{H}_{A_1A_2}$, and by
Definition~\ref{def:perform-dec}, we get from Eq.~(\ref{eq:decmleqdec}) that
\begin{equation}
\label{eq:decmleqdec1}
P_{R:A}^{\rm{dec\text{-}m}}(\rho_{RA},2k)\leq P^{\rm{dec}}_{R:A}(\rho_{RA},k).
\end{equation}
Next, we deal with the other direction. Let $\sigma_{A'}$ and $\{Q^x_{AA'}\}
_{x=1}^m$ be the catalytic state and the projectors of a decoupling scheme
via projective measurement with cost $\log m\leq 2k$. Introduce systems $C$
and $C'$ such that $|C|=|C'|=\sqrt{m}$ and denote by $\Psi_{CC'}$ the maximally
entangled state on $C$ and $C'$. We construct a decoupling scheme via removing
a subsystem: the catalytic state is given by $\sigma_{A'}\ox \frac{\1_{CC'}}
{|C|\cdot|C'|}$, and the unitary operation is given by
\[
U_{AA'CC'}
=\sum_{x=1}^m \Big(\sum_{y=1}^m \mathrm{e}^{\frac{2\pi\mathrm{i}xy}{m}}
 Q_{AA'}^y\Big)\ox \big(W^x_C\ox\1_{C'}\big)\Psi_{CC'}
 \big(W^x_C\ox\1_{C'}\big)^*,
\]
where $\{W^x_C\}_{x=1}^m$ is the set of discrete Weyl operators on
$\mc{H}_C$. Set $A_1\equiv AA'C$ be the remaining system, and $A_2\equiv C'$
be the removed system. So, the cost is $\log|A_2|=\log|C'|\leq k$.
By the definition of $P_{R:A}^{\rm{dec}}$, for arbitrary state
$\omega_{AA'}$, we have
\begin{equation}
\label{eq:decleqdecm}
\begin{split}
     & P_{R:A}^{\rm{dec}}(\rho_{RA}, k)\\
\leq & P\Big(\tr_{C'}\big[U_{AA'CC'}\big(\rho_{RA}\ox\sigma_{A'}\ox
        \frac{\1_{CC'}}{|C|\cdot|C'|}\big)U_{AA'CC'}^*\big],\,
        \rho_R\ox\omega_{AA'}\ox\frac{\1_C}{|C|} \Big) \\
  =  & P\Big(\sum_xQ^x_{AA'}\big(\rho_{RA}\ox \sigma_{A'}\big)Q^x_{AA'}\ox
        \frac{\1_C}{|C|},\,\rho_R \ox \omega_{AA'} \ox \frac{\1_C}{|C|} \Big)\\
  =  & P\Big(\sum_xQ^x_{AA'}\big(\rho_{RA}\ox \sigma_{A'}\big)Q^x_{AA'},\,
        \rho_R\ox\omega_{AA'}\Big),
\end{split}
\end{equation}
where the third line can be verified by direct calculation. Minimizing the
last line of Eq.~(\ref{eq:decleqdecm}) over $\sigma_{A'}$, $\omega_{AA'}$
and $\{Q^x_{AA'}\}_{x=1}^m$, and by Definition~\ref{def:perform-dec-m},
we get from Eq.~(\ref{eq:decleqdecm}) that
\begin{equation}
  \label{eq:decleqdecm1}
P_{R:A}^{\rm{dec}}(\rho_{RA},k) \leq P_{R:A}^{\rm{dec\text{-}m}}(\rho_{RA},2k).
\end{equation}
The combination of Eqs.~\eqref{eq:decmleqdec1} and \eqref{eq:decleqdecm1}
leads to the identity $P_{R:A}^{\rm dec}(\rho_{RA}, k)=P_{R:A}^{\rm dec\text{-}m}
(\rho_{RA}, 2k)$.

Now we prove the latter identity. Consider an arbitrary decoupling scheme
via removing a subsystem of size not more than $k$ qubits. Let $\sigma_{A'}$
be the catalytic state and $U:\mc{H}_{AA'}\rar \mc{H}_{A_1A_2}$ be the
unitary operation, where $A_2$ is the system to be removed and $\log |A_2|
\leq k$. We construct a decoupling scheme via random unitary operation:
the catalytic state is chosen to be $\sigma_{A'}$ and the random unitary
operation is given by
\begin{equation}
\Lambda_{AA'}(\cdot)=\frac{1}{|A_2|^2} \sum_{i=1}^{|A_2|^2}
(\1_{A_1}\ox W_{A_2}^i) U (\cdot) U^* (\1_{A_1}\ox W_{A_2}^i)^*,
\end{equation}
where $\{W_{A_2}^i\}_{i=1}^{|A_2|^2}$ are all the discrete Weyl operators
on $\mc{H}_{A_2}$. By the definition of $P_{R:A}^{\rm{dec\text{-}u}}$, for
arbitrary state $\omega_{A_1}$, we have
\begin{equation}
\label{eq:eraslardec}
\begin{split}
     &P_{R:A}^{\rm{dec\text{-}u}}(\rho_{RA}, 2k) \\
\leq &P\Big(\Lambda_{AA'}(\rho_{RA} \ox \sigma_{A'}), \rho_R
        \ox \omega_{A_1}\ox \frac{\1_{A_2}}{|A_2|}\Big) \\
 =   &P\Big(\tr_{A_2} \big[U(\rho_{RA} \ox \sigma_{A'})U^*\big] \ox
       \frac{\1_{A_2}}{|A_2|}, \rho_R \ox \omega_{A_1}
       \ox \frac{\1_{A_2}}{|A_2|}\Big) \\
 =   &P\Big(\tr_{A_2} \big[U(\rho_{RA} \ox \sigma_{A'})U^*\big],
       \rho_R \ox \omega_{A_1}\Big).
\end{split}
\end{equation}
Minimizing the last line of Eq.~(\ref{eq:eraslardec}) over $\sigma_{A'}$,
$\omega_{A_1}$ and $U:\mc{H}_{AA'}\rar \mc{H}_{A_1A_2}$, and by
Definition~\ref{def:perform-dec}, we get from Eq.~(\ref{eq:eraslardec}) that
\begin{equation}
\label{eq:eraslardec1}
P_{R:A}^{\rm{dec\text{-}u}}(\rho_{RA},2k)\leq P^{\rm{dec}}_{R:A}(\rho_{RA},k).
\end{equation}
Next, we deal with the other direction. Let $\sigma_{A'}$ and $\Lambda_{AA'}$
be the catalytic state and the random unitary channel of a decoupling scheme
via random unitary operation with cost bounded by $2k$ bits. $\Lambda_{AA'}$
can be written as
\begin{equation}
\Lambda_{AA'}(\cdot)=\frac{1}{m} \sum_{i=1}^m V_{AA'}^{i} (\cdot) (V_{AA'}^{i})^*,
\end{equation}
where $V_{AA'}^i\in\mc{U}(AA')$ and $\log m \leq 2k$. Introduce systems $C$
and $C'$ such that $|C|=|C'|=\sqrt{m}$ and denote by $\Psi_{CC'}$ the maximally
entangled state on $C$ and $C'$. We construct a decoupling scheme via removing
a subsystem: the catalytic state is given by
\[
\sigma_{A'CC'}=\frac{1}{m}\sum_{i=1}^{m}
\sigma_{A'}\ox(W_C^i\ox\1_{C'})\Psi_{CC'}(W_C^i\ox\1_{C'})^*
\]
with $\{W_C^i\}_{i=1}^{m}$ be all the discrete Weyl operators on $\mc{H}_C$,
and the unitary operation is given by
\[
U_{AA'CC'}
=\sum_{i=1}^m V_{AA'}^i \ox (W_C^i\ox\1_{C'})\Psi_{CC'}(W_C^i\ox\1_{C'})^*.
\]
Set $A_1\equiv AA'C$ be the remaining system, and $A_2\equiv C'$ be the
removed system. So, the cost is $\log|A_2|=\log|C'|\leq k$. By the
definition of $P_{R:A}^{\rm{dec}}$, for arbitrary state $\omega_{AA'}$, we
have
\begin{equation}
\label{eq:declareras}
\begin{split}
     & P_{R:A}^{\rm{dec}}(\rho_{RA}, k)\\
\leq & P\Big(\tr_{C'} \big[U_{AA'CC'}(\rho_{RA}\ox\sigma_{A'CC'})
        U_{AA'CC'}^*\big],\rho_R\ox\omega_{AA'}\ox\frac{\1_C}{|C|} \Big) \\
  =  & P\Big(\Lambda_{AA'}(\rho_{RA}\ox \sigma_{A'})\ox \frac{\1_C}{|C|},
         \rho_R \ox \omega_{AA'} \ox \frac{\1_C}{|C|} \Big)\\
  =  & P\big(\Lambda_{AA'}(\rho_{RA} \ox \sigma_{A'}),
        \rho_R\ox\omega_{AA'}\big).
\end{split}
\end{equation}
Minimizing the last line of Eq.~(\ref{eq:declareras}) over $\sigma_{A'}$,
$\Lambda_{AA'}$ and $\omega_{AA'}$, and by Definition~\ref{def:perform-dec-u},
we get from Eq.~(\ref{eq:declareras}) that
\begin{equation}
  \label{eq:declareras1}
P_{R:A}^{\rm{dec}}(\rho_{RA},k) \leq P_{R:A}^{\rm{dec\text{-}u}}(\rho_{RA},2k).
\end{equation}
The combination of Eqs.~\eqref{eq:eraslardec1} and \eqref{eq:declareras1}
completes the proof.
\end{Proof}

\subsection{Proof of the relation between decoupling and quantum state merging}
  \label{subsec:relation-dec-merg}
\begingroup
\def\thetheorem{\ref{prop:merging-dec}}
\begin{proposition}[restatement]
For a tripartite pure state $\rho_{RAB} \in \mathcal{S}({RAB})$ and
$k\geq 0$, $r \geq 0$, we have
\begin{align}
&P^{\rm{merg}}_{A \Rightarrow B}(\rho_{RAB}, k)
=P^{\rm{merg}}_{A \rightarrow B}(\rho_{RAB}, 2k)
=P^{\rm{dec}}_{R:A}(\rho_{RA}, k), \label{eqr:merg-dec-P} \\
&E^{\rm{merg}}_{A \Rightarrow B}(\rho_{RAB}, r)
=E^{\rm{merg}}_{A \rightarrow B}(\rho_{RAB}, 2r)
=E^{\rm{dec}}_{R:A}(\rho_{RA}, r). \label{eqr:merg-dec-E}
\end{align}
\end{proposition}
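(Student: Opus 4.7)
The plan is to prove the performance identity in Eq.~\eqref{eqr:merg-dec-P} and deduce Eq.~\eqref{eqr:merg-dec-E} as an immediate consequence, since the latter arises from the former by specialization to the i.i.d. sequence $\rho_{RAB}^{\ox n}$ and application of $\limsup_n \frac{-1}{n}\log$. For the performance identity, I will first establish $P^{\rm{merg}}_{A \Rightarrow B}(\rho_{RAB}, k) = P^{\rm{dec}}_{R:A}(\rho_{RA}, k)$ via two applications of Uhlmann's theorem, and then reduce $P^{\rm{merg}}_{A \rar B}(\rho_{RAB}, 2k)$ to $P^{\rm{merg}}_{A \Rightarrow B}(\rho_{RAB}, k)$ through the standard protocols of quantum teleportation and super-dense coding.

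For the direction $P^{\rm{merg}}_{A \Rightarrow B}(\rho_{RAB}, k) \leq P^{\rm{dec}}_{R:A}(\rho_{RA}, k)$, I take any decoupling scheme with catalyst $\sigma_{A'}$ and unitary $U_{AA' \rar A_1 A_2}$ satisfying $\log|A_2| \leq k$ and $P\big(\tr_{A_2}U(\rho_{RA}\ox\sigma_{A'})U^*,\, \rho_R\ox\omega_{A_1}\big) = \epsilon$. I then let $\phi_{A'B'}$ be any purification of $\sigma_{A'}$, used as free pre-shared entanglement, have Alice apply $U$ to $AA'$ and send $A_2$ to Bob. The resulting pure state on $RA_1A_2BB'$ has reduction on $RA_1$ equal to $\tr_{A_2}U(\rho_{RA}\ox\sigma_{A'})U^*$, which is $\epsilon$-close to $\rho_R\ox\omega_{A_1}$. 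Since for any purification $\omega_{A_1B_1}$ of $\omega_{A_1}$ the state $\rho_{RAB}\ox\omega_{A_1B_1}$ is another purification of $\rho_R\ox\omega_{A_1}$, Uhlmann's theorem supplies an isometry on Bob's systems $A_2BB' \rar ABB_1$ producing a pure state of purified distance $\leq \epsilon$ from $\rho_{RAB}\ox\omega_{A_1B_1}$. Discarding $A_1, B_1$ and using monotonicity of the purified distance yields merging performance $\leq \epsilon$.

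For the converse $P^{\rm{dec}}_{R:A}(\rho_{RA}, k) \leq P^{\rm{merg}}_{A \Rightarrow B}(\rho_{RAB}, k)$, I start from any merging protocol $\mathcal{M}_1$ via quantum communication with cost $\log|A_2|\leq k$ and performance $\epsilon$. The pure state $\tau_{RA_1ABB_1}$ immediately before the final discarding has marginal on $RAB$ that is $\epsilon$-close to $\rho_{RAB}$. Because $\rho_{RAB}$ itself is pure, every purification of it on the extra systems $A_1B_1$ must factor as $\rho_{RAB}\ox\omega_{A_1B_1}$ for some pure $\omega_{A_1B_1}$, so Uhlmann's theorem gives $P\big(\tau_{RA_1ABB_1},\, \rho_{RAB}\ox\omega_{A_1B_1}\big)\leq\epsilon$. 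Taking the marginal on $RA_1$ and noting that $\tau_{RA_1}=\tr_{A_2}U(\rho_{RA}\ox\sigma_{A'})U^*$ (with $\sigma_{A'}=\tr_{B'}\phi_{A'B'}$ serving as catalyst), I obtain $P\big(\tr_{A_2}U(\rho_{RA}\ox\sigma_{A'})U^*,\, \rho_R\ox\omega_{A_1}\big)\leq\epsilon$, exhibiting a valid decoupling scheme of cost $\leq k$.

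Finally, the equality $P^{\rm{merg}}_{A \rar B}(\rho_{RAB}, 2k) = P^{\rm{merg}}_{A \Rightarrow B}(\rho_{RAB}, k)$ follows from standard protocol reductions: quantum teleportation converts any $k$-qubit transmission into $2k$ classical bits augmented by additional shared entanglement (free), giving $P^{\rm{merg}}_{A \rar B}(\rho_{RAB},2k)\leq P^{\rm{merg}}_{A\Rightarrow B}(\rho_{RAB},k)$, while super-dense coding gives the opposite inequality. The proof is conceptually routine once Uhlmann's theorem is in hand; the main obstacle is the bookkeeping, in particular verifying that the mixed catalyst $\sigma_{A'}$ in the decoupling definition corresponds exactly to the Alice-marginal of the pure entangled state $\phi_{A'B'}$ used in merging, and that the purified-distance estimates propagate faithfully through the partial traces on both sides of the Uhlmann-type dilations.
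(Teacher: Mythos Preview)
Your proposal is correct and follows essentially the same route as the paper: both directions of $P^{\rm{merg}}_{A \Rightarrow B}=P^{\rm{dec}}_{R:A}$ are obtained by purifying the catalyst, invoking Uhlmann's theorem (once to produce Bob's isometry, once to produce the factor $\omega_{A_1B_1}$), and tracing out; the equality with $P^{\rm{merg}}_{A \rightarrow B}(\rho_{RAB},2k)$ is handled via teleportation and dense coding exactly as in the paper. The only cosmetic difference is that the paper states Uhlmann yields a \emph{pure} $\varphi_{A_1B_1}$ in the converse step, whereas you allow a general $\omega_{A_1B_1}$; either suffices since only the $RA_1$ marginal is used.
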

\addtocounter{theorem}{-1}
\endgroup

\begin{Proof}
Eq.~\eqref{eqr:merg-dec-E} follows from Eq.~\eqref{eqr:merg-dec-P}
directly. So it suffices to prove Eq.~\eqref{eqr:merg-dec-P}. Assisted
by free entanglement, a merging protocol via quantum communication of
$k$ qubits, can simulate a merging protocol via classical communication
of $2k$ bits, by using teleportation. Conversely, a merging protocol
via classical communication of $2k$ bits, can simulate a merging protocol
via quantum communication of $k$ qubits, by using dense coding. So, we
easily obtain the relation $P_{A \Rightarrow B}^{\rm{merg}}(\rho_{RAB},k)
=P_{A \rar B}^{\rm{merg}}(\rho_{RAB}, 2k)$. In the following, we prove
$P_{A \Rightarrow B}^{\text{merg}}(\rho_{RAB},k)= P_{R:A}^{\text{dec}}
(\rho_{RA}, k)$.

Consider a decoupling scheme with catalytic state $\sigma_{A'}$, in which
the system $A_2$ of size $\log|A_2|\leq k$ is removed after a unitary
operator $U_{AA'\rar A_1A_2}$ being applied. Let $\omega_{A_1}$ be an
arbitrary state on $A_1$. Let $\sigma_{A'B'}$ and $\omega_{A_1B_1}$ be
the purification of $\sigma_{A'}$ and $\omega_{A_1}$, respectively. By
Uhlmann's theorem~\cite{Uhlmann1976transition}, there exists an isometry
$V_{A_2BB' \rar ABB_1}$ such that
\begin{equation}
\label{equ:isometry}
\begin{split}
 &P\big(\tr_{A_2} \big[U_{AA'\rar A_1A_2}(\rho_{RA} \ox \sigma_{A'})
           U_{AA'\rar A_1A_2}^*\big], \rho_R \ox \omega_{A_1} \big)   \\
=&P\big(V_{A_2BB' \rightarrow ABB_1}U_{AA'\rar A_1A_2}(\rho_{RAB} \ox
   \sigma_{A'B'})U_{AA'\rar A_1A_2}^*V_{A_2BB' \rar ABB_1}^*, \rho_{RAB}
   \ox \omega_{A_1B_1}\big).
\end{split}
\end{equation}
We construct a merging protocol $\mc{M}_1$ which consists of using
shared entanglement $\sigma_{A'B'}$, Alice applying $U_{AA'\rar A_1A_2}$
and sending $A_2$ to Bob, and Bob applying isometry $V_{A_2BB' \rar ABB_1}$.
The cost of quantum communication is $\log |A_2| \leq k$. We have
\begin{equation}
  \label{eq:M1}
\mc{M}_1(\rho_{RAB})=\tr_{A_1B_1}\big[V_{A_2BB' \rar ABB_1}
    U_{AA'\rar A_1A_2}(\rho_{RAB} \ox \sigma_{A'B'})U_{AA'\rar A_1A_2}^*
    V_{A_2BB' \rar ABB_1}^*\big].
\end{equation}
By the monotonicity of purified distance under partial trace, we get
from Eqs.~\eqref{equ:isometry} and \eqref{eq:M1} that
\begin{equation}
  \label{eq:mleqdec}
     P\big(\mc{M}_1(\rho_{RAB}), \rho_{RAB}\big)
\leq P\big(\tr_{A_2}\big[U_{AA'\rar A_1A_2}(\rho_{RA} \ox \sigma_{A'})
       U_{AA'\rar A_1A_2}^*\big], \rho_R \ox \omega_{A_1} \big)
\end{equation}
Eq.~\eqref{eq:mleqdec} together with the definitions of
$P_{A \Rightarrow B}^{\text{merg}}$ and $P_{R:A}^{\text{dec}}$ implies
\begin{equation}
\label{equ:mergeleqdec}
P_{A \Rightarrow B}^{\text{merg}}(\rho_{RAB}, k)
\leq P_{R:A}^{\text{dec}}(\rho_{RA}, k).
\end{equation}

Now, we turn to the proof of the other direction. Let $\mc{M}_1$ be a
merging protocol via quantum communication of not more than $k$ qubits.
We write $\mc{M}_1$ as
\[
\mc{M}_1(\cdot)=
\tr_{A_1B_1} \big[V_{A_2BB' \rar ABB_1}U_{AA' \rar A_1A_2}\big((\cdot)
\ox \phi_{A'B'}\big)U_{AA' \rar A_1A_2}^*V_{A_2BB' \rar ABB_1}^*\big],
\]
where $\phi_{A'B'}$ is the shared entanglement, $U_{AA' \rar A_1A_2}$
is performed by Alice, $V_{A_2BB' \rar ABB_1}$ is performed by Bob, and
$A_2$ with $\log |A_2| \leq k$ is sent from Alice to Bob. Using Uhlmann's
theorem again, we know that there exists a pure state $\varphi_{A_1B_1}$
such that
\begin{equation}
\begin{split}
 &P\big(\mc{M}_1(\rho_{RAB}), \rho_{RAB}\big) \\
=&P\big(V_{A_2BB' \rar ABB_1}U_{AA' \rar A_1A_2}\big(\rho_{RAB} \ox
   \phi_{A'B'}\big)U_{AA' \rar A_1A_2}^*V_{A_2BB' \rar ABB_1}^*,
   \rho_{RAB} \ox \varphi_{A_1B_1}\big).
\end{split}
\end{equation}
Because the purified distance decreases under the action of partial trace,
we get
\begin{equation}
\label{equ:mergeuhm}
     P(\mc{M}_1(\rho_{RAB}), \rho_{RAB})
\geq P\big(\tr_{A_2}\big[U_{AA' \rar A_1A_2}\big(\rho_{RA} \ox \phi_{A'}\big)
       U_{AA' \rar A_1A_2}^*\big], \rho_R  \ox \varphi_{A_1}\big).
\end{equation}
Since $\tr_{A_2}\big[U_{AA' \rar A_1A_2}\big(\rho_{RA} \ox \phi_{A'}\big)
U_{AA' \rar A_1A_2}^*\big]$ in the right hand side of Eq.~\eqref{equ:mergeuhm}
describes a decoupling scheme via removing $\log |A_2| \leq k$ qubits,
by the definitions of $P_{A \Rightarrow B}^{\text{merg}}$ and
$P_{R:A}^{\text{dec}}$, we conclude that
\begin{equation}
\label{equ:decleqmerge}
     P_{A \Rightarrow B}^{\text{merg}}(\rho_{RAB}, k)
\geq P_{R:A}^{\text{dec}}(\rho_{RA},k).
\end{equation}
\end{Proof}

\section{Discussion}
  \label{sec:discussion}
In this work, we have characterized the reliability function of catalytic
quantum information decoupling using the sandwiched R{\'e}nyi divergence,
and have applied it to quantum state merging. Our results add new operational
meanings to the sandwiched R{\'e}nyi information quantities.

The availability of a catalytic system is crucial in our derivations.
This is not only for the characterization of the reliability function
(Theorem~\ref{thm:reliability-dec}), but also for the interplay between
different decoupling operations (Proposition~\ref{prop:dec-relation})
and the interplay between decoupling and state merging
(Proposition~\ref{prop:merging-dec}). The introduction of the catalyst
in decoupling~\cite{MBDRC2017catalytic,ADJ2017quantum} helps us solve
the problem.

Before ending the paper, we list a few open problems.
\begin{enumerate}
  \item Firstly, to characterize the reliability function of quantum information
  decoupling without involving any catalytic system is an interesting
  open problem. It is known that when only an asymptotically vanishing
  error is concerned, the best rate of cost is the same no matter a
  catalytic system is allowed or not~\cite{HOW2005partial,HOW2007quantum,
  ADHW2009mother,MBDRC2017catalytic,ADJ2017quantum}. However, for the
  reliability function we do not know whether catalysis makes a difference.

  \item Another open problem is to derive the reliability function when the rate
  of cost is above the critical value. The upper and lower bounds obtained in
  the present work do not match in the high-rate case. However, note that the
  existence of critical points in the study of reliability functions is a common
  phenomenon, where at the unsolved side the problem becomes more of a
  combinatorial feature and is hard to tackle~\cite{Gallager1968information,
  Dalai2013lower,HayashiTan2016equivocations,LYH2023tight}.

  \item Moreover, it is interesting to consider the reliability function when
  the performance is measured by the trace distance. Our method works for the
  purified distance, and equivalently, for the fidelity. But for trace distance,
  quite different techniques may be needed. It is worth mentioning that, as
  was pointed out in~\cite{Dupuis2021privacy}, for trace distance there is no
  obvious way to transfer the reliability function from quantum information
  decoupling to other quantum tasks (such as quantum state merging). This is
  in contrast to what we have done in Proposition~\ref{prop:merging-dec} and Corollary~\ref{cor:reliability-merging} which make the transference by
  employing the Uhlmann's theorem.

  \item Finally, we hope that our work will be stimulating and find more
  applications, in deriving the reliability functions for more quantum
  information tasks. Indeed, in a later work~\cite{LiYao2021reliable} we
  have applied the present results to quantum channel simulation. Can we
  extend this line of research to entanglement-assisted communication of noisy
  channels~\cite{BSST2002entanglement}, and to the treatment of the amount
  of comsumed/generated entanglement in quantum tasks? These problems are
  closely related to the size of the remaining system in quantum information
  decoupling.
\end{enumerate}

\bigskip
\appendix
\section{Boundedness of the Upper Bound $E_u(r)$}
\begin{proposition}
  \label{prop:bounded}
Let $\rho_{RA}\in\mc{S}(RA)$ be given. The function $E_u(r)=\sup\limits
_{s\geq0}\big\{s\big(r-\frac{1}{2}I_{1+s}(R:A)_\rho\big)\big\}$ is bounded
in the interval $(-\infty, \frac{1}{2}I_{\rm max}(R:A)_\rho)$.
\end{proposition}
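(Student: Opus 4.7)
The plan is to show that $E_u$ is uniformly bounded on $(-\infty,\tfrac{1}{2}I_{\rm max}(R:A)_\rho)$, as suggested by Remark~\ref{rem:mainresult}(c). Since $E_u(r)$ is a supremum of linear-in-$r$ functions with non-negative slopes $s$, it is non-decreasing in $r$, so it suffices to control the limit $r\nearrow\tfrac{1}{2}I_{\rm max}(R:A)_\rho$. Writing $I_\infty:=I_{\rm max}(R:A)_\rho$ and using the identity
\[
s\bigl(r-\tfrac{1}{2}I_{1+s}(R:A)_\rho\bigr)=s\bigl(r-\tfrac{1}{2}I_\infty\bigr)+\tfrac{s}{2}\bigl(I_\infty-I_{1+s}(R:A)_\rho\bigr),
\]
the first summand is non-positive for $r\leq\tfrac{1}{2}I_\infty$ and $s\geq0$, so the task reduces to proving that $\sup_{s\geq0}s\bigl(I_\infty-I_{1+s}(R:A)_\rho\bigr)$ is finite.

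Next I pass to a divergence with a fixed second argument. Pick a minimizer $\sigma_*\in\mc{S}(A)$ in the definition of $I_\infty$ (which exists by compactness and lower semi-continuity) and set $\tau:=\rho_R\ox\sigma_*$; then $I_\infty=D_{\rm max}(\rho_{RA}\|\tau)$, $\supp(\rho_{RA})\subseteq\supp(\tau)$, and $\tau$ has a strictly positive smallest nonzero eigenvalue $\lambda_{\min}$. Because $\sigma_*$ is only a feasible choice in the variational formula defining $I_{1+s}$,
\[
I_\infty-I_{1+s}(R:A)_\rho\leq D_{\rm max}(\rho_{RA}\|\tau)-D_{1+s}(\rho_{RA}\|\tau),
\]
so the problem is further reduced to bounding $s\bigl(D_{\rm max}(\rho_{RA}\|\tau)-D_{1+s}(\rho_{RA}\|\tau)\bigr)$ uniformly in $s\geq0$ for this fixed pair.

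The main technical step is a trial-vector lower bound on the sandwiched quantity. Set $X:=\tau^{-1/2}\rho_{RA}\tau^{-1/2}$ on $\supp(\tau)$, so that $\|X\|_\infty=2^{D_{\rm max}(\rho_{RA}\|\tau)}$ and
\[
D_{1+s}(\rho_{RA}\|\tau)=\tfrac{1}{s}\log\tr\bigl(\tau^{1/(2(1+s))}X\tau^{1/(2(1+s))}\bigr)^{1+s}.
\]
Take a unit eigenvector $\ket{e}\in\supp(\tau)$ of $X$ for its largest eigenvalue and use the normalized trial vector $\ket{\psi}\propto\tau^{-1/(2(1+s))}\ket{e}$ in the estimate $\tr(Y^{1+s})\geq\lambda_{\max}(Y)^{1+s}$; combined with the operator inequality $\tau^{-1/(1+s)}\leq\lambda_{\min}^{-1/(1+s)}$ on $\supp(\tau)$, a short calculation yields $\tr\bigl(\tau^{1/(2(1+s))}X\tau^{1/(2(1+s))}\bigr)^{1+s}\geq\lambda_{\min}\|X\|_\infty^{1+s}$. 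Taking $\tfrac{1}{s}\log$ and rearranging gives
\[
s\bigl(D_{\rm max}(\rho_{RA}\|\tau)-D_{1+s}(\rho_{RA}\|\tau)\bigr)\leq-\log\lambda_{\min}-D_{\rm max}(\rho_{RA}\|\tau),
\]
a finite constant independent of $s$ and $r$. Chaining the reductions completes the proof.

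The main obstacle is establishing the $O(1/s)$ rate for the convergence $D_{1+s}(\rho_{RA}\|\tau)\to D_{\rm max}(\rho_{RA}\|\tau)$. A pinching-based reduction of the non-commuting case to the classical one would introduce an $O(\log v(\tau))$ correction that becomes unbounded upon multiplication by $s$; the trial-vector approach above sidesteps this issue and directly produces the constant bound in the general case, with the bound depending on $\tau=\rho_R\ox\sigma_*$ only through $\lambda_{\min}$ and $D_{\rm max}(\rho_{RA}\|\tau)$.
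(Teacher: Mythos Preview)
Your reduction step contains a sign error that breaks the argument. You choose $\sigma_*$ to be a minimizer for $I_\infty=I_{\rm max}(R:A)_\rho$ and set $\tau=\rho_R\ox\sigma_*$, so indeed $I_\infty=D_{\rm max}(\rho_{RA}\|\tau)$. But since $\sigma_*$ is merely a feasible point in the \emph{minimum} defining $I_{1+s}(R:A)_\rho$, you get $I_{1+s}(R:A)_\rho\leq D_{1+s}(\rho_{RA}\|\tau)$, hence
\[
I_\infty-I_{1+s}(R:A)_\rho \;\geq\; D_{\rm max}(\rho_{RA}\|\tau)-D_{1+s}(\rho_{RA}\|\tau),
\]
which is the reverse of the inequality you need. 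Your trial-vector estimate then only yields a lower bound on $s\bigl(I_\infty-I_{1+s}(R:A)_\rho\bigr)$, not the required upper bound.

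The natural repair is to replace $\sigma_*$ by the minimizer $\sigma_s$ of $I_{1+s}$, so that $I_{1+s}=D_{1+s}(\rho_{RA}\|\tau_s)$ with $\tau_s=\rho_R\ox\sigma_s$ and $I_\infty\leq D_{\rm max}(\rho_{RA}\|\tau_s)$; then your (correct) trial-vector bound gives
\[
s\bigl(I_\infty-I_{1+s}(R:A)_\rho\bigr)\leq -\log\lambda_{\min}(\tau_s)-D_{\rm max}(\rho_{RA}\|\tau_s).
\]
But now the right-hand side depends on $s$ through $\lambda_{\min}(\sigma_s)$, and you would still need to show that the smallest nonzero eigenvalue of the $s$-dependent minimizer is bounded away from zero uniformly in $s$. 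That is not obvious and is not supplied.

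For comparison, the paper avoids this difficulty entirely: it identifies $E_u(r)$ with the asymptotic exponent of $p_n$ in Proposition~\ref{prop:usym-Imax}, and then lower bounds $p_n$ by exhibiting a single eigenvector of the pinched state whose contribution is at least $\lambda_{\min}(\rho_R)^n\cdot g_{n,|A|}^{-1}|A|^{-n}$ up to polynomial factors. This yields the explicit finite bound $E_u(r)\leq\tfrac{1}{2}\log\bigl(|A|/\lambda_{\min}(\rho_R)\bigr)-\tfrac{1}{2}I_{\rm max}(R:A)_\rho$ for all $r<\tfrac{1}{2}I_{\rm max}(R:A)_\rho$, without ever needing uniform control on the $I_{1+s}$-minimizer.
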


\begin{Proof}
We show that there is a constant $C$, such that for any $\epsilon>0$,
$E_u\big(\frac{1}{2}I_{\rm max}(R:A)_\rho-\epsilon\big)\leq C$.
Proposition~\ref{prop:usym-Imax} establishes that
\beq\label{eq:bounded-1}
E_u(r)=\lim_{n\rar\infty}\frac{-1}{2n}\log
       \tr\left[\mc{E}^n\big(\rho_{AB}^{\ox n}\big)
       \Big\{\mc{E}^n\big(\rho_{AB}^{\ox n}\big)\geq 2^{n\cdot2r}
       \rho_A^{\ox n}\ox\omega^{(n)}_{B^n}\Big\} \right],
\eeq
where $\omega^{(n)}_{B^n}$ is the universal symmetric state of
Lemma~\ref{lemma:unisym} and $\mc{E}^n$ is the pinching map associated
with $\rho_A^{\ox n}\ox\omega^{(n)}_{B^n}$. Setting $M_A=\rho_A$ and
letting $s\rar\infty$ in Lemma~\ref{lem:additive-Is}, we get
\[
\frac{1}{n}D_{\rm max}\Big(\mc{E}^n(\rho_{AB}^{\ox n})\big\|\rho_A^{\ox n}
\ox\omega^{(n)}_{B^n}\Big) = I_{\rm max}(A:B)_\rho+O\Big(\frac{\log n}{n}\Big).
\]
This implies that, for arbitrary $\epsilon>0$, there exists a common
eigenvector $\ket{\varphi_n}$ of $\mc{E}^n(\rho_{AB}^{\ox n})$ and
$\rho_A^{\ox n}\ox\omega^{(n)}_{B^n}$, such that for $n$ big enough,
\beq\label{eq:bounded-2}
\Big\{\mc{E}^n\big(\rho_{AB}^{\ox n}\big)\geq 2^{n(I_{\rm max}(A:B)_\rho
-2\epsilon)}\rho_A^{\ox n}\ox\omega^{(n)}_{B^n}\Big\}
\geq \proj{\varphi_n}
\eeq
and
\beq\begin{split}\label{eq:bounded-3}
\bra{\varphi_n}\mc{E}^n\big(\rho_{AB}^{\ox n}\big)\ket{\varphi_n}
&\geq 2^{n(I_{\rm max}(A:B)_\rho-2\epsilon)}
 \bra{\varphi_n}\rho_A^{\ox n}\ox\omega^{(n)}_{B^n}\ket{\varphi_n} \\
&\geq 2^{n(I_{\rm max}(A:B)_\rho-2\epsilon)}
      \lambda_{\rm min}\big(\rho_A^{\ox n}\ox\omega^{(n)}_{B^n}\big)\\
&\geq 2^{n(I_{\rm max}(A:B)_\rho-2\epsilon)}
      \big(\lambda_{\rm min}(\rho_A)\big)^n\frac{1}{g_{n,|B|}\,|B|^n}.
\end{split}\eeq
In Eq.~\eqref{eq:bounded-3}, we have used the fact
$g_{n,|B|}\omega^{(n)}_{B^n}\geq(\frac{\1_B}{|B|})^{\ox n}$ and
$\lambda_{\rm min}(X)$ denotes the minimal eigenvalue of $X$. Combining
Eq.~\eqref{eq:bounded-1}, Eq.~\eqref{eq:bounded-2} and
Eq.~\eqref{eq:bounded-3}, we obtain
\beq\begin{split}
E_u\big(\frac{1}{2}I_{\rm max}(A:B)_\rho-\epsilon\big)
&\leq \lim_{n\rar\infty}\frac{-1}{2n}\log
     \bra{\varphi_n}\mc{E}^n\big(\rho_{AB}^{\ox n}\big)\ket{\varphi_n} \\
&\leq \frac{1}{2}\log\frac{|B|}{\lambda_{\rm min}(\rho_A)}
      -\frac{1}{2}I_{\rm max}(A:B)_\rho+\epsilon.
\end{split}\eeq
At last, since $E_u(r)$ is monotonically increasing, we can choose
$C=\frac{1}{2}\log\frac{|B|}{\lambda_{\rm min}(\rho_A)}
-\frac{1}{2}I_{\rm max}(A:B)_\rho$.
\end{Proof}

\acknowledgments
The authors are grateful to Mark Wilde for bringing to their attention References~\cite{WWW2019quantifying,KhatriWilde2020principles,Sharma2015random},
as well as to Masahito Hayashi for pointing out \cite{Hayashi2016security}
and to an anonymous referee for pointing out \cite{MBGYA2019a}.
They further thank Nilanjana Datta, Masahito Hayashi, Marco Tomamichel, Mark
Wilde, Xiao Xiong and Dong Yang for comments or discussions on related topics.
The research of KL was supported by the National Natural Science Foundation of
China (Nos. 61871156, 12031004), and the research of YY was supported by the
National Natural Science Foundation of China (Nos. 61871156, 12071099).

\end{document}